\documentclass[12pt]{article}
\usepackage{amsmath}
\usepackage{graphicx,psfrag,epsf}
\usepackage{enumerate}
\usepackage{natbib}
\usepackage{url} 
\usepackage{graphicx}
\usepackage{amsthm}
\usepackage{amsmath}
\usepackage{natbib}
\usepackage[colorlinks,citecolor=blue,urlcolor=blue,filecolor=blue,backref=page]{hyperref}
\usepackage{subcaption}
\usepackage{amssymb}
\usepackage{amsthm}
\usepackage{empheq}
\usepackage{enumitem}
\usepackage{algorithm}

\numberwithin{equation}{section}
\theoremstyle{plain}

\newtheorem{lemma}{Lemma}
\newtheorem{theorem}{Theorem}
\newtheorem{corollary}{Corollary}[theorem]

\DeclareMathOperator*{\argmax}{arg\,max}

\DeclareMathOperator*{\mle}{\hat{\theta}_0}
\DeclareMathOperator*{\prob}{\overset{P}{\longrightarrow}}
\DeclareMathOperator*{\ltrue}{\text{L}_{n}(\theta \mid \textit{h}_0(\textit{Z}),\phi_0)}
\DeclareMathOperator*{\lalltrue}{\text{L}_{n}(\theta_0 \mid \textit{h}_0(\textit{Z}),\phi_0)}
\DeclareMathOperator*{\lest}{\text{L}_{n}(\theta \mid \textit{h}(\textit{Z}),\phi)}
\DeclareMathOperator*{\lmle}{\text{L}_{n}(\hat{\theta}_0 \mid \textit{h}_0(\textit{Z}),\phi_0)}
\DeclareMathOperator{\logit}{\text{logit}}
\newcommand{\blind}{1}

\addtolength{\oddsidemargin}{-.5in}%
\addtolength{\evensidemargin}{-.5in}%
\addtolength{\textwidth}{1in}%
\addtolength{\textheight}{-.3in}%
\addtolength{\topmargin}{-.8in}%

\begin{document}

\def\spacingset#1{\renewcommand{\baselinestretch}%
{#1}\small\normalsize} \spacingset{1}


\if1\blind
{
  \title{\bf Valid Bayesian Inference based on Variance Weighted Projection for
High-Dimensional Logistic Regression with Binary Covariates}
  \author{Abhishek Ojha \hspace{.2cm}\\
    Department of Statistics, University of Illinois, Urbana-Champaign\\
    and \\
    Naveen N. Narisetty \\
    Department of Statistics, University of Illinois, Urbana-Champaign}
  \maketitle
} \fi

\if0\blind
{
  \bigskip
  \bigskip
  \bigskip
  \begin{center}
    {\LARGE\bf Title}
\end{center}
  \medskip
} \fi

\bigskip
\begin{abstract}


We address the challenge of conducting inference for a categorical treatment effect related to a binary outcome variable while taking into account high-dimensional baseline covariates. The conventional technique used to establish orthogonality for the treatment effect from nuisance variables in continuous cases is inapplicable in the context of binary treatment.
To overcome this obstacle, an orthogonal score tailored specifically to this scenario is formulated which is based on a variance-weighted projection. Additionally, a novel Bayesian framework is proposed to facilitate valid inference for the desired low-dimensional parameter within the complex framework of high-dimensional logistic regression. We provide uniform convergence results, affirming the validity of credible intervals derived from the posterior distribution. The effectiveness of the proposed method is demonstrated through comprehensive simulation studies and real data analysis.
\end{abstract}

\noindent%
{\it Keywords:}  3 to 6 keywords, that do not appear in the title
\vfill

\newpage
\spacingset{1.45} 
\section{Introduction}

\subsection{A Motivating Example}\label{motivating_eg}

Consider the case of Chronic Kidney Disease (CKD), where kidneys cannot efficiently filter blood, leading to the retention of excess fluid and waste. This condition can result in various complications such as anemia, depression, heart disease, and stroke. Notably, CKD affects more than one in seven individuals in the United States \citep{ckd}. The occurrence of CKD is influenced by a multitude of factors, including diabetes, high blood pressure, cardiovascular diseases, high cholesterol levels, exposure to toxins, age, demographics, family history, smoking habits, autoimmune diseases, and more. Particularly, diabetes is highly prevalent in the US, with over 37 million diagnosed cases and over 96 million individuals classified as prediabetic \citep{diabetes_report}.
Extensive research has established the common occurrence of CKD among individuals suffering from diabetes \citep{dm2,dm1,dm3}. Consequently, it is imperative to quantify the association between diabetes and CKD. To perform this task accurately, it is crucial to control for all the other factors. 

We examine the Chronic Kidney Disease (CKD) data sourced from the UCI repository \citep{ckd_data}. This dataset comprises records from nearly 400 patients, among whom 246 have been diagnosed with CKD. Each patient's information includes variables such as the presence of diabetes, blood pressure readings, age, presence of heart disease, anemia, blood urea levels, and more. Utilizing this dataset, one can attempt to estimate the relationship between diabetes and CKD while accounting for various other factors.

However, it is important to note that the estimate derived from this dataset does not directly represent the population-level association between diabetes and CKD. This dataset constitutes only a small fraction of individuals in the US, randomly selected. Consequently, the estimate lacks broad applicability to the entire population. Therefore, constructing valid confidence intervals for the association between diabetes and CKD becomes crucial. These intervals offer a statistically rigorous way to understand the true relationship within the larger population, providing more reliable insights for practitioners. 

One can employ a logistic regression model to investigate the relationship between the occurrence of CKD and the presence of diabetes mellitus, and other potential factors. 
The logistic model is well-studied and commonly employed in the case of binary regression \citep{kleinbaum2002logistic,collett2002modelling,agresti2012categorical,hosmer2013applied}. It provides a direct interpretation in terms of the logarithm of the odds in favor of success, which proves valuable in epidemiological and pharmacological studies. This model enables practitioners to calculate the importance of features conveniently. Additionally, it is computationally efficient, especially in high-dimensional scenarios, as regularization can be easily applied with minimal tuning of hyper-parameters. Motivated by these advantages, we explore the challenge of conducting Bayesian inference for a low-dimensional parameter of interest within a high-dimensional logistic regression model. We choose Bayesian inference to integrate any existing prior knowledge about diabetes within a specific population.
We revisit this data analysis in Section~\ref{ckd_analysis}. 

\subsection{Problem Formulation}

 Our data consist of a sample of $n$ observations containing binary responses denoted by $Y_i, i = 1,\cdots,n,$ and a binary covariate of interest denoted by $X_i, i = 1,\cdots,n,$ and $d$-dimensional nuisance variables denoted by $Z_i, i = 1,\cdots,n.$ In the motivating example presented in Section~\ref{motivating_eg}, for each patient $i$, the binary response ($Y_i$) is the presence ($1$) or absence ($0$) of CKD. The covariate of interest ($X_i$) is the presence ($1$) or absence ($0$) of diabetes mellitus. The nuisance variables ($Z_i$) are all the other factors in the data such as age, blood pressure, anemia, coronary heart disease, and more. We assume that each of the binary response variables is independently generated from the following logistic regression model: 
\begin{equation}\label{logistic}
    P[Y_i=1 \mid X_i,Z_i, \theta, \beta] = \frac{\exp(X_i \theta + Z_i^T\beta)}{1+\exp(X_i \theta + Z_i^T\beta)},
\end{equation}
for $i= 1, \dots, n$, where $\theta$ and $\beta$ are random parameters having dimensions one and $d$, respectively. 

We consider the problem of providing valid inference for the low dimensional parameter $\theta$ in the above high dimensional logistic regression model where the dimension $d$ of the nuisance parameter $\beta$ can be much larger than the sample size.  Our goal is to construct a posterior distribution for the parameter $\theta$ given the data $(Y, X, Z)$ that can be used for obtaining interval estimators having valid frequentist coverage. We would like the validity of our inferential results to hold for all values of $\theta$. 

\subsection{Connections to Causal Inference}

The parameter of interest, $\theta$ in \eqref{logistic} has a casual interpretation as well. We can explicitly write down the expression for $\theta$ from Equation~\eqref{logistic} as:

\begin{align}\label{odds-ratio}
    \theta &= \log \bigg( \dfrac{P[Y_i = 1 \mid X_i = 1, Z_i]/P[Y_i = 0 \mid X_i = 1, Z_i]}{P[Y_i = 1 \mid X_i = 0, Z_i]/P[Y_i = 1 \mid X_i = 0, Z_i]}  \bigg) \nonumber \\
    &= \log \bigg( \dfrac{\text{odds}(Y_i \mid X_i = 1, Z_i)}{\text{odds}(Y_i \mid X_i = 0, Z_i)}  \bigg).
\end{align}
Based on \eqref{odds-ratio}, the parameter of interest $\theta$ can be viewed as the log-odds ratio in this setting where the covariate of interest $X_i$ is binary (or dichotomous). 

In the context of causal inference for observational studies \citep{cochran1965planning,rosenbaum2002overt}, we can refer to $X_i$ as the treatment variable for example an indicator for the administration of a certain drug or a medical procedure, and the outcomes $Y_i$ to be the binary health status of the patient. The high-dimensional nuisance covariates $Z_i$ can be treated as the confounding variables. Furthermore, consider the potential outcomes framework from the causal inference literature \citep{rubin1974estimating,imbens2015causal}. Under this framework, $Y_i(1)$ and $Y_i(0)$ represent the potential outcomes under treatment ($X_i =1$) and no treatment ($X_i = 0$), respectively.
Consider the following logistic models for the potential outcomes:
\begin{equation}\label{potential_outcome_model}
    \log \Bigg( \frac{P(Y_i(0) = 1 \mid Z_i)}{P(Y_i(0) = 0 \mid Z_i)} \Bigg) = Z_i^T\beta \text{ and } 
    \log \Bigg( \frac{P(Y_i(1) = 1 \mid Z_i)}{P(Y_i(1) = 0 \mid Z_i)} \Bigg) = \theta + Z_i^T\beta.
\end{equation}
Based on the model assumption in \eqref{potential_outcome_model}, $\theta$ is a causal estimand often referred to as the conditional log-odds ratio in the casual inference literature pertaining to the observational studies \citep{causal_iv_clor,zhu2019estimating}. We can express $\theta$ as:
\begin{align}\label{causal_model_or}
    \theta &= \log \bigg( \dfrac{P[Y_i(1) = 1 \mid Z_i]/P[Y_i(1) = 0 \mid Z_i]}{P[Y_i(0) = 1 \mid Z_i]/P[Y_i(0) = 1 \mid Z_i]}  \bigg) \nonumber \\
    &= \log \bigg( \dfrac{\text{odds}(Y_i(1) \mid Z_i)}{\text{odds}(Y_i(0) \mid Z_i)}  \bigg).
\end{align}
Moreover, the standard assumptions of consistency (each subject is either treated or not) and strong ignorability (no unmeasured confounders) from the causal literature \citep{zhu2019estimating} imply that the model in \eqref{potential_outcome_model} can be expressed as a logistic model that resembles \eqref{logistic}:
\begin{equation}\label{causal_model}
    \log \Bigg( \frac{P(Y_i = 1 \mid X_i, Z_i)}{P(Y_i = 0 \mid X_i, Z_i)} \Bigg) = X_i  \theta + Z_i^T\beta.
\end{equation} 
Therefore, our goal of obtaining valid interval estimators for $\theta$ in \eqref{logistic} can lead to achieving a valid inference for the conditional log-odds ratio under the logistic model for the potential framework given in \eqref{potential_outcome_model}. This causal estimand can be helpful in deciding whether a particular treatment or drug is helpful for patients with particular characteristics \citep{conditional_odds_ratio}.

\subsection{Literature Review}


In this paper, we consider the settings where the number of nuisance covariates $d$ can increase at a sub-exponential rate w.r.t. the sample size $n$. The model represented by Equation~\ref {logistic} is a high-dimensional logistic regression model that belongs to a bigger class of high-dimensional Generalized Linear Models (GLMs). Within the context of high-dimensional settings, the model parameters are subjected to sparsity assumptions to ensure the validity of estimation. In the frequentist paradigm, various methods have been proposed for GLMs that use an extra regularization penalty term to impose shrinkage and sparsity and therefore provide estimates and inference for the model parameters \citep{geerGLM, friedman2010regularization, breheny2011coordinate, huang2012estimation}. Approaches have been developed for logistic regression in particular \citep{bunea2008honest, bach2010self, kwemou2016non}. In the Bayesian paradigm, various methods have been proposed for high-dimensional estimation and selection using priors that induce shrinkage or sparsity \citep{george1993variable, israo,
park2008Bayesian, liang2008mixtures,
carvalho2009handling,  johnson2012Bayesian, bondell2012consistent,armagan2013generalized,     rovckova2014emvs, basad, bhattacharya2015dirichlet}.  Bayesian approaches have been developed for GLMs too and logistic regression in particular \citep{o2004bayesian,genkin2007large, ghosh2018use, skinny}.

Therefore, one can treat the parameter of interest $\theta$ as one of the parameters in a high-dimensional model and use the aforementioned methods to come up with an estimate and provide inferential results for $\theta$. However, in high-dimensional settings, the bias incurred by high-dimensional methods is $O(\sqrt{\log(d \vee n )/ n})$ which is much larger than the desirable rate of $O(n^{-1/2})$ for the low-dimensional settings. Moreover, these methods require the minimum signal strength of the nonzero coefficients to be sufficiently bounded away from zero to achieve theoretical consistency. While this assumption holds merit for variable selection purposes, it may not be suitable for inferential tasks. The impact of variable selection on inference has been well studied in the literature \citep{potscher1991effects,kabaila1995effect}. Several examples have been investigated where if the minimum signal strength condition is not met, the estimates are no longer $\sqrt{n}$-consistent, they fail to attain asymptotic normality, and therefore the inferential results breakdown \citep{leeb2005model, leeb2008sparse, potscher2009distribution,potscher2009confidence}.  

In the past decade, several de-biasing methods that do not rely on any minimum signal strength condition have been proposed. These methods aim to achieve $\sqrt{n}$-consistency and offer valid inference for parameters in high-dimensional models \citep{taylor2015statistical, panigrahi2018scalable, panigrahi2021integrative, van2014asymptotically, javanmard2014confidence, zhang2014confidence, belloni2010lasso, belloni2012sparse, berk2013valid, logistic_dml, belloni2014inference, dml, uposi, wang2020debiased}. Recently, a Bayesian method has been introduced, which possesses valid inference properties in the context of linear models and serves as the methodological motivation for the present paper \citep{qbayesian}.

Our specific aim is to develop a Bayesian methodology for inferring a low-dimensional parameter related to a categorical variable within a high-dimensional logistic model, a topic not previously explored in the literature. In the realm of linear models, \citet{qbayesian} devised a Bayesian approach grounded in conditional posteriors, demonstrating optimal convergence rates for the posterior distribution ($O(n^{-1/2}$) rate) and the validity of credible intervals. However, this method heavily depends on linearity and cannot be directly extended to logistic regression. In logistic regression, notable contributions have emerged from the frequentist paradigm. \citet{logistic_dml} utilized instrumental variables and Neyman orthogonality \citep{neyman1959optimal,neyman1979c} to create a double selection algorithm for inferring low-dimensional parameters in high-dimensional logistic models. A Bayesian method with valid inference in a high-dimensional logistic regression setting which combines the idea of Neyman orthogonality and conditional posteriors has been proposed \citep{ojha2023conditional}. However, these approaches rely significantly on the continuous nature of the covariate of interest, making them less efficient for the case when the covariate of interest is binary or categorical. In the realm of causal inference, Bayesian model averaging methods have been developed \citep{torrens2021confounder,antonelli2022causal}. Yet, these works focus on the average treatment effect, distinct from our parameter of interest $\theta$. A recent proposal by \citet{zhu2019estimating} aimed to estimate the conditional log-odds ratio \eqref{causal_model_or}, imposing stringent assumptions on nuisance covariates and their dependence structure with the covariate of interest. Additionally, several projection-based strategies within logistic regression have been introduced to de-bias estimates and construct valid confidence intervals \citep{shi_recursive_scores,ht,other}. However, all these recent approaches for inference in high-dimensional logistic regression operate within the frequentist framework. In contrast, our proposed approach strives to establish a Bayesian paradigm with valid frequentist properties.

\subsection{Our Contributions}

We employ Neyman's orthogonality principle \citep{neyman1959optimal,neyman1979c} to establish valid inference for the parameter of interest, denoted as $\theta$, which depends on the construction of an orthogonal score.
For logistic regression, \citep{logistic_dml} proposed a double selection algorithm that leverages the orthogonality principle to provide valid inference for low-dimensional parameters in high-dimensional settings. In their double selection algorithm, \citet{logistic_dml} perform a weighted regression of the covariate of interest, denoted as $X$, on the nuisance covariates, denoted as $Z$, to achieve orthogonality. The strategy discussed in \citet{ojha2023conditional}, based on the conditional posterior, is founded upon a weighted regression technique of similar nature. The moment conditions needed for efficient utilization of such a weighted regression and the validity of the theoretical results are contingent on the continuous nature of the covariate of interest, $X$.
However, when the covariate of interest is binary, it can be demonstrated that these moment conditions no longer hold. Consequently, the fundamental concept of weighted regression of $X$ on $Z$ cannot be utilized to achieve orthogonality in the case of binary $X$.

Our contributions are twofold: first, we derive a novel orthogonal score for the case of the binary covariate of interest, $X$. The construction of the orthogonal score is based on a variance-weighted projection. Then, we utilize this proposed score to formulate a Bayesian method that provides an estimate and constructs credible intervals for the parameter of interest, $\theta$. Moreover, we demonstrate that the constructed posterior for the parameter of interest, based on the proposed score, achieves asymptotic normality (uniformly at a $n^{-1/2}$ rate) with valid inference properties. The proposed Bayesian method can incorporate prior information about the parameter of interest.

In contrast to existing projection-based methods \citep{ht, other}, which rely on a single nuisance estimate for the de-biasing strategy, our proposed Bayesian method utilizes multiple samples of the nuisance parameters. Consequently, it captures multiple instances of orthogonality. This behavior resembles the averaging nature of integrated likelihood methods, which have been shown to quantify uncertainty in a superior way when removing nuisance parameters from a model \citep{berger1999integrated,severini2007integrated}. This intuition is supported by the results of our simulation studies and real data analysis, where the proposed method exhibits competitive performance compared to existing methods.

In Section~\ref{conditional_posterior}, we discuss the concept of conditional posteriors, develop a novel orthogonal score for $\theta$, and introduce the proposed Bayesian methodology.
In Section~\ref{theoretical_results}, we discuss the regularity assumptions and outline the theoretical properties of our estimator.
In Section~\ref{ordinal_extension}, we extend the proposed method to accommodate categorical covariates of interest, $X$.
In Section~\ref{simulation}, we conduct simulation studies to assess the performance of the proposed method, comparing it with existing Bayesian and frequentist methods. Additionally, we investigate the impact of mothers' smoking habits on infants' health indicators through semi-synthesized data.
In Section~\ref{conclusion}, we provide a final conclusion.
In Appendix~\ref{sampler}, we present the Gibbs sampler for posterior sampling.
In Appendix~\ref{proofs}, we provide proofs of all the theoretical results.

\section{Bayesian Conditional Posterior for Inference}\label{conditional_posterior}

Before we provide a detailed explanation of the proposed method, we discuss the idea of a ``conditional posterior'' as used in \citet{qbayesian}. In a standard Bayesian procedure, certain model assumptions are made that lead to a likelihood of the parameters. Given prior distributions and the likelihood for the parameters based on the observed data, we use Bayes' theorem to arrive at the posterior distribution for the model parameters.

The intuition behind the conditional posteriors, as used in the context of this paper, is slightly different from the standard Bayesian procedure outlined above. We start with a prior distribution for the model parameters and given data we directly propose a posterior distribution for the parameters. We refer to this posterior distribution as the ``conditional posterior'' that reflects our updated belief regarding the parameter based on the observed data and conditioning parameters. It is important to note that the data component of the conditional posterior may correspond to a specific working model. The proposal for the conditional posterior and the choice of the working model are driven by theoretical and practical considerations such as the attainment of valid inferential properties and the facilitation of efficient sampling. Our proposed method utilizes the Neyman orthogonality principle for the construction of our conditional posterior that yields valid inference for the parameters of interest.

\subsection{Notations}

Recall that $\theta$ is the parameter of interest and $\beta$ is the high-dimensional nuisance parameter. We use the notations $\theta_0$ and $\beta_0$ to denote the corresponding oracle quantities. We denote the diagonal covariance matrix of the binary output $Y$ (given $X,Z$) using $W$. The diagonal entries of $W$ are defined in terms of $(\theta_0,\beta_0)$ as:
\begin{equation}\label{w_true}
    W_{{i,i}} = \frac{\exp(X_i\theta_0 + Z_i^T\beta_0)}{(1+\exp(X_i\theta_0 + Z_i^T\beta_0))^2},
\end{equation}
for $i = 1,2,\dots,n$. 

We introduce functions of the nuisance covariates $\{h_0(Z_1), \dots, h_0(Z_n)\}$ as defined in \eqref{h_fn} to construct an orthogonal score \eqref{score_proposal} for the parameter of interest $\theta$. We refer to $h_0(Z_i)$ as the variance-weighted projections. The expression of $h_0(Z_i)$ requires knowledge of $P(Y_i \mid X_i, Z_i)$ and $P(X_i \mid Z_i)$ which we do not have given the observed data.  We refer to $P(X_i=1 \mid Z_i)$ as propensity scores for $i= 1, \dots, n$.

We use a sample of the parameters $(\theta, \beta)$ to obtain an estimator for $P(Y_i \mid X_i, Z_i)$. However, it is important to note that the sample of $\theta$ used in \eqref{p_y_def} is not the final sample used for inference on the parameter of interest. To emphasize this distinction, we denote this sample with an additional tilde symbol overhead, representing it as $\tilde{\theta}$. Therefore, the estimator for $P(Y_i \mid X_i, Z_i)$ is defined based on $(\tilde{\theta},\beta)$ as given in Equation~\eqref{p_y_def}. 
To obtain an estimator for the propensity scores $P(X_i=1 \mid Z_i)$, we introduce a high-dimensional parameter $\gamma$ which captures the dependence of the target covariate $X$ on the nuisance covariates $Z$ through a working logistic regression model.
Based on the estimators of $P(Y_i \mid X_i, Z_i)$ and $P(X_i=1 \mid Z_i)$, we obtain an estimator of $h_0(Z_i)$ which we denote by $h(Z_i)$ and is defined by Equation~\eqref{h_estimate}.

Based on $(\Tilde{\theta},\beta)$ and $h(Z_i)$, we define a re-parameterized nuisance quantity $\phi_i$  as $\phi_i = \Tilde{\theta}h(Z_i) + Z_i^T\beta$. Moreover, we denote the oracle values of $\phi_i$ using $\phi_{0i}$ which is defined as $\phi_{0i} = \theta_0h_0(Z_i) + Z_i^T\beta_0$. We use the notations $h_0(Z)$ and $h(Z)$ to represent the vectors $(h_0Z_1), \dots, h_0(Z_n))$ and $(h(Z_1), \dots, h(Z_n)),$ respectively. Furthermore, use the notations $\phi$ and $\phi_0$ to represent the vectors $(\phi_1,\dots, \phi_n)$ and $(\phi_{01},\dots, \phi_{0n}),$ respectively.
We use the symbol ``\textbf{WM}'' to denote the working models and the symbol ``\textbf{CP}'' to denote the conditional posteriors for the parameters.

\subsection{Formulation of Proposed Conditional Posteriors}

\subsubsection{\textbf{Issues with the existing strategy for orthogonality:}}

When the covariate of interest is continuous, the concept of orthogonality relies on a weighted regression between the covariate of interest and the nuisance covariates. However, when the covariate of interest is binary, such a weighted regression approach is no longer applicable. Firstly, regressing binary covariates on nuisance covariates may result in negative predictions for the binary covariate of interest, which are not interpretable. Moreover, to achieve orthogonality using the idea of weighted regression, the following assumption is crucial:
\begin{equation}\label{weighted_assumption}
    E[(X_i-Z_i^T\gamma_0)W_{i,i}Z_i \mid Z_i] = 0,
\end{equation}
where $\gamma_0$ is the coefficient of the weighted regression. When $X_i$ is binary, we can explicitly calculate the left-hand side of the equation in \eqref{weighted_assumption}. After expanding the expectation in \eqref{weighted_assumption}, we want the following expression to hold:
\begin{align}\label{weighted_assumpn_expansion}
    Var(Y_i \mid X_i=1, Z_i)P(X_i =1 \mid Z_i)
    &= Var(Y_i \mid X_i=1, Z_i)P(X_i =1 \mid Z_i) (Z_i^T\gamma_0)\\
    &+ Var(Y_i \mid X_i=0, Z_i)P(X_i =0 \mid Z_i)(Z_i^T\gamma_0),\nonumber
\end{align}
where $Var(Y_i \mid X_i, Z_i) = P(Y_i=1 \mid X_i, Z_i)P(Y_i=0 \mid X_i, Z_i)$.
The requirement in \eqref{weighted_assumpn_expansion} can be simplified to the following requirement:
\begin{equation}\label{weighted_assumption_final_form}
    Z_i^T\gamma_0 = \frac{Var(Y_i \mid X_i=1, Z_i)P(X_i =1 \mid Z_i)}{Var(Y_i \mid X_i=1, Z_i)P(X_i =1 \mid Z_i) + Var(Y_i \mid X_i=0, Z_i)P(X_i =0 \mid Z_i)}.
\end{equation}
The left-hand side of \eqref{weighted_assumption_final_form} represents an unbounded linear function of the nuisance covariates, while the right-hand side of \eqref{weighted_assumption_final_form} represents a non-linear function of nuisance covariates bounded between zero and one. Therefore, fulfilling this requirement is not possible. As a result, we must explore an alternative strategy to construct an orthogonal score.

\subsubsection{\textbf{Construction of an orthogonal score using variance-weighted projection:}}

We first discuss the construction of an orthogonal score in this setting.
The usual score based on the model \eqref{logistic} can be expressed as:
\begin{equation}\label{score_og}
    \varphi_{usual}(\theta \mid X,Y,Z,\beta) = \sum_{i=1}^n \frac{\partial}{\partial \theta}\log P[Y_i =1 \mid X_i,Z_i, \theta, \beta] = X^T(Y-\mu),
\end{equation}
where $\mu_{n \times 1}$ is the logistic mean vector with $\mu_i = \exp(\theta X_i + Z_i^T\beta)/(1 +\exp(\theta X_i + Z_i^T\beta) ) $. Consider $W_{n \times n}$ a diagonal matrix as defined in Equation~\eqref{w_true}. Derivative of this score \eqref{score_og} with respect to the nuisance parameter $\beta$, when evaluated at $\theta_0$ and $\beta_0$ leads to:
\begin{equation}\label{usual_derivative}
    \partial_{\beta}[\varphi_{usual}]\big|_{\theta_0,\beta_0} = -X^TW Z, \text{ and therefore } E\big[\partial_{\beta}(\varphi_{usual})\big|_{\theta_0,\beta_0}\big]\neq 0.
\end{equation}
Therefore, the usual score does not have the orthogonal property. 

With the same definition of $\mu$ as in \eqref{score_og}, consider the following score for $\theta$:  
\begin{equation}\label{score_proposal}
    \psi(\theta \mid X,Y,Z,\beta) = (X- h_0(Z))^T(Y- \mu),
\end{equation}
where $h_0(Z):= (h_0(Z_1), h_0(Z_2), \dots, h_0(Z_n))$ is a non-parametric function of the nuisance covariates that has been introduced to impose orthogonality. We desire the following property to achieve the orthogonality for this proposed score:
\begin{equation}\label{orthogonality}
    E\big[\partial_{\beta}(\psi(\theta \mid X,Y,Z,\beta))\big|_{\theta_0,\beta_0}\big] =  0.
\end{equation}

The following expression for the function $h_0(Z_i)$ (for $i = 1,\dots, n$) leads to the desired orthogonal property:

\begin{align}\label{h_fn}
    h_0(Z_i) &= \frac{P(Y_i=1\mid X_i=1, Z_i)P(Y_i=0\mid X_i=1, Z_i)P(X_i=1\mid Z_i)}{\sum_{k \in \{0,1\}}P(Y_i=1\mid X_i=k, Z_i)P(Y_i=0\mid X_i=k, Z_i)P(X_i=k\mid Z_i)} \nonumber\\
    &= \frac{1}{1 + R_{0i}},\\
    R_{0i} &= \dfrac{P(Y_i=1\mid X_i=0, Z_i)P(Y_i=0\mid X_i=0, Z_i)P(X_i=0\mid Z_i)}{P(Y_i=1\mid X_i=1, Z_i)P(Y_i=0\mid X_i=1, Z_i)P(X_i=1\mid Z_i)},
\end{align}
where the quantities $P(Y_i \mid X_i = 1, Z_i)$ and $P(Y_i \mid X_i = 0, Z_i)$ are evaluated at the oracle values of the model parameters $(\theta_0,\beta_0)$. We have suppressed the use of extra notations in their respective conditional expression.
According to the expression in equation \eqref{h_fn}, it is evident that $h_0(Z_i)$ represents the conditional expectation of $X$, weighted by the variance of the output in the two classes. Consequently, we refer to the function $h_0(Z)$ as \textit{variance-weighted projection}.

Given the observed data $(Y,X,Z)$, we do not have the knowledge of $P(Y_i = 1 \mid X_i, Z_i)$ and $P(X_i = 1 \mid Z_i)$. These two quantities are sufficient to obtain an estimator of $h_0(Z_i)$. Therefore, we first need to obtain estimators for these quantities which will lead to an estimator of $h_0(Z_i)$. 

\subsubsection{Estimator of $P(Y_i = 1 \mid X_i, Z_i)$:}

We use a sample of the parameters $(\Tilde{\theta}, \beta)$ to obtain estimators for $P(Y_i = 1 \mid X_i, Z_i)$. The working model for $(\tilde{\theta},\beta)$ is the same as the original logistic model given by \eqref{logistic}. Consequently, we obtain samples of $(\tilde{\theta},\beta)$ based on the following conditional posterior:
\begin{equation}\label{beta_posterior}
\tag{$(\tilde{\theta}, \beta)$-\textbf{CP}}
\boxed{
    f((\tilde{\theta}, \beta) \mid Y,X, Z) \propto \pi(\tilde{\theta})\pi( \beta)\prod_{i=1}^{n}\frac{\exp(X_i\tilde{\theta} + Z_i^T\beta)^{
    Y_i}}{1 + \exp(X_i\tilde{\theta}+ Z_i^T\beta)},}
\end{equation}
where $\pi(\tilde{\theta})$ and $\pi( \beta)$ are the priors on the model parameters $\theta$ and $\beta$, respectively. The tilde notation in $\Tilde{\theta}$ emphasizes that this sample is different from the final sample used for inference on the parameter of interest defined later by Equation \eqref{theta_posterior}.

The estimators for $P(Y_i = 1 \mid X_i, Z_i)$ corresponding to $X_i=1$ and $X_i = 0$, respectively are defined as:
\begin{align}\label{p_y_def}
    P(Y_i = 1 \mid X_i = 1, Z_i, \tilde{\theta},\beta) &= \frac{\exp(\tilde{\theta} + Z_i^T\beta)}{1+\exp(\tilde{\theta} + Z_i^T\beta)} \nonumber \text{, and } \\
    P(Y_i = 1 \mid X_i = 0, Z_i,\tilde{\theta},\beta) &= \frac{\exp(Z_i^T\beta)}{1+\exp(Z_i^T\beta)}.
\end{align}

\subsubsection{Estimator for the propensity score, $P(X_i=1 \mid Z_i)$:}

We introduce another nuisance parameter $\gamma$ to model the dependence between $X$ and $Z$ and therefore, obtain an estimator for $P(X_i=1 \mid Z_i)$. The working model for parameter $\gamma$ is given by the following logistic model for $i = 1, \dots, n$:

\begin{equation}\label{gamma_working}
\tag{$\gamma$-\textbf{WM}}
    P(X_i = 1 \mid Z_i, \gamma) = \frac{\exp( Z_i^T\gamma)}{1+\exp(Z_i^T\gamma)}.
\end{equation}
Motivated by\eqref{gamma_working}, we sample  $\gamma$ based on the following conditional posterior:

\begin{equation}\label{gamma-posterior}
\tag{$\gamma$-\textbf{CP}}
\boxed{
    f(\gamma \mid X,Z) \propto \pi(\gamma)\prod_{i=1}^{n}\frac{\exp(Z_i^T\gamma)^{
    X_i}}{1 + \exp(Z_i^T\gamma)},}
\end{equation}
where $\pi(\gamma)$ is prior on the parameter $\gamma$.
Based on a sample of $\gamma$ from \eqref{gamma-posterior}, the estimators for the propensity scores are given by the expression of the working model \eqref{gamma_working}.

\subsubsection{Estimator of the variance-weighted projection:}

Based on a sample of $(\tilde{\theta}, \beta)$ from \eqref{beta_posterior}, we can obtain the estimators for $P(Y_i=1 \mid X_i=1, Z_i)$ and $P(Y_i=1 \mid X_i=0, Z_i)$ as defined in Equation~\eqref{p_y_def}. Moreover, based on a sample of $\gamma$ from \eqref{gamma-posterior}, we can obtain an estimator for $P(X_i =1 \mid Z_i)$ as defined through the working model \eqref{gamma_working}. For $i= 1, \dots, n$, we can obtain an estimator for $h_0(Z_i)$ which is denoted by $h(Z_i)$ and is defined as:

\begin{align}\label{h_estimate}
    h(Z_i) &= \frac{1}{1+ R_i}\\
    R_i &= \dfrac{P(Y_i=1\mid X_i=0, Z_i, \tilde{\theta},\beta)P(Y_i=0\mid X_i=0, Z_i, \tilde{\theta},\beta)P(X_i=0\mid Z_i, \gamma)}{P(Y_i=1\mid X_i=1, Z_i,\tilde{\theta},\beta)P(Y_i=0\mid X_i=1, Z_i,\tilde{\theta},\beta)P(X_i=1\mid Z_i,\gamma)}.\nonumber
\end{align}
In summary, a sample for the estimator of the variance-weighted projection is obtained from a sample of the model parameters $(\tilde{\theta}, \beta)$, which is drawn from \eqref{beta_posterior}, and a sample of $\gamma$ drawn from \eqref{gamma-posterior}.

\subsubsection{Working model for the parameter of interest $\theta$:}

We first define a re-parameterized nuisance quantity based on the samples $(\tilde{\theta},\beta)$ \eqref{beta_posterior} and the estimator $h(Z)$ \eqref{h_estimate} given by:

\begin{equation}\label{phi_def}
   \phi = (\phi_1, \dots, \phi_n), \text{ where } \phi_i = \tilde{\theta}h(Z_i) + Z_i^T\beta, \text{ for } i=1, \dots,n.
\end{equation}
Note that we use $\tilde{\theta} $ in the estimation of this quantity $\phi$ as it is sampled based on \eqref{beta_posterior} and is different from the conditional posterior for $\theta$ used for final inference. This makes the posterior concentration for $\phi$ and $\theta$ not depend on each other.
We assume the following working model for the parameter of interest $\theta$ conditional on $h(Z)$  and $\phi$:

\begin{equation}\label{theta_working}
\tag{$\theta$-\textbf{WM}}
    P(Y_i=1 \mid X_i,Z_i, h(Z_i),\phi_i, \theta) = \frac{\exp\{(X_i - h(Z_i))\theta + \phi_i\}}{1 + \exp\{(X_i - h(Z_i))\theta + \phi_i\} },
\end{equation}
for $i=1, \dots, n$. 
An important observation is that the working model  \eqref{theta_working} given the oracle quantities $\theta_0$, $h_0(Z)$, and $\phi_0$ is exactly same as the original model \eqref{logistic}. To see this:
\begin{align*}
    P(Y_i=1 \mid X_i,Z_i,\theta_0, h_0(Z), \phi_0) &= \frac{\exp\{ ( X_i - h_0(Z_i))\theta_0 + \phi_{0i} \}}{1 + \exp\{ ( X_i - h_0(Z_i))\theta_0 + \phi_{0i} \}} \\
    &= \frac{\exp\{X_i \theta_0  + \phi_{0i} - \theta_0h_0(Z_i) \}}{1 + \exp\{X_i \theta_0  + \phi_{0i} - \theta_0h_0(Z_i) \}} \\
    &= \frac{\exp(X_i \theta_0 + Z_i^T\beta_0)}{1+ \exp(X_i \theta_0 + Z_i^T\beta_0)}.
\end{align*}
This implies that the working model \eqref{theta_working} retains the same form as the true model under the oracle values of the nuisance quantities. This correspondence is crucial to capture the essential features of the original model and to yield valid inference.
Furthermore, the corresponding score of the working model \eqref{theta_posterior} can be shown to possess orthogonal property w.r.t. $h(Z)$ and $\phi$ based on the definitions of the variance-weighted projection \eqref{h_fn} and the original logistic model assumption \eqref{logistic}.

\subsubsection{Conditional Posterior for $\theta$:}

Based on the working model \eqref{theta_working} and the samples of $h(Z)$ and $\phi$, we obtain a sample for the parameter of interest $\theta$ based on the following conditional posterior:
\begin{equation}\label{theta_posterior}
\tag{$\theta$-\textbf{CP}}
    f(\theta \mid Y, X, Z, h(Z), \phi) \propto \pi(\theta) \prod_{i=1}^{n}\frac{\exp\{(X_i - h(Z_i))\theta + \phi_i\}^{Y_i}}{1 + \exp\{(X_i - h(Z_i))\theta + \phi_i\} }.
\end{equation}
where $\pi(\theta)$ is the prior distribution on the parameter of interest $\theta.$

In conclusion, $\gamma$ is sampled from its conditional posterior \eqref{gamma-posterior}, $h(Z)$ is sampled based on \eqref{gamma-posterior} and \eqref{beta_posterior} together, $\phi$ is sampled based on $h(Z)$ and \eqref{beta_posterior} together, and $\theta$ is sampled based on \eqref{theta_posterior}. These three conditional posteriors together provide a joint posterior for all the quantities involved:
\begin{align}\label{joint_posterior}
    f(\theta,\gamma, h(Z),\phi \mid Y,X,Z) &\propto  f(\theta \mid Y,X,Z, h(Z), \phi)  \times  f(\phi \mid Y,X,Z, h(Z)) \nonumber\\
    & \times f(h(Z) \mid X,Y,Z, \gamma) \times f(\gamma \mid X,Z).
\end{align} 
Here, $f(\gamma \mid X,Z)$ is given by \eqref{gamma-posterior} and $f(\theta \mid Y,X,Z, h(Z), \phi)$ is given by \eqref{theta_posterior}. The conditional distributions $f(\phi \mid Y, X, Z, h(Z))$ and $f(h(Z) \mid X, Y, Z, \gamma)$ are both derived from \eqref{beta_posterior}. This is due to the fact that $h(Z)$ is a function of $(\tilde{\theta}, \beta)$ given $\gamma$, and similarly, $\phi$ is a function of $(\tilde{\theta}, \beta)$ given $h(Z)$. It is worth noting that the joint posterior \eqref{joint_posterior} is not based on a single working model for all the parameters together but rather utilizes different working models for each parameter separately as described earlier.

\section{Theoretical Guarantees}
\label{theoretical_results}
 \subsection{Notations and Definitions}

We use the term $\logit(x)$ to represent the logistic link function \eqref{logistic}.  $L_n(\theta \mid h(Z), \phi)$ denotes the conditional log-likelihood function of the parameter of interest $\theta$ for any given $h(Z)$ and $\phi$. 
$L_n(\theta \mid h_0(Z), \phi_0)$ denotes the conditional log-likelihood function given the oracle quantities $h_0(Z),\phi_0$. 
For sequences $a_n$ and $b_n$, $a_n = O(b_n)$ means $a_n/b_n \leq M$ for some $M>0$ and $a_n = o(b_n)$ means $a_n/b_n \xrightarrow{n \to \infty} 0$. Convergence in probability is denoted by the symbol $\prob$. $o_P(1)$ stands for convergence in probability to zero. $O_P(1)$ stands for stochastically bounded.

We first explicitly write the log-likelihoods.
Based on the \eqref{theta_posterior}, the conditional log-likelihood is:
\begin{align}\label{log-l-notation}
         L_n(\theta \mid h(Z), \phi)
         &=  \sum_{i=1}^{n} Y_i \{\theta (X_i - h(Z_i)) + \phi_i\}  - \log\{1+\exp(\theta (X_i - h(Z_i)) + \phi_i)\}.
\end{align}
Similarly, under $\gamma_0$ and $\phi_0$, the conditional log-likelihood can be written as:
\begin{align}\label{log-l-notation-true}
            L_n(\theta \mid h_0(Z), \phi_0)
         &=  \sum_{i=1}^{n} Y_i \{\theta (X_i - h_0(Z_i)) + \phi_{0i}\}  - \log\{1+\exp(\theta (X_i - h_0(Z_i)) + \phi_{0i})\}.
\end{align}

\subsection{Assumptions/ Regularity Conditions}

\begin{enumerate}[label=A\arabic*]
\item \textit{On Dimension of nuisance parameters:}
    \label{ass1}
    $\log d = o(n)$ as $n \to \infty$.

\item \textit{On the regularity of design: }
    \label{ass2}
    The nuisance variables are bounded, that is, $$\max\{|Z_{ij}|,1 \leq i\leq n, 1\leq j \leq d\} \leq C, \text{ for some } 0 < C < \infty.$$ 
\item \textit{On sparsity of true nuisance parameter: } \label{ass3} Suppose the number of non-zero elements in $\beta_0$ is $s$, then
    \begin{equation}\label{sparsity}
       s^2\log(d \vee n) = o(\sqrt{n}). 
    \end{equation}

\item \textit{Concentration of the marginal posterior of $(\tilde{\theta},\beta)$: }\label{ass4} 
We use a sample of $(\tilde{\theta},\beta)$ from \eqref{beta_posterior} to obtain an estimator for $h_0(Z)$ \eqref{h_fn} and define the re-parametrized quantity $\phi$ \eqref{phi_def}.
We assume the following concentration rates for the conditional posterior of $(\tilde{\theta},\beta)$ : 
\begin{align}
    P\bigg( \max \Big\{ \frac{\|(\tilde{\theta},\beta) - (\theta_0,\beta_0)\|_1}{s} ,\frac{\|(\tilde{\theta},\beta) - (\theta_0,\beta_0)\|_2}{s^{1/2}} \Big\} &\geq M \sqrt{\frac{\log(d \vee n)}{n}} \mid Y,X,Z \bigg) \nonumber\\
    &\leq C (d \vee n)^{-c_1},
\end{align}
    on a set $\mathcal{E}_1$, with $P(\mathcal{E}_1) \geq 1 - (d \vee n)^{-c_1}$, where $M,C,$ and $c_1$ are some positive constants.

\item \textit{Concentration of the estimator for the propensity scores} $P(X_i=1 \mid Z_i)$: \label{ass_propensity}
We use $P(X_i=1 \mid Z_i, \gamma)$ defined in \eqref{gamma_working} based on a sample of $\gamma$ from \eqref{gamma-posterior} as an estimator for the propensity scores $P(X_i = 1 \mid Z_i)$. We assume the following concentration rate:
\begin{align}
P\bigg( \max_{i \leq n} \Bigg| \log \Bigg(\frac{P(X_i=1 \mid Z_i, \gamma)/P(X_i=0 \mid Z_i, \gamma)}{P(X_i=1 \mid Z_i)/P(X_i=0 \mid Z_i)} \Bigg) \Bigg| &\geq M \sqrt{\frac{\log(d \vee n)}{n}} \mid X,Z \bigg) \nonumber\\
    &\leq C (d \vee n)^{-c_2},
\end{align}
    on a set $\mathcal{E}_2$, with $P(\mathcal{E}_2) \geq 1 - (d \vee n)^{-c_2}$, where $M,C,$ and $c_2$ are some positive constants.

\item \textit{On prior of the parameter of interest $\pi(\theta)$: } \label{ass5}
The prior density of the parameter of interest $\pi(\theta)$ is continuous at $\theta = \theta_0$, and $\pi(\theta_0) >0$. Furthermore, we assume that $|\theta_0| \leq M_0$ for some $M_0>0$.

\end{enumerate}

 Assumption~\ref{ass1} is the growth condition for the number of high dimensional covariates in the model. We consider the setting where the total number of covariates can increase as the sample size increases at a sub-exponential rate. 
 Assumption~\ref{ass2} requires that each component of the nuisance covariate vector is bounded. 
 
 Assumption~\ref{ass3} presents a bound on the level of sparsity for the nuisance parameters. When additional assumptions regarding the boundedness of the maximum eigenvalues of 
$Z$ are made, the sparsity assumption commonly used in the high-dimensional literature for estimation alone (and not for valid inference) is:  $s = o(n/ \log(d \vee n))$ \citep{candes2007dantzig,cai2017confidence}.
To achieve $\sqrt{n}$-consistency and valid inference properties, the sparsity assumption used in the literature is $s = o(\sqrt{n}/\log(d \vee n))$ \citep{van2014asymptotically,zhang2014confidence,logistic_dml,qbayesian}. If we further assume the boundedness of the maximum eigenvalue of $Z$, we can relax our sparsity assumption to the same order of $s = o(\sqrt{n}/\log(d \vee n))$ as well.
 
 Assumption~\ref{ass4} intuitively means that with high probability, the posterior of the nuisance parameter concentrates around their oracle values at $\sqrt{\log(d \vee n) /n}$ rate.
 This is the standard concentration rate in high-dimensional settings and is commonly satisfied by most regularization and selection methods both from the frequentist or Bayesian paradigms \citep{israo,scad,geerGLM,liang2008mixtures,castillo2015Bayesian,song2022nearly}. In this paper, we use the Spike and Slab method \citep{skinny} for the nuisance estimation which satisfies this concentration rate as well. However, it is important to note that this concentration rate does not guarantee valid inference, and the resulting estimates may exhibit significant bias.
 As mentioned in Section 1, this concentration rate is slower than the desired $n^{-1/2}$ rate of convergence.
 
 Assumption~\ref{ass_propensity} intuitively means that the ratio of our estimate for the propensity score and the corresponding oracle values concentrates around one at $\sqrt{\log(d \vee n) /n}$ rate. Consider a scenario in which the dependence between $X$ and $Z$ were given by a logistic model with parameter $\gamma_0$:
 $P(X_i=1\mid Z_i)= \exp(Z_i^T\gamma_0)/(1 + \exp(Z_i^T\gamma_0))$.
 Then this assumption is equivalent to the concentration of $\gamma$ around $\gamma_0$ at $\sqrt{\log(d \vee n) /n}$ rate. We do not want to assume any correct model for $P(X_i=1\mid Z_i)$. Therefore, we make an assumption about closeness between $P(X_i=1\mid Z_i)$ and our estimator $P(X_i=1\mid Z_i,\gamma)$.

 Assumption~\ref{ass5} guarantees the positive support of prior at $\theta_0$ which can be verified in the case of a Gaussian prior for example. We further assume that the oracle parameter of interest $\theta_0$ has bounded magnitude. 

\subsection{Main Results}
 When the true values of the nuisance quantities $h_0$ and $\phi_0$ are known, the conditional posterior of $\theta$ is given by:
    \begin{equation}\label{conditional_model_true}
    f(\theta \mid Y,X,Z, h_0(Z), \phi_0) \propto \pi(\theta) \prod_{i=1}^{n}\frac{\exp\{(X_i - h_0(Z_i))\theta + \phi_{0i}\}^{Y_i}}{1 + \exp\{(X_i - h_0(Z_i))\theta + \phi_{0i}\} }.
\end{equation}
This posterior is for one dimensional parameter $\theta$ and there is no variable selection involved. This corresponds to a logistic model with new input and a random effect term. Moreover, the log-likelihood, in this case, is given by \eqref{log-l-notation-true}:
\begin{align}\label{loglik}
            L_n(\theta \mid h_0(Z), \phi_0)
         &=  \sum_{i=1}^{n} Y_i \{(X_i - h_0(Z_i))\theta + \phi_{0i}\}  - \log\{1+\exp((X_i - h_0(Z_i))\theta + \phi_{0i})\}.
\end{align}
From a frequentist perspective, the corresponding score has an expectation zero at the oracle value $\theta_0$. Therefore, the posterior concentration theorem \citep{walker,schervish2012theory} applies here and the posterior concentrates around $\theta_0$ at the desired $n^{-1/2}$ rate. We denote the Maximum Likelihood Estimator for the likelihood in \eqref{loglik} by $\hat{\theta}_0$.
\begin{equation}\label{true_mle}
    \hat{\theta}_0 = \argmax_{\theta \in \Theta} \frac{1}{n}L_n(\theta \mid h_0(Z), \phi_0).
\end{equation}
This is the Maximum Likelihood Estimate of a one-dimensional parameter in a logistic model which does not have a closed-form expression. However, based on the theory of the Maximum Likelihood Estimates \citep{fahrmeir1985consistency}, we know that it concentrates around $\theta_0$ at the desired $n^{-1/2}$ rate. The standard deviation of $\hat{\theta}_0$ will be denoted by $\sigma_n$ and is defined as:
\begin{equation}\label{true_mle_sd}
\sigma_n = \sqrt{- \frac{\partial^2 }{\partial \theta^2}L_n(\theta \mid h_0(Z), \phi_0)\bigg|_{\theta = \hat{\theta}_0}}.
\end{equation}
We now provide the main theoretical results. The conditional posterior for $\theta$ as defined in \eqref{theta_posterior} is:
\begin{equation}\label{posterior}
    f(\theta \mid Y, X, Z, h(Z), \phi) = \frac{\pi(\theta)\exp\{L_n(\theta \mid h(Z), \phi)\}}{\int_{\Theta}\pi(\theta)\exp(L_n(\theta \mid h(Z), \phi)) d\theta},
\end{equation}
where $L_n$ is the log likelihood function based on the \eqref{theta_posterior} as defined in \eqref{log-l-notation}. Let $f(\theta \mid Y,X,Z)$ be the marginal posterior for $\theta$ given data. Our theorem below will guarantee the concentration of this posterior for $\theta$ at the optimal rate.

\begin{theorem}[\textbf{Posterior Concentration}]\label{theorem1}

Suppose Assumptions~\ref{ass1}-\ref{ass5} presented above hold. Consider $\hat{\theta}_0$ and $\sigma_n$ as defined in \eqref{true_mle} and \eqref{true_mle_sd}, respectively. 
If $a$ and $b$ are constants, where $a<b$, then the marginal posterior probability $P(\mle + a\sigma_n < \theta < \mle + b\sigma_n \mid Y, X, Z)$ given by 
$$\int_{\hat{\theta}_0 + a\sigma_n}^{\hat{\theta}_0 + b\sigma_n} f(\theta \mid Y,X,Z)d\theta$$
converges in probability to 
$$\frac{1}{\sqrt{2\pi}} \int_{a}^{b} \exp\bigg(-\frac{1}{2}z^2 \bigg) dz,$$
as $n\to \infty$.
\end{theorem}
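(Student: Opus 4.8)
The plan is to run a one-dimensional Bernstein--von Mises (Laplace) argument in three stages: localize the posterior to a shrinking neighbourhood of $\theta_0$, replace the data-dependent nuisances $(h(Z),\phi)$ by their oracle values $(h_0(Z),\phi_0)$ at no first-order cost, and then expand the oracle log-likelihood \eqref{loglik} about $\hat{\theta}_0$. First I would write the posterior probability in the theorem as a ratio of integrals of $\pi(\theta)\exp\{L_n(\theta\mid h(Z),\phi)-L_n(\hat{\theta}_0\mid h(Z),\phi)\}$ over the theorem's interval and over $\Theta$, and, since we want the marginal posterior, integrate this ratio over the sampled $(h(Z),\phi)$. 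Using concavity of \eqref{log-l-notation} in $\theta$, boundedness of the logistic weights (Assumption~\ref{ass2}), $\sqrt{n}$-consistency of $\hat{\theta}_0$ (standard one-dimensional logistic MLE theory \citep{fahrmeir1985consistency}), and prior positivity at $\theta_0$ (Assumption~\ref{ass5}), a routine uniform bound shows that the denominator's mass outside a fixed ball $\{|\theta-\theta_0|>\delta\}$ is exponentially negligible; hence both integrals may be restricted to a shrinking interval $N_n$ that contains $\theta_0$, $\hat{\theta}_0$, and the theorem's interval with probability tending to one.

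The crux is the second stage. On the event $\mathcal{E}_1\cap\mathcal{E}_2$ of Assumptions~\ref{ass4} and \ref{ass_propensity}, I would combine those concentration rates with Assumptions~\ref{ass2}--\ref{ass3} to bound $\|h(Z)-h_0(Z)\|$ and $\|\phi-\phi_0\|$; because $\phi$ is formed from $\tilde{\theta}$ rather than from the final $\theta$ (see the remark after \eqref{phi_def}), these bounds are independent of the $\theta$-posterior, and because $h_0$ is a smooth bounded function of $P(Y\mid X,Z)$ and of the propensity scores, the estimate $h(Z)$ inherits the $\sqrt{\log(d\vee n)/n}$ rate even when the working model \eqref{gamma_working} is misspecified. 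Writing $\Delta_n(\theta):=L_n(\theta\mid h(Z),\phi)-L_n(\theta\mid h_0(Z),\phi_0)$, I would Taylor-expand $\Delta_n$ jointly in the nuisance perturbation and in $\theta$ about $(\theta_0,h_0(Z),\phi_0)$: the term that is first order in both is precisely the mixed derivative appearing in the orthogonality identity \eqref{orthogonality}, which vanishes in expectation by the construction of the variance-weighted projection \eqref{h_fn}, with fluctuation that is $o_P(1)$ on the parametric scale; the remaining terms are either second order in the nuisance perturbation, of size $\|h(Z)-h_0(Z)\|_2^2+\|\phi-\phi_0\|_2^2 = O_P\!\big(s^2\log(d\vee n)/n\big)$, which is $o_P(1)$ at the $n^{-1/2}$ scale exactly because of the strengthened sparsity condition~\eqref{sparsity}, or free of $\theta$ and hence cancelling in the ratio. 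This gives $\sup_{\theta\in N_n}|\Delta_n(\theta)-\Delta_n(\theta_0)|=o_P(1)$, so---using that the nuisance posterior concentrates on $\mathcal{E}_1\cap\mathcal{E}_2$ and dominated convergence---the marginal posterior ratio differs by $o_P(1)$ from the one built from the oracle likelihood $L_n(\theta\mid h_0(Z),\phi_0)$.

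It then suffices to prove the statement with $f(\theta\mid Y,X,Z)$ replaced by the oracle posterior \eqref{conditional_model_true}, whose log-likelihood \eqref{loglik} is smooth, strictly concave, and maximized at $\hat{\theta}_0$, so its first $\theta$-derivative vanishes there. A third-order expansion at $\theta=\hat{\theta}_0+t$ gives $L_n(\theta\mid h_0(Z),\phi_0)-L_n(\hat{\theta}_0\mid h_0(Z),\phi_0) = \tfrac12\,\partial_\theta^2 L_n(\hat{\theta}_0\mid h_0(Z),\phi_0)\,t^2 + \tfrac16\,\partial_\theta^3 L_n(\check{\theta}\mid h_0(Z),\phi_0)\,t^3 = -\tfrac12\sigma_n^2 t^2 + \tfrac16\,\partial_\theta^3 L_n(\check{\theta}\mid h_0(Z),\phi_0)\,t^3$, with $\sup_{\theta\in N_n}|\partial_\theta^3 L_n|=O_P(n)$ by Assumption~\ref{ass2} and boundedness of the logistic derivatives. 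The standardizing change of variables that turns $-\tfrac12\sigma_n^2 t^2$ into $-\tfrac12 z^2$ maps the theorem's interval onto $\{a<z<b\}$ and makes the cubic term uniformly $o_P(1)$ over that range, so the integrand becomes the prior factor times $\exp\{-\tfrac12 z^2+o_P(1)\}$, converging pointwise to $\pi(\theta_0)e^{-z^2/2}$ by continuity and positivity of $\pi$ at $\theta_0$ (Assumption~\ref{ass5}) and dominated by an integrable Gaussian bound on $N_n$; dominated convergence applied to numerator and denominator, together with the first-stage tail control, yields the ratio of $\int_a^b e^{-z^2/2}\,dz$ to $\int_{\mathbb{R}}e^{-z^2/2}\,dz=\sqrt{2\pi}$, i.e.\ $\tfrac{1}{\sqrt{2\pi}}\int_a^b e^{-z^2/2}\,dz$.

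The main obstacle is the second stage: showing that the biased, $n^{-1/2}$-suboptimal nuisance estimates leave no first-order imprint on the $\theta$-posterior. This rests on two ingredients working together---the exact cancellation of the leading cross term through orthogonality of the score of the working model \eqref{theta_working}, and the quantitative estimate that the residual second-order term is $o_P(1)$ at the parametric scale, which is where the strengthened sparsity \eqref{sparsity} (and boundedness, Assumption~\ref{ass2}) are used essentially. A secondary subtlety is that $h(Z)$ and $\phi$ are built from a single shared draw $(\tilde{\theta},\beta)$, so one must verify that the several ``instances of orthogonality'' do not reinforce each other's errors; the decoupling noted after \eqref{phi_def}, together with a union bound over $\mathcal{E}_1\cap\mathcal{E}_2$, handles this.
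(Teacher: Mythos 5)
Your proposal is correct and follows essentially the same route as the paper's proof: localization of the posterior away from $\theta_0$ via a KL-type tail bound, replacement of $(h(Z),\phi)$ by $(h_0(Z),\phi_0)$ using the Neyman orthogonality of the variance-weighted score to kill the first-order cross term and the sparsity condition \eqref{sparsity} to kill the second-order remainder, marginalization over the nuisance draws via the high-probability concentration event, and a standard Laplace expansion of the oracle likelihood \eqref{loglik} about $\hat{\theta}_0$. The only difference is organizational --- you perform the nuisance swap as a separate uniform approximation $\sup_{\theta\in N_n}|\Delta_n(\theta)-\Delta_n(\theta_0)|=o_P(1)$ before the Laplace step, whereas the paper keeps the score-difference term $B_n$ in the exponent and completes the square (Lemmas~\ref{lemma_score_diff_expectation} and \ref{lemma_5_ldml} supplying the fluctuation control you assert) --- which does not change the substance of the argument.
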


The theorem states that the posterior  concentrates around $\hat{\theta}_0$ which is the oracle estimator \eqref{true_mle} at $n^{-1/2}$ rate because $\sigma_n = O(\sqrt{n})$. 
The concentration of $\hat{\theta}_0$ around $\theta_0$ implies that the conditional posterior concentrates around the true parameter $\theta_0$ at $n^{-1/2}$ rate. As the conditional posterior attains asymptotic normality, the construction of confidence intervals would be valid as confirmed by the following corollary:

\begin{corollary}
  The $(1-\alpha)$-credible interval has the correct coverage in frequentist sense. Let $\hat{q}_{\alpha/2}^{CB}$ and $\hat{q}_{1 -\alpha/2}^{CB}$ be the $\alpha/2^{th}$ and $(1 -\alpha/2)^{th}$ percentiles of the marginal posterior distribution $f(\theta \mid Y,X,Z)$, respectively. Then, we have
 \begin{equation*}
     \Bigg| P \Big[\theta_0 \in \big(\hat{q}_{\alpha/2}^{CB},\hat{q}_{1 -\alpha/2}^{CB}\big) \mid Y, X, Z\Big] - (1-\alpha) \Bigg| \prob 0.
 \end{equation*}
\end{corollary}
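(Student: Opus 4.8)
This is a Bernstein--von Mises style argument. Theorem~\ref{theorem1} says the posterior of $\theta$, recentered at $\hat\theta_0$ and rescaled by $\sigma_n$, is asymptotically standard normal, so the equal-tailed credible interval is asymptotically the interval $\hat\theta_0 + \sigma_n\bigl(\Phi^{-1}(\alpha/2),\,\Phi^{-1}(1-\alpha/2)\bigr)$, where $\Phi$ is the standard normal c.d.f.; on the other hand $\hat\theta_0$ is itself asymptotically normal around $\theta_0$ with standard deviation $\sigma_n$ (as in \eqref{true_mle_sd}), so this interval is asymptotically a Wald interval and its frequentist coverage is $1-\alpha$. Concretely I would (i) upgrade Theorem~\ref{theorem1} to convergence of the rescaled posterior distribution function, (ii) invert to get the posterior percentiles, and (iii) substitute them into the coverage event and finish with Slutsky's theorem.

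For (i) and (ii), let $G_n(z) := P\bigl(\theta \le \hat\theta_0 + z\sigma_n \mid Y,X,Z\bigr)$. Theorem~\ref{theorem1} gives $G_n(b)-G_n(a) \prob \Phi(b)-\Phi(a)$ for all fixed $a<b$; taking $a=-K$, $b=K$ with $K$ large shows that with probability tending to one the posterior puts mass arbitrarily close to one on $(\hat\theta_0-K\sigma_n,\hat\theta_0+K\sigma_n)$, so $G_n(-K)\prob 0$ and $G_n(K)\prob 1$, and hence $G_n(z)\prob\Phi(z)$ for every $z$. Monotonicity of each $G_n$ together with continuity and strict monotonicity of $\Phi$ then upgrades this to $\sup_z|G_n(z)-\Phi(z)|\prob 0$ by a P\'olya-type argument, and since the posterior \eqref{posterior} has a density the percentiles are well defined and obtained by inverting $G_n$; this yields, for each fixed $p\in(0,1)$,
\[
\hat q^{CB}_{p} \;=\; \hat\theta_0 + \sigma_n\bigl(z_p + o_P(1)\bigr), \qquad z_p := \Phi^{-1}(p).
\]

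For (iii), the coverage event $\{\theta_0\in(\hat q^{CB}_{\alpha/2},\hat q^{CB}_{1-\alpha/2})\}$ is exactly $\{\,z_{\alpha/2}+o_P(1) < (\theta_0-\hat\theta_0)/\sigma_n < z_{1-\alpha/2}+o_P(1)\,\}$. By the classical asymptotic normality of the one-dimensional maximum likelihood estimator in the logistic likelihood \eqref{loglik} --- whose modified covariate $X_i-h_0(Z_i)$ is bounded and whose offset $\phi_{0i}$ is treated as fixed, so the conditions of \citet{fahrmeir1985consistency} are met --- we have $(\theta_0-\hat\theta_0)/\sigma_n \Rightarrow N(0,1)$. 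Slutsky's theorem (the $o_P(1)$ perturbations of the two endpoints are negligible and $\Phi$ is continuous at $z_{\alpha/2},z_{1-\alpha/2}$) then gives
\[
P\bigl[\theta_0 \in (\hat q^{CB}_{\alpha/2},\hat q^{CB}_{1-\alpha/2})\bigr] \longrightarrow \Phi(z_{1-\alpha/2})-\Phi(z_{\alpha/2}) = 1-\alpha,
\]
which is the claimed frequentist coverage, the posterior mass of the interval being $1-\alpha$ by construction; if the convergence in Theorem~\ref{theorem1} is uniform over $|\theta_0|\le M_0$, the coverage statement inherits that uniformity.

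The main obstacle is steps (i)--(ii): Theorem~\ref{theorem1} only controls the posterior mass of intervals of the form $(\hat\theta_0+a\sigma_n,\hat\theta_0+b\sigma_n)$ with $a,b$ fixed, so one must separately rule out posterior mass escaping to $\pm\infty$ and then pass carefully from in-probability convergence of the \emph{random} c.d.f.\ $G_n$ to in-probability convergence of its quantiles. Once the endpoints are pinned down as $\hat\theta_0+\sigma_n(z_p+o_P(1))$, coupling this posterior-side randomness with the sampling randomness of $\hat\theta_0$ is routine, but it genuinely requires the asymptotic \emph{normality} of $\hat\theta_0$ --- not merely the $\sqrt n$-consistency emphasized around \eqref{true_mle} --- which is why invoking the asymptotic-normality part of \citet{fahrmeir1985consistency} for this bounded-design logistic likelihood is an essential ingredient.
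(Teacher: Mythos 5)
Your proposal is correct and follows essentially the same route as the paper: use Theorem~\ref{theorem1} to identify the posterior quantiles as $\hat\theta_0+\sigma_n z_p+o_P(\sigma_n)$, then transfer to frequentist coverage via the asymptotic normality of $\hat\theta_0$. You are in fact somewhat more careful than the paper, which applies the theorem directly with $a=-\infty$, whereas you explicitly rule out escaping posterior mass and justify passing from convergence of the rescaled posterior c.d.f.\ to convergence of its quantiles.
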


The proofs of the theorem and the corollary have been presented in Section \ref{proofs}.

\section{Prior Specification}

We now provide our prior specifications which we use for our empirical investigations. 
\begin{enumerate}[label=P\arabic*]
    \item 
{\bf Prior for $\beta$}: Let $I_{1j}$ denote the binary latent variables that decide the active and inactive state of the component $\beta_j$. The priors are:
\begin{subequations}\label{eta-prior}
    \begin{equation}\nonumber
            \beta_j\mid I_{1j} = 0 \sim N(0,\tau_{0n}^2), \beta_j\mid I_{1j} = 1 \sim N(0,\tau_{1n}^2),
    \end{equation}
    \begin{equation}\nonumber
            P(I_{1j} = 1) = 1 - P(I_{1j} = 0) = q_{n},
    \end{equation}
\end{subequations}
    for $j = 1,\dots, d$. The parameters  $\tau_{0n},\tau_{1n}$, and $q_n$ are chosen exactly the same way as in \citep{skinny}. To be specific, we have chosen $\tau_{0n}^2 = 1/n$, $n\tau_{1n}^2 = \max\{n,0.01 d^{2.1} \}$, and   $q_{n}$ such that $P[\sum_{j=1}^d I_{1j} > K] = 0.1$ for $K= \max\{10,\log n\}$.

 \item {\bf Prior for $\gamma$:} Let $I_{2j}$ denote the binary latent variable that decides the active and inactive state of the component $\gamma_j$. The priors are:
    \begin{subequations}\label{gamma-prior}
    \begin{equation}\nonumber
            \gamma_j\mid I_{2j} = 0 \sim N(0,\tau_{0n}^2), \gamma_j\mid I_{2j} = 1 \sim N(0,\tau_{1n}^2),
    \end{equation}
    \begin{equation}\nonumber
            P(I_{2j} = 1) = 1 - P(I_{2j} = 0) = q_n,
    \end{equation}
    \end{subequations}
    for $j = 1,\dots, d$. The choice of the prior hyper-parameters is the same as that discussed in the prior for $\beta$.

 \item {\bf Prior for $\theta$:} This is a finite-dimensional parameter that does not require any variable selection step. We assume a Gaussian prior for this parameter with a large variance $\lambda >0:$ $\theta \sim N(0,\lambda).$ We have chosen $\lambda = 10$ in our simulations.
\end{enumerate}

\section{Simulation Studies}\label{simulation}

\subsection{Simulation Setup}\label{methods}

We investigate the performance of our proposed methods under different simulation settings. Through simulation results, we see that the conditional Bayesian method achieves valid coverage in the frequentist sense. 

Our simulations are based on the following models:
\[ P[Y_i=1 \mid X_i,\textbf{Z}_i] = \logit(\theta_0 X_i + \beta^T \textbf{Z}_i), \ \ \ P[X_i=1 \mid Z_i] = \logit(\gamma^T \textbf{Z}_i),\]
where $\mathbf{Z_i}$ is the high-dimensional nuisance variable generated according to a multivariate normal, $N(0, H_{d \times d})$ with $H_{ij} = 0.5^{|i-j|}$. The nuisance parameters $\beta$ and $\gamma$ are chosen to be sparse vectors according to
\[ \beta_{d \times 1} = (-0.4,0.8,-1,1.5, 0, 0, \cdots , 0)^T,\gamma_{d \times 1} = (0.3,-0.5,-1,1.5,0, 0, \cdots, 0)^T. \]


We consider two pairs of sample sizes and number of covariates such that $(n,d) \in \{(400,500),(500,600)\}$ and select the true signal $\theta_0$ from a set of varying signal strengths, $\theta_0 \in \{0, 0.1, 0.2,\dots, 1\}$. We calculate the empirical $95 \%$ credible (confidence) intervals for different values of $\theta_0$ based on 1000 Monte Carlo simulations. We report the frequentist coverage, interval length, and bias for each method under consideration. The methods that we compare in this study are:
\color{black}

\begin{itemize}
 \item \textbf{WLP}: \textit{Global and Simultaneous Hypothesis Testing for High-Dimensional Logistic Regression Models} \citep{ht} - This method uses a generalized low-dimensional projection technique for the bias correction of the parameter estimates which are obtained after employing logistic LASSO. They construct global test statistics and construct confidence intervals for all the parameters using the debiased estimates. We use the intervals and estimates for the parameter of interest $\theta$ in our results. This method has been implemented based on the code provided by the authors.
 
 \item \textbf{LSW}: \textit{Statistical Inference for High-Dimensional Generalized Linear Models With Binary Outcomes} \citep{other} - This is a two-step bias correction method that employs a novel weighting strategy for the bias correction and consequently construct confidence intervals. This method has been implemented based on the code provided by the authors. 
 \item \textbf{NAIVE}: \textit{Post LASSO Logistic Regression} - In the first step of this method, we perform LASSO using the \texttt{glmnet} package to select the significant nuisance covariates. In the second step, we use the covariate of interest $X$ along with the selected nuisance covariates from the first step to perform a low-dimensional logistic regression and use the estimated parameter and its variance for inference using Wald intervals.
 
 \item \textbf{BMA}: \textit{Bayesian Model Averaging} \citep{bma_tuchler}:  This method utilizes  spike and slab priors over the high-dimensional nuisance parameter for model selection and subsequently averages over selected models to obtain estimates and standard deviation for the parameter of interest. The point estimator and the estimated standard deviation are used to obtain intervals. We implement this method using the package \texttt{BoomSpikeSlab}.
 
 \item \textbf{BLASSO}: \textit{Bayesian LASSO} \citep{park2008Bayesian} - This is the Bayesian logistic regression method with Laplace priors on the model parameters. We use the \texttt{bayesreg} package to implement this method.

 \item \textbf{CB}: This is our proposed Bayesian approach based on conditional posteriors using the \textit{Gibbs sampler}. 
 \item \textbf{ORACLE} - We regress the output $Y$ on the target covariate $X$ and the nuisance covariates $Z$ which truly affect $Y$. The selection of nuisance covariates is based on the indices where the components of $\beta$ are non-zero. As this corresponds to a low-dimensional logistic regression scenario, we employ the \texttt{glm} package without any form of regularization. We include this approach as a benchmark but it cannot be implemented in practice as it uses the knowledge of the unknown sparsity structure of the true data generating model.
\end{itemize}

\color{black}

\subsection{Results and Discussion}

\begin{table}[ht]
\centering
\resizebox{0.9\linewidth}{!}{%
\begin{tabular}{|c|c|c|c|c|c|c|c|c|c|}
  \hline
  \multicolumn{2}{|c|}{}& \multicolumn{4}{|c|}{Frequentist}& \multicolumn{3}{|c|}{Bayesian} & \multicolumn{1}{|c|}{}\\
  \hline
Qunatities & $\theta_0$ & DS & NAIVE & WLP & LSW & BMA & BLASSO & CB & ORACLE   \\
  \hline
 & 0.0 & 0.507 & 0.909 & 0.871 & 0.979 & 1.000 & 0.994 & 0.950 & 0.943 \\ 
   & 0.1 & 0.521 & 0.906 & 0.927 & 0.979 & 0.001 & 0.998 & 0.953 & 0.953 \\ 
   & 0.2 & 0.553 & 0.889 & 0.935 & 0.977 & 0.003 & 0.998 & 0.961 & 0.942 \\ 
   & 0.3 & 0.551 & 0.887 & 0.965 & 0.971 & 0.002 & 0.997 & 0.971 & 0.950 \\ 
  Coverage & 0.4 & 0.573 & 0.861 & 0.974 & 0.964 & 0.011 & 0.992 & 0.949 & 0.950 \\ 
   & 0.5 & 0.565 & 0.859 & 0.973 & 0.965 & 0.022 & 0.980 & 0.965 & 0.964 \\ 
   & 0.6 & 0.523 & 0.835 & 0.959 & 0.956 & 0.035 & 0.968 & 0.953 & 0.947 \\ 
   & 0.7 & 0.535 & 0.823 & 0.939 & 0.947 & 0.114 & 0.965 & 0.964 & 0.958 \\ 
   & 0.8 & 0.529 & 0.800 & 0.910 & 0.940 & 0.186 & 0.970 & 0.956 & 0.961 \\ 
   & 0.9 & 0.520 & 0.771 & 0.888 & 0.938 & 0.311 & 0.956 & 0.947 & 0.939 \\ 
   \hline
\hline
 & 0.0 & 0.706 & 1.662 & 0.646 & 1.253 & 0.001 & 1.076 & 1.303 & 1.161 \\ 
   & 0.1 & 0.709 & 1.642 & 0.646 & 1.254 & 0.001 & 1.144 & 1.293 & 1.155 \\ 
   & 0.2 & 0.710 & 1.657 & 0.650 & 1.267 & 0.007 & 1.279 & 1.301 & 1.159 \\ 
  Interval & 0.3 & 0.714 & 1.635 & 0.653 & 1.275 & 0.009 & 1.363 & 1.300 & 1.158 \\ 
  Length & 0.4 & 0.718 & 1.674 & 0.660 & 1.266 & 0.030 & 1.490 & 1.300 & 1.160 \\ 
   & 0.5 & 0.721 & 1.728 & 0.661 & 1.260 & 0.075 & 1.605 & 1.302 & 1.162 \\ 
   & 0.6 & 0.727 & 1.847 & 0.677 & 1.265 & 0.126 & 1.767 & 1.317 & 1.169 \\ 
   & 0.7 & 0.732 & 1.766 & 0.677 & 1.278 & 0.314 & 1.905 & 1.320 & 1.172 \\ 
   & 0.8 & 0.738 & 1.798 & 0.698 & 1.272 & 0.501 & 2.090 & 1.326 & 1.180 \\ 
   & 0.9 & 0.745 & 1.857 & 0.702 & 1.269 & 0.695 & 2.215 & 1.330 & 1.185 \\ 
   \hline
\end{tabular}}
\caption{Coverage and interval length corresponding to sample-based standard errors for each method considered for $n=400, d=500$ under signal strengths $\theta_0 \in \{0,0.1,0.2,\dots,0.9\}$.}
\label{n-400}
\end{table}

\begin{table}[ht]
\centering
\resizebox{0.9\linewidth}{!}{%
\begin{tabular}{|c|c|c|c|c|c|c|c|c|c|}
  \hline
  \multicolumn{2}{|c|}{}& \multicolumn{4}{|c|}{Frequentist}& \multicolumn{3}{|c|}{Bayesian} & \multicolumn{1}{|c|}{}\\
  \hline
Qunatities & $\theta_0$ & DS & NAIVE & WLP & LSW & BMA & BLASSO & CB & ORACLE   \\
  \hline
 & 0.0 & 0.556 & 0.904 & 0.858 & 0.982 & 1.000 & 0.996 & 0.949 & 0.936 \\ 
   & 0.1 & 0.583 & 0.902 & 0.902 & 0.982 & 0.000 & 0.996 & 0.933 & 0.947 \\ 
   & 0.2 & 0.557 & 0.872 & 0.949 & 0.978 & 0.000 & 0.998 & 0.951 & 0.946 \\ 
   & 0.3 & 0.559 & 0.875 & 0.955 & 0.971 & 0.007 & 1.000 & 0.955 & 0.944 \\ 
  Coverage & 0.4 & 0.517 & 0.871 & 0.967 & 0.973 & 0.016 & 0.993 & 0.951 & 0.956 \\ 
   & 0.5 & 0.512 & 0.827 & 0.960 & 0.953 & 0.027 & 0.958 & 0.956 & 0.944 \\ 
   & 0.6 & 0.482 & 0.840 & 0.933 & 0.976 & 0.082 & 0.964 & 0.958 & 0.960 \\ 
   & 0.7 & 0.458 & 0.816 & 0.929 & 0.967 & 0.202 & 0.944 & 0.962 & 0.951 \\ 
   & 0.8 & 0.484 & 0.757 & 0.845 & 0.969 & 0.295 & 0.940 & 0.945 & 0.924 \\ 
   & 0.9 & 0.465 & 0.716 & 0.869 & 0.957 & 0.494 & 0.940 & 0.950 & 0.940 \\ 
   \hline
\hline
 & 0.0 & 0.640 & 1.414 & 0.584 & 1.515 & 0.001 & 0.908 & 1.150 & 1.036 \\ 
   & 0.1 & 0.644 & 1.349 & 0.585 & 1.517 & 0.001 & 0.992 & 1.144 & 1.030 \\ 
   & 0.2 & 0.644 & 1.459 & 0.589 & 1.520 & 0.002 & 1.096 & 1.148 & 1.030 \\ 
  Interval & 0.3 & 0.648 & 1.401 & 0.594 & 1.510 & 0.016 & 1.205 & 1.152 & 1.032 \\ 
  Length & 0.4 & 0.650 & 1.440 & 0.595 & 1.507 & 0.040 & 1.329 & 1.146 & 1.033 \\ 
   & 0.5 & 0.655 & 1.428 & 0.601 & 1.484 & 0.082 & 1.400 & 1.160 & 1.036 \\ 
   & 0.6 & 0.659 & 1.449 & 0.607 & 1.563 & 0.208 & 1.540 & 1.161 & 1.040 \\ 
   & 0.7 & 0.665 & 1.476 & 0.606 & 1.513 & 0.432 & 1.650 & 1.167 & 1.043 \\ 
   & 0.8 & 0.670 & 1.533 & 0.617 & 1.565 & 0.606 & 1.745 & 1.172 & 1.051 \\ 
   & 0.9 & 0.675 & 1.544 & 0.626 & 1.569 & 0.807 & 1.903 & 1.187 & 1.055 \\ 
   \hline
\end{tabular}}
\caption{Coverage and interval length corresponding to sample-based standard errors for each method considered for $n=500, d=600$ under signal strengths $\theta_0 \in \{0,0.1,0.2,\dots,0.9\}$.}
\label{n-500}
\end{table}

\begin{table}[ht]
\centering
\resizebox{0.9\linewidth}{!}{%
\begin{tabular}{|c|c|c|c|c|c|c|c|c|c|}
  \hline
  \multicolumn{2}{|c|}{}& \multicolumn{4}{|c|}{Frequentist}& \multicolumn{3}{|c|}{Bayesian} & \multicolumn{1}{|c|}{}\\
  \hline
$(n,d)$ & $\theta_0$ & DS & NAIVE & WLP & LSW & BMA & BLASSO & CB & ORACLE   \\
  \hline
 & 0.0 & 0.124 & 0.014 & 0.163 & 0.028 & 0.000 & 0.196 & 0.084 & 0.011 \\ 
   & 0.1 & 0.072 & 0.021 & 0.128 & 0.016 & 0.100 & 0.148 & 0.078 & 0.002 \\ 
   & 0.2 & 0.022 & 0.102 & 0.097 & 0.016 & 0.199 & 0.134 & 0.064 & 0.015 \\ 
   & 0.3 & 0.001 & 0.113 & 0.044 & 0.069 & 0.300 & 0.091 & 0.084 & 0.003 \\ 
  n=400 & 0.4 & 0.015 & 0.165 & 0.014 & 0.084 & 0.398 & 0.081 & 0.098 & 0.000 \\ 
  d=500 & 0.5 & 0.060 & 0.281 & 0.012 & 0.078 & 0.493 & 0.098 & 0.071 & 0.029 \\ 
   & 0.6 & 0.057 & 0.355 & 0.056 & 0.121 & 0.581 & 0.105 & 0.115 & 0.004 \\ 
   & 0.7 & 0.098 & 0.359 & 0.078 & 0.124 & 0.627 & 0.132 & 0.112 & 0.003 \\ 
   & 0.8 & 0.116 & 0.426 & 0.091 & 0.144 & 0.650 & 0.196 & 0.124 & 0.011 \\ 
   & 0.9 & 0.100 & 0.536 & 0.118 & 0.145 & 0.628 & 0.254 & 0.134 & 0.020 \\ 
   \hline
\hline
 & 0.0 & 0.036 & 0.005 & 0.142 & 0.009 & 0.000 & 0.164 & 0.074 & 0.008 \\ 
   & 0.1 & 0.053 & 0.033 & 0.104 & 0.018 & 0.100 & 0.118 & 0.074 & 0.006 \\ 
   & 0.2 & 0.100 & 0.111 & 0.060 & 0.011 & 0.200 & 0.093 & 0.067 & 0.003 \\ 
  n=500 & 0.3 & 0.120 & 0.162 & 0.042 & 0.057 & 0.299 & 0.071 & 0.068 & 0.010 \\ 
  d=600 & 0.4 & 0.165 & 0.116 & 0.005 & 0.076 & 0.397 & 0.067 & 0.069 & 0.007 \\ 
   & 0.5 & 0.192 & 0.201 & 0.042 & 0.093 & 0.493 & 0.042 & 0.091 & 0.015 \\ 
   & 0.6 & 0.169 & 0.280 & 0.076 & 0.112 & 0.562 & 0.064 & 0.090 & 0.001 \\ 
   & 0.7 & 0.200 & 0.322 & 0.082 & 0.097 & 0.571 & 0.106 & 0.093 & 0.005 \\ 
   & 0.8 & 0.206 & 0.402 & 0.117 & 0.136 & 0.556 & 0.121 & 0.116 & 0.009 \\ 
   & 0.9 & 0.229 & 0.463 & 0.117 & 0.126 & 0.479 & 0.231 & 0.108 & 0.019 \\ 
   \hline
\end{tabular}}
\caption{Bias for different methods under signal strengths $\theta_0 \in \{0,0.1,0.2,\dots,0.9\}$.}
\label{bias}
\end{table}

In this section, we compare the results from the simulation study for all the methods considered. Table~\ref{n-400} and Table~\ref{n-500} contain the coverage and interval lengths of methods under comparison for $(n=400,d=500
)$ and $(n=500,d=600)$ cases, respectively. Table~\ref{bias} contains the bias of the estimates produced by the methods under comparison based on 1000 Monte Carlo simulations.

Based on Tables~\ref{n-400}-\ref{bias}, we observe that the coverage of the proposed method (CB) is close to the desired coverage. The length of the intervals produced by the proposed method is slightly larger than the optimal ones obtained by ORACLE. While the bias attained by the proposed method is slightly larger than the ones obtained by ORACLE, it is smaller compared to the other methods for the most signal values. The coverage of BLASSO is at least as much as the nominal coverage, however, the bias and the length of the interval produced is much larger, especially for larger signal values. On the other hand, NAIVE and BMA suffer from under-coverage and large bias for most signal values. WLP attains smaller bias but suffers from under-coverage for most signal values possibly because the length of the intervals produced is much smaller compared to the optimal ones by ORACLE. Among the frequentist methods, LSW attains better coverage with the interval length and bias similar to the proposed method. However, for most signal values LSW has over-coverage and in the case of larger number of covariates (Table~\ref{n-500}), it yields intervals with much larger lengths compared to CB and ORACLE. 
In summary, the proposed method demonstrates
a competitive overall performance in terms of achieving the desired coverage, smaller
interval length and bias.

\subsection{Synthetic Data Analysis}

In this section, we examine the empirical performance of the proposed method using real data from the Natality birthweight dataset \citep{natality}. Our objective is to provide valid inference regarding the influence of mothers' smoking habits on infant mortality. Since the true impact of mothers' smoking habits on infant mortality is unknown, we simulate binary responses based on the covariates from the dataset to facilitate the calculation and comparison of coverage and bias. We generate these responses using information from studies on the impact of various maternal factors on infant mortality rate \citep{sabaphillip}. The generated output serves as a health indicator rather than the actual mortality rate and allows us to control the influence of mothers' smoking habits on the output.

In this synthesized output scenario, we incorporate the mother's smoking covariate from the real data. It's important to note that the working model \eqref{gamma_working} between the target variable (mothers' smoking in this case) and nuisance covariates (maternal, socioeconomic, and prenatal factors) might not strictly apply. 

We aim for sparse nuisance coefficients that control the impact of nuisance covariates on the health indicator. We select a few relevant covariates from the real data to have non-zero coefficients that have been demonstrated to significantly affect infant mortality rates \citep{sabaphillip}. Additionally, we introduce some randomly correlated noise as extra covariates to the model for a comprehensive analysis.
 
We generate our output based on the following model:
\begin{equation}
    P[Y_i=1 \mid X_i,\textbf{Z}_i] = \logit(\theta_* X_i + \beta_*^T \textbf{Z}_i + \mathbf{0}^T\mathbf{N_i}).
\end{equation}
Here $Y_i$ is the health indicator, $X_i$ is the mothers' smoking covariate and $\mathbf{Z_i}$ is 79 dimensional vector chosen from the birthweight dataset which consists of maternal factors, birth-related covariates, and $\mathbf{N_i}$ are some extra correlated noise generated according to $\mathbf{N_i} \sim N(0,H_{d-80 \times d-80})$ with $H_{ij} = 0.5^{|i-j|}$. $\beta_*$ is the sparse nuisance parameter that we choose based on the results discussed in \citep{sabaphillip}. To be more specific, the non-zero entries of $\beta_*$ correspond to factors such as the infant birthweight, BMI of the mother, obesity status of the mother, and age. In total, 4 elements of $\beta_*$ are non-zero. We choose $\theta_*$ from $\{-0.25, -0.4\}$ and perform Monte Carlo simulations. In each iteration, we draw a sub-sample of size $n = 500$ and generate a health indicator based on the model described above. We fit the model using the methods discussed in Section~\ref{methods} except BLASSO (which we did not include due to computational issues). Based on 1000 Monte Carlo results, we compare $95 \%$ coverage, interval length, and bias. 
\subsection{Discussion of results}

\begin{table}[ht]
\centering
\resizebox{\linewidth}{!}{%
\begin{tabular}{|c|c|c|c|c|c|c|c|c|}
  \hline
  \multicolumn{2}{|c|}{}& \multicolumn{3}{|c|}{Frequentist}& \multicolumn{2}{|c|}{Bayesian} & \multicolumn{1}{|c|}{}\\
  \hline
Quantities & $\theta_*$ & NAIVE & WLP & LSW & BMA & CB & ORACLE \\ 
  \hline
Coverage & -0.25 & 0.925 & 0.488 & 0.959 & 0.000 & 0.959 & 0.945 \\ 
   & -0.4 & 0.924 & 0.480 & 0.952 & 0.000 & 0.955 & 0.953 \\ 
   \hline
\hline
Interval & -0.25 & 1.054 & 1.013 & 1.702 & 0.004 & 1.237 & 0.990 \\ 
  Length & -0.4 & 1.054 & 1.589 & 1.705 & 0.001 & 1.226 & 0.992 \\ 
   \hline
\hline
Bias & -0.25 & 0.054 & 0.337 & 0.039 & 0.250 & 0.088 & 0.012 \\ 
   & -0.4 & 0.088 & 0.339 & 0.065 & 0.400 & 0.093 & 0.011 \\ 
   \hline
\end{tabular}}
\caption{Coverage, interval length, and bias corresponding to sample-based standard errors for each method considered for $n=300, d=400$ under signal strengths $\theta_* \in \{-0.25,-0.4\}$.}
\label{synth_300}
\end{table}

\begin{table}[ht]
\centering
\resizebox{\linewidth}{!}{%
\begin{tabular}{|c|c|c|c|c|c|c|c|c|}
  \hline
  \multicolumn{2}{|c|}{}& \multicolumn{3}{|c|}{Frequentist}& \multicolumn{2}{|c|}{Bayesian} & \multicolumn{1}{|c|}{}\\
 \hline
Quantities & $\theta_*$ & NAIVE & WLP & LSW & BMA & CB & ORACLE \\ 
  \hline
Coverage & -0.25 & 0.922 & 0.387 & 0.949 & 0.001 & 0.955 & 0.944 \\ 
   & -0.4 & 0.903 & 0.380 & 0.953 & 0.000 & 0.960 & 0.954 \\ 
   \hline
\hline
Interval & -0.25 & 0.906 & 0.987 & 1.488 & 0.003 & 1.036 & 0.852 \\ 
  Length & -0.4 & 0.907 & 1.248 & 1.477 & 0.003 & 1.038 & 0.855 \\ 
   \hline
\hline
Bias & -0.25 & 0.066 & 0.340 & 0.056 & 0.250 & 0.083 & 0.005 \\ 
   & -0.4 & 0.057 & 0.332 & 0.056 & 0.400 & 0.079 & 0.011 \\ 
   \hline
\end{tabular}}
\caption{Coverage, interval length, and bias corresponding to sample-based standard errors for each method considered for $n=400, d=500$ under signal strengths $\theta_* \in \{-0.25,-0.4\}$.}
\label{synth_400}
\end{table}

In this section, we compare the results for all the methods considered. Table~\ref{synth_300} and Table~\ref{synth_400} contain coverage, length of the interval, and bias obtained from the methods under consideration for the $(n=300,d=400)$ and $(n=400,d=500)$ cases, respectively. Based on Tables~\ref{synth_300}-\ref{synth_400}, we observe that the coverage attained by the proposed method (CB) is similar to the desired coverage. The length of the intervals and the bias of the proposed method are slightly larger than the optimal ones produced by ORACLE. On the other hand, NAIVE, WLP, and BMA suffer from under-coverage. The bias is much larger in the case of WLP and BMA. Among the frequentist methods, LSW attains the desired coverage with a small bias. When compared to the proposed method, LSW produces intervals with much larger lengths. In summary, the proposed method demonstrates a competitive overall performance in terms of achieving the desired coverage. 

\section{Real Data Analysis - Chronic Kidney Disease }\label{ckd_analysis}

We revisit the CKD data \citep{ckd_data} described briefly for the motivating example in Section~\ref{motivating_eg}. This dataset comprises information on 400 patients, with 246 of them diagnosed with Chronic Kidney Disease (CKD) and 135 diagnosed with diabetes. Each patient's record includes numerical features such as blood pressure, age, blood glucose level, blood urea level, etc. Additionally, the dataset incorporates categorical features like the presence of coronary heart disease, anemia, albumin levels, etc.

In the pre-processing phase, missing value-containing rows were removed. Consequently, the final dataset for analysis comprises 208 patients, out of which 91 have been diagnosed with Chronic Kidney Disease (CKD), and 56 have been diagnosed with diabetes. Categorical features were transformed into dummy variables, while numeric features were replaced with their respective spline basis functions. Additionally, interactions among all the nuisance covariates were taken into account. Consequently, our analysis incorporates a total of 2253 features, including the binary covariate indicating the presence of diabetes.

\subsection{Goal of the Study}

Our aim is to quantify the association between CKD and the presence of diabetes while controlling for all the other features in the analysis. Formally, we assume the following logistic model for $i=1,2, \dots, 208,$:
\begin{align}\label{ckd_eqn}
    P(\text{CKD} = 1 \mid \text{diabetes}, f_1, \dots, f_{2252}) = 
    (1 + \exp{-(\alpha +  \theta*\text{diabetes} + \sum_{j=1}^{2252} \alpha_j f_j)})^{-1}, 
\end{align}
where $\text{CKD} = 1$ and $\text{CKD}=0$ represent the presence and absence of the CKD, respectively. Similarly, $\text{diabetes} =1$ and $\text{diabetes} =0$ represent the presence and absence of the diabetes, respectively. All the nuisance features including the interactions are denoted by $f_1, f_2, \dots, f_{2252}$. Our aim is to estimate $\theta$ in Equation~\eqref{ckd_eqn} and provide a valid interval estimator for $\theta$.

\subsection{Results and Discussion}

\begin{table}[ht]
\centering
\begin{tabular}{|c|c|c|c|c|}
  \hline
Methods & Estimate for $\theta$& SE of the Estimate & LOWER & UPPER \\
  \hline
    CB & 0.693 & 0.392 & 0.191 & 1.195 \\ 
  BMA & 0.000 & 0.000 & 0.000 & 0.000 \\ 
  \hline
  LSW & -0.094 & 0.310 & -0.491 & 0.303 \\ 
  WLP & 0.451 & 0.221 & 0.168 & 0.734 \\ 
  NAIVE & 0.186 & 0.340 & -0.250 & 0.622 \\ 
  DS & 0.810 & 0.220 & 0.528 & 1.092 \\ 
   \hline
\end{tabular}
\caption{Estimates for $\theta$, standard errors (SE), lower and upper ends of interval estimators (90\% coverage) for CKD data analysis.}
\label{ckd_results}
\end{table}

 We compare the proposed method with the Bayesian Model Averaging (BMA), LSW, and WLP that are implemented as described in Section~\ref{simulation}. BLASSO was not included in this part of the study since it exhibited significantly longer convergence times. To facilitate a computationally efficient implementation of the proposed method, we obtain the nuisance estimates via LASSO and obtain the samples of the parameter of interest, $\theta$ via the proposed conditional posterior. This serves as a demonstration of effectiveness of the proposed conditional posterior in such settings when $d >>n$.

 Table~\ref{ckd_results} presents the results of our analysis of the association between diabetes and CKD. For each method under comparison, we report the estimate for $\theta$, the corresponding standard errors, lower ends, and the upper ends of the confidence intervals.

We observe that the inference obtained by the proposed method is strong and positive which corroborates the trend observed in the medical studies \citep{dm3,dm2,dm1}. However, other methods fail to detect this association.

\section{Extension to the Ordinal Treatment Setting}\label{ordinal_extension}

We assume that $X$ has $(K+1)$ categories such that $K$ is fixed, finite, and much smaller than the sample size $n$. We define the dummy covariates $X^1, X^2, \dots, X^K$ such that for $j=1, \dots, K$ and $i=1, \dots, n$:

\begin{equation}\label{dummy}
 X_i = (X^1_i,X^2_i,\dots,X^K_i), \text{ where }   X^j_i =
\begin{cases}
    1,& \text{if } X_i = j+1\\
    0,              & \text{otherwise}
\end{cases}
\end{equation}
Based on the definitions of the dummy covariates \eqref{dummy}, we assume the following logistic model for $i= 1, \dots, n$:

\begin{align}\label{logistic_categorical}
P[Y_i=1 \mid X_i,Z_i, \theta, \beta] &= \frac{\exp(X_i^T\theta + Z_i^T\beta)}{1+\exp(X_i^T\theta + Z_i^T\beta)},\nonumber \\
&= \frac{\exp(\sum_{j=1}^K X^j_i\theta^j + Z_i^T\beta)}{1+\exp(\sum_{j=1}^K X^j_i\theta^j + Z_i^T\beta)}.
\end{align}
In the case of categorical $X$, the parameter of interest is a $K$-dimensional vector $\theta = (\theta^1,\theta^2,\dots, \theta^K)$. We represent the corresponding oracle quantity as $\theta_0 = (\theta_0^1,\theta_0^2,\dots,\theta_0^K)$. For a $K-$dimensional vector $v= (v^1, v^2,\dots, v^K)$, we use the notation $v^{(-j)}$ to represent all the components of $v$ except $v^j$. The score function for $\theta^j$ is given by (for $j=1,\dots, K)$:
\begin{equation}\label{score_og_categorical}
    \varphi_{usual}(\theta^j \mid X,Y,Z,\theta^{(-j)},\beta) = \sum_{i=1}^n \frac{\partial}{\partial \theta^j}\log P[Y_i =1 \mid X_i,Z_i, \theta, \beta] = X^T(Y-\mu),
\end{equation}
where $\mu_{n \times 1}$ is the logistic mean vector with $\mu_i = \exp(X_i^T\theta + Z_i^T\beta)/(1 +\exp(X_i^T\theta + Z_i^T\beta) )$.
Consider $W_{n\times n}$ a diagonal matrix with diagonal elements defined as $W_{i,i} = \exp(X_i^T\theta_0 + Z_i^T\beta_0)/(1 +\exp(X_i^T\theta_0 + Z_i^T\beta_0) )^2$. Expected value of the derivative of this score \eqref{score_og_categorical} with respect to the nuisance parameter $\beta$, when evaluated at $\theta_0$ and $\beta_0$ leads to (for $j=1,\dots, K$):
\begin{equation}\label{usual_derivative2}
    E\bigg[\frac{\partial}{\partial \beta}\varphi_{usual}(\theta^j \mid X,Y,Z,\theta^{(-j)},\beta)\big|_{\theta_0,\beta_0}\bigg] = E[-(X^j)^TW Z]\neq 0.
\end{equation}
Therefore, the usual score for each component of the parameter of interest does not have the orthogonal property. Each dummy covariate $X^j$ can be treated similar to the binary covariate setting. With the same definition of $\mu$ as in \eqref{score_og_categorical}, consider the following score for each component $\theta^j$ of parameter of interest (for $j=1,2,\dots,K$):

\begin{equation}\label{score_proposal_categorical}
    \psi(\theta^j \mid X,Y,Z,\beta) = (X- h_0^j)^T(Y- \mu),
\end{equation}
where $h_0^j:= (h_0(X_1^{(-j)},Z_1), h_0(X_2^{(-j)},Z_2), \dots, h_0(X_n^{(-j)},Z_n))$ is an extra function of the nuisance covariates and all the dummy covariates except $X^j$ that has been introduced to impose orthogonality. We desire the following property to achieve the orthogonality for this proposed score:
\begin{equation}\label{orthogonality_}
    E\big[\partial_{\beta}(\psi(\theta^j \mid X,Y,Z,\beta))\big|_{\theta_0,\beta_0}\big] =  0.
\end{equation}

The following expression for the function $h_0(X_i^{(-j)},Z_i)$ (for $i = 1,\dots, n$) leads to the desired orthogonal property for $\theta^j$:

\begin{align}\label{h_fn_categorical}
    h_0(X_i^{(-j)},Z_i) &=
    \frac{1}{1 + R_{0i}^j},\nonumber\\
    R_{0i}^j = &\dfrac{P(Y_i=1\mid X_i^j=0, X^{(-j)},Z_i)P(Y_i=0\mid X_i^j=0, X^{(-j)},Z_i)P(X_i^j=0\mid Z_i)}{P(Y_i=1\mid X_i^j=1, X^{(-j)},Z_i)P(Y_i=0\mid X_i^j=1, X^{(-j)},Z_i)P(X_i^j=1\mid Z_i)}.
\end{align}
where the quantities $P(Y_i=1\mid X_i^j=1, X^{(-j)},Z_i)$ and $P(Y_i=1\mid X_i^j=0, X^{(-j)},Z_i)$ are evaluated at the oracle values of the model parameters $(\theta_0,\beta_0)$. We have suppressed the use of extra notations in their respective conditional expression.
Given the observed data $(Y,X,Z)$, we do not have the knowledge of $P(Y_i = 1 \mid X_i, Z_i)$ and $P(X_i^j = 1 \mid Z_i)$. These two quantities are sufficient to obtain an estimator of $h_0^j$. 
Similar to the binary case, the expression for $h_0^j$ in \eqref{h_fn_categorical} suggests that it represents the conditional expectation of $X^j$, weighted by the variances of the binary output in the two sub-classes of $X^j$. Given this characteristic, we refer to the function $h_0^j$ as a variance-weighted projection. 

\subsection{Estimator of $P(Y_i = 1 \mid X_i, Z_i)$}
We follow similar steps as in the binary $X$ setting to obtain an estimator for $P(Y_i = 1 \mid X_i, Z_i)$. We use a sample of the parameters $(\Tilde{\theta}, \beta)$ to obtain estimators for $P(Y_i = 1 \mid X_i, Z_i)$. The working model for $(\tilde{\theta},\beta)$ is the same as the original logistic model given by \eqref{logistic_categorical}. Consequently, we obtain samples of $(\tilde{\theta},\beta)$ based on the following conditional posterior:
\begin{equation}\label{beta_posterior_categorical}
\tag{$(\tilde{\theta}, \beta)$-\textbf{CP}}
\boxed{
    f((\tilde{\theta}, \beta) \mid Y,X, Z) \propto \pi(\tilde{\theta})\pi( \beta)\prod_{i=1}^{n}\frac{\exp(X_i^T\tilde{\theta} + Z_i^T\beta)^{
    Y_i}}{1 + \exp(X_i^T\tilde{\theta}+ Z_i^T\beta)},}
\end{equation}
where $\pi(\tilde{\theta})$ and $\pi( \beta)$ are the priors on the model parameters $\theta$ and $\beta$, respectively. The tilde notation in $\Tilde{\theta}$ emphasizes that this sample is different from the final sample used for inference on the parameter of interest defined later by Equation \eqref{theta_posterior}.

The estimators for $P(Y_i = 1 \mid X_i, Z_i)$ corresponding to $X_i^j=1$ and $X_i^j = 0$, respectively are defined as:
\begin{align}\label{p_y_def_categorical}
    P(Y_i = 1 \mid X_i^j = 1, X^{(-j)}, Z_i, \tilde{\theta},\beta) &= \frac{\exp(\tilde{\theta^j} + (X^{(-j)})^T\tilde{\theta}^{(-j)}+ Z_i^T\beta)}{1+\exp(\tilde{\theta^j} + (X^{(-j)})^T\tilde{\theta}^{(-j)}+ Z_i^T\beta)} \nonumber \text{, and } \\
    P(Y_i = 1 \mid X_i = 0, X^{(-j)}, Z_i,\tilde{\theta},\beta) &= \frac{\exp((X^{(-j)})^T\tilde{\theta}^{(-j)}+ Z_i^T\beta)}{1+\exp((X^{(-j)})^T\tilde{\theta}^{(-j)}+ Z_i^T\beta)}.
\end{align}
It is worth noting that the estimators in \eqref{p_y_def_categorical} can be calculated simultaneously for all the dummy covariates based on the sample of $(\tilde{\theta},\beta)$.

\subsection{Estimator for the propensity scores: $P(X^j_i=1 \mid Z_i)$}

We introduce another set of nuisance parameters $\gamma^j$ (for $j=1, \dots, K$) to model the dependence between $X^j$ and $Z$ (for $j=1 \dots, K$, respectively) and therefore, obtain an estimator for $P(X_i^j=1 \mid Z_i)$. The working model for parameter $\gamma^j$ is given by the following logistic model for $i = 1, \dots, n$:

\begin{equation}\label{gamma_working_categorical}
\tag{$\gamma^j$-\textbf{WM}}
    P(X_i^j = 1 \mid Z_i, \gamma^j) = \frac{\exp( Z_i^T\gamma^j)}{1+\exp(Z_i^T\gamma^j)}.
\end{equation}
Motivated by\eqref{gamma_working_categorical}, we sample  $\gamma_j$ based on the following conditional posterior:

\begin{equation}\label{gamma-posterior-categorical}
\tag{$\gamma^j$-\textbf{CP}}
\boxed{
    f(\gamma^j \mid X,Z) \propto \pi(\gamma^j)\prod_{i=1}^{n}\frac{\exp(Z_i^T\gamma^j)^{
    X_i}}{1 + \exp(Z_i^T\gamma^j)},}
\end{equation}
where $\pi(\gamma^j)$ is prior on the parameter $\gamma^j$. Based on a sample of $\gamma$ from \eqref{gamma-posterior-categorical}, the estimators for the propensity scores are given by the expression of the working model \eqref{gamma_working_categorical}.

\subsection{Estimator of the variance-weighted projection}

Based on a sample of $(\tilde{\theta}, \beta)$ from \eqref{beta_posterior}, we can obtain the estimators for $P(Y_i=1 \mid X_i^j=1, X_i^{(-j}, Z_i)$ and $P(Y_i=1 \mid X_i^j=0, X_i^{(-j},Z_i)$ as defined in Equation~\eqref{p_y_def_categorical}. Moreover, based on a sample of $\gamma^j$ from \eqref{gamma-posterior-categorical}, we can obtain an estimator for $P(X_i^j =1 \mid Z_i)$ as defined through the working model \eqref{gamma_working_categorical}. For $j= 1, \dots, K$, we can obtain an estimator for $h_0^j$ which is denoted by $h^j$ and is defined as (for $i=1,\dots,n$):

\begin{align}\label{h_estimate_categorical}
    h^j_i &= \frac{1}{1+ R_i^j},
    \nonumber\\
    R_i^j = &\dfrac{P(Y_i=1\mid X_i^j=0, X_i^{(-j)},Z_i, \tilde{\theta},\beta)P(Y_i=0\mid X_i^j=0, X_i^{(-j)},Z_i, \tilde{\theta},\beta)P(X_i^j=0\mid Z_i, \gamma^j)}{P(Y_i=1\mid X_i^j=1, X_i^{(-j)}, Z_i,\tilde{\theta},\beta)P(Y_i=0\mid X_i^j=1, X_i^{(-j)}, Z_i,\tilde{\theta},\beta)P(X_i^j=1\mid Z_i,\gamma^j)}.
\end{align}
In summary, a sample for the estimator of the variance-weighted projection is obtained from a sample of the model parameters $(\tilde{\theta}, \beta)$, which is drawn from \eqref{beta_posterior_categorical}, and samples of $\{\gamma^1,\gamma^2, \dots,\gamma^K\}$ drawn from \eqref{gamma-posterior-categorical} (for $j=1,\dots, K)$.

\subsection{Working model for the parameter of interest $\theta$}

We first define a re-parameterized nuisance quantity based on the samples $(\tilde{\theta},\beta)$ \eqref{beta_posterior_categorical} and the estimator $h_i$ defined as $h_i = (h_i^1, h_i^2, \dots, h_i^K)$ \eqref{h_estimate_categorical} given by:

\begin{equation}\label{phi_def_categorical}
   \phi = (\phi_1, \dots, \phi_n), \text{ where } \phi_i = \sum_{j=1}^K\tilde{\theta}^jh_i^j + Z_i^T\beta = h_i^T\tilde{\theta} + Z_i^T\beta, \text{ for } i=1, \dots,n.
\end{equation}
Note that we use $\tilde{\theta} $ in the estimation of this quantity $\phi$ as it is sampled based on \eqref{beta_posterior} and is different from the conditional posterior for $\theta$ used for final inference. This makes the posterior concentration for $\phi$ and $\theta$ not depend on each other.
We assume the following working model for the parameter of interest $\theta$ conditional on $h(Z)$  and $\phi$:

\begin{equation}\label{theta_working_categorical}
\tag{$\theta$-\textbf{WM}}
    P(Y_i=1 \mid X_i,Z_i, h_i,\phi_i, \theta) = \frac{\exp\{(X_i - h_i)^T\theta + \phi_i\}}{1 + \exp\{(X_i - h_i)^T\theta + \phi_i\} },
\end{equation}
for $i=1, \dots, n$. 
The score function for $\theta^j$ corresponding to \eqref{theta_working_categorical} has the orthogonal property w.r.t. $h^j$ and $\phi$ for each $j =1, 2, \dots, K$. This fact can be established based on the definition of the variance-weighted projection \eqref{h_fn_categorical} and the original model assumption \eqref{logistic_categorical}.

\subsubsection{Conditional Posterior for $\theta$:}

Based on the working model \eqref{theta_working}, we obtain a sample for the parameter of interest $\theta$ based on the following conditional posterior:
\begin{equation}\label{theta_posterior_categorical}
    f(\theta \mid Y_i, X_i, Z_i, h_i, \phi_i) \propto \pi(\theta) \prod_{i=1}^{n}\frac{\exp\{(X_i - h_i)^T\theta + \phi_i\}^{Y_i}}{1 + \exp\{(X_i - h_i)^T\theta + \phi_i\} }.
\end{equation}
where $\pi(\theta)$ is the prior distribution on the parameter of interest $\theta.$

The sampling scheme and the theoretical results can easily be extended to the case of categorical $X$. 

\section{A generalized expression of variance-weighted projection}\label{general_form}

We assume the following logistic model for $i= 1, \dots, n$:

\begin{equation*}
    P[Y_i=1 \mid X_i,Z_i, \theta, \beta] = \frac{\exp(X_i^T\theta + Z_i^T\beta)}{1+\exp(X_i^T\theta + Z_i^T\beta)}.
\end{equation*}

We consider the following proposal for an orthogonal score using a variance-weighted projection $h_0(Z)$: 

\begin{equation*}\label{score_proposal_general}
    \psi(\theta \mid X,Y,Z,\beta) = (X- h_0(Z))^T(Y- \mu),
\end{equation*}
where $h_0(Z):= (h_0(Z_1), h_0(Z_2), \dots, h_0(Z_n))$. We desire the following property to achieve the orthogonality for this proposed score:
\begin{equation}\label{orthogonality_general}
    E\big[\partial_{\beta}(\psi(\theta \mid X,Y,Z,\beta))\big|_{\theta_0,\beta_0}\big] =  0.
\end{equation}
Upon expanding the expected derivative in \eqref{orthogonality_general}, we obtain the expression:

\begin{equation*}
    E\bigg[\frac{\partial}{\partial \beta}\psi(\theta| X,Y,Z,\beta)\big|_{\theta_0,\beta_0}\bigg] = -\sum_{i=1}^n E\big[(X_i - h_0(Z_i))\mu_{0i}(1-\mu_{0i})\big],
\end{equation*}
where $\mu_{0i} = P(Y_i =1 \mid X_i, Z_i, \theta_0, \beta_0)$.
Therefore, to attain the orthogonal property from \eqref{orthogonality_general}, it is sufficient to satisfy the following equation by the tower property of conditional expectations:
\begin{equation}\label{h0_derivation_general}
E\big[ (X_i - h_0(Z_i))\mu_{0i}(1-\mu_{0i})\mid Z_i\big] = 0.
\end{equation}
Consequently, we can obtain a general expression for the variance-weighted projection $h_0(Z)$ based on \eqref{h0_derivation_general} as follows:

\begin{equation}\label{h_0_general}
    h_0(Z_i) = \frac{E\big[ X_i Var(Y_i \mid X_i, Z_i) \mid Z_i\big]}{E\big[ Var(Y_i \mid X_i, Z_i) \mid Z_i\big]}.
\end{equation}

In the specific setting where $X_i \in \{0,1, 2, \dots, K\}$ with $K$ being finite and small positive integer, we obtain the following explicit expression for the variance-weighted projection $h_0(Z)$:

\begin{equation}
    h_0(Z_i) = \frac{\sum_{k=1}^K k Var(Y_i \mid X_i = k, Z_i) P(X_i = k \mid Z_i)}{\sum_{k=1}^K Var(Y_i \mid X_i = k, Z_i) P(X_i = k \mid Z_i)}.
\end{equation}

In practice, $Var(Y_i \mid X_i, Z_i)$ and $P(X_i \mid Z_i)$ are unknown quantities. Therefore, we need to devise ways to obtain estimators for these quantities. In general settings, obtaining a closed-form expression for the conditional expectations in \eqref{h_0_general} can be difficult. One can attempt to use other modeling techniques to obtain an estimator for these conditional expectations.

\section{Conclusion}
\label{conclusion}

\bibliographystyle{asa}

\bibliography{references}

\begin{thebibliography}{82}
\newcommand{\enquote}[1]{``#1''}
\expandafter\ifx\csname natexlab\endcsname\relax\def\natexlab#1{#1}\fi

\bibitem[{Agresti(2012)}]{agresti2012categorical}
Agresti, A. (2012), \textit{Categorical data analysis}, vol. 792, John Wiley \& Sons.

\bibitem[{Antonelli et~al.(2022)Antonelli, Papadogeorgou, and Dominici}]{antonelli2022causal}
Antonelli, J., Papadogeorgou, G., and Dominici, F. (2022), \enquote{Causal inference in high dimensions: a marriage between Bayesian modeling and good frequentist properties,} \textit{Biometrics}, 78, 100--114.

\bibitem[{Armagan et~al.(2013)Armagan, Dunson, and Lee}]{armagan2013generalized}
Armagan, A., Dunson, D.~B., and Lee, J. (2013), \enquote{Generalized double {P}areto shrinkage,} \textit{Statistica Sinica}, 23, 119.

\bibitem[{Bach(2010)}]{bach2010self}
Bach, F. (2010), \enquote{Self-concordant analysis for logistic regression,} \textit{Electronic Journal of Statistics}, 4, 384--414.

\bibitem[{Belloni et~al.(2012)Belloni, Chen, Chernozhukov, and Hansen}]{belloni2012sparse}
Belloni, A., Chen, D., Chernozhukov, V., and Hansen, C. (2012), \enquote{Sparse models and methods for optimal instruments with an application to eminent domain,} \textit{Econometrica}, 80, 2369--2429.

\bibitem[{Belloni et~al.(2010)Belloni, Chernozhukov, and Hansen}]{belloni2010lasso}
Belloni, A., Chernozhukov, V., and Hansen, C. (2010), \enquote{Lasso methods for {G}aussian instrumental variables models,} \textit{arXiv preprint arXiv:1012.1297}.

\bibitem[{Belloni et~al.(2014)Belloni, Chernozhukov, and Hansen}]{belloni2014inference}
--- (2014), \enquote{Inference on treatment effects after selection among high-dimensional controls,} \textit{The Review of Economic Studies}, 81, 608--650.

\bibitem[{Belloni et~al.(2013)Belloni, Chernozhukov, and Wei}]{logistic_dml}
Belloni, A., Chernozhukov, V., and Wei, Y. (2013), \enquote{{Honest confidence regions for a regression parameter in logistic regression with a large number of controls},} CeMMAP working papers CWP67/13, Centre for Microdata Methods and Practice, Institute for Fiscal Studies.

\bibitem[{Berger et~al.(1999)Berger, Liseo, and Wolpert}]{berger1999integrated}
Berger, J.~O., Liseo, B., and Wolpert, R.~L. (1999), \enquote{Integrated likelihood methods for eliminating nuisance parameters,} \textit{Statistical science}, 1--22.

\bibitem[{Berk et~al.(2013)Berk, Brown, Buja, Zhang, and Zhao}]{berk2013valid}
Berk, R., Brown, L., Buja, A., Zhang, K., and Zhao, L. (2013), \enquote{Valid post-selection inference,} \textit{The Annals of Statistics}, 802--837.

\bibitem[{Bhattacharya et~al.(2015)Bhattacharya, Pati, Pillai, and Dunson}]{bhattacharya2015dirichlet}
Bhattacharya, A., Pati, D., Pillai, N.~S., and Dunson, D.~B. (2015), \enquote{Dirichlet--{L}aplace priors for optimal shrinkage,} \textit{Journal of the American Statistical Association}, 110, 1479--1490.

\bibitem[{Bondell and Reich(2012)}]{bondell2012consistent}
Bondell, H.~D. and Reich, B.~J. (2012), \enquote{Consistent high-dimensional {B}ayesian variable selection via penalized credible regions,} \textit{Journal of the American Statistical Association}, 107, 1610--1624.

\bibitem[{Breheny and Huang(2011)}]{breheny2011coordinate}
Breheny, P. and Huang, J. (2011), \enquote{Coordinate descent algorithms for nonconvex penalized regression, with applications to biological feature selection,} \textit{The Annals of Applied Statistics}, 5, 232.

\bibitem[{Bunea(2008)}]{bunea2008honest}
Bunea, F. (2008), \enquote{Honest variable selection in linear and logistic regression models via $l_1$ and $l_1$+ $l_2$ penalization,} \textit{Electronic Journal of Statistics}, 2, 1153--1194.

\bibitem[{Cai and Guo(2017)}]{cai2017confidence}
Cai, T.~T. and Guo, Z. (2017), \enquote{Confidence intervals for high-dimensional linear regression: Minimax rates and adaptivity,} .

\bibitem[{Cai et~al.(2021)Cai, Guo, and Ma}]{other}
Cai, T.~T., Guo, Z., and Ma, R. (2021), \enquote{Statistical inference for high-dimensional generalized linear models with binary outcomes,} \textit{Journal of the American Statistical Association}, 1--14.

\bibitem[{Candes and Tao(2007)}]{candes2007dantzig}
Candes, E. and Tao, T. (2007), \enquote{The Dantzig selector: Statistical estimation when p is much larger than n,} .

\bibitem[{Carvalho et~al.(2009)Carvalho, Polson, and Scott}]{carvalho2009handling}
Carvalho, C.~M., Polson, N.~G., and Scott, J.~G. (2009), \enquote{Handling sparsity via the horseshoe,} in \textit{Artificial Intelligence and Statistics}, PMLR, pp. 73--80.

\bibitem[{Castillo et~al.(2015)Castillo, Schmidt-Hieber, and Van~der Vaart}]{castillo2015Bayesian}
Castillo, I., Schmidt-Hieber, J., and Van~der Vaart, A. (2015), \enquote{{B}ayesian linear regression with sparse priors,} \textit{The Annals of Statistics}, 43, 1986--2018.

\bibitem[{CDC(2020)}]{diabetes_report}
CDC (2020), \textit{National Diabetes Statistics Report, 2020}, Atlanta, GA: Centers for Disease Control and Prevention, U.S. Dept of Health and Human Services.

\bibitem[{Chernozhukov et~al.(2018)Chernozhukov, Chetverikov, Demirer, Duflo, Hansen, Newey, and Robins}]{dml}
Chernozhukov, V., Chetverikov, D., Demirer, M., Duflo, E., Hansen, C., Newey, W., and Robins, J. (2018), \enquote{Double/debiased machine learning for treatment and structural parameters,} \textit{The Econometrics Journal}, 21, C1--C68.

\bibitem[{Cochran and Chambers(1965)}]{cochran1965planning}
Cochran, W.~G. and Chambers, S.~P. (1965), \enquote{The planning of observational studies of human populations,} \textit{Journal of the Royal Statistical Society. Series A (General)}, 128, 234--266.

\bibitem[{Collett(2002)}]{collett2002modelling}
Collett, D. (2002), \textit{Modelling binary data}, CRC press.

\bibitem[{Fahrmeir and Kaufmann(1985)}]{fahrmeir1985consistency}
Fahrmeir, L. and Kaufmann, H. (1985), \enquote{Consistency and asymptotic normality of the maximum likelihood estimator in generalized linear models,} \textit{The Annals of Statistics}, 13, 342--368.

\bibitem[{Fan and Li(2001)}]{scad}
Fan, J. and Li, R. (2001), \enquote{Variable selection via nonconcave penalized likelihood and its oracle properties,} \textit{Journal of the American Statistical Association}, 96, 1348--1360.

\bibitem[{Friedman et~al.(2010)Friedman, Hastie, and Tibshirani}]{friedman2010regularization}
Friedman, J., Hastie, T., and Tibshirani, R. (2010), \enquote{Regularization paths for generalized linear models via coordinate descent,} \textit{Journal of Statistical Software}, 33, 1.

\bibitem[{Genkin et~al.(2007)Genkin, Lewis, and Madigan}]{genkin2007large}
Genkin, A., Lewis, D.~D., and Madigan, D. (2007), \enquote{Large-scale {B}ayesian logistic regression for text categorization,} \textit{{T}echnometrics}, 49, 291--304.

\bibitem[{George and McCulloch(1993)}]{george1993variable}
George, E.~I. and McCulloch, R.~E. (1993), \enquote{Variable selection via {G}ibbs sampling,} \textit{Journal of the American Statistical Association}, 88, 881--889.

\bibitem[{Ghosh et~al.(2018)Ghosh, Li, and Mitra}]{ghosh2018use}
Ghosh, J., Li, Y., and Mitra, R. (2018), \enquote{On the use of {C}auchy prior distributions for {B}ayesian logistic regression,} \textit{{B}ayesian Analysis}, 13, 359--383.

\bibitem[{Hahr and Molitch(2015)}]{dm2}
Hahr, A.~J. and Molitch, M.~E. (2015), \enquote{Management of diabetes mellitus in patients with chronic kidney disease,} \textit{Clinical diabetes and endocrinology}, 1, 1--9.

\bibitem[{Hosmer~Jr et~al.(2013)Hosmer~Jr, Lemeshow, and Sturdivant}]{hosmer2013applied}
Hosmer~Jr, D.~W., Lemeshow, S., and Sturdivant, R.~X. (2013), \textit{Applied logistic regression}, vol. 398, John Wiley \& Sons.

\bibitem[{Huang and Zhang(2012)}]{huang2012estimation}
Huang, J. and Zhang, C.-H. (2012), \enquote{Estimation and selection via absolute penalized convex minimization and its multistage adaptive applications,} \textit{The Journal of Machine Learning Research}, 13, 1839--1864.

\bibitem[{Imbens and Rubin(2015)}]{imbens2015causal}
Imbens, G.~W. and Rubin, D.~B. (2015), \textit{Causal inference in statistics, social, and biomedical sciences}, Cambridge University Press.

\bibitem[{Ishwaran and Rao(2005)}]{israo}
Ishwaran, H. and Rao, J.~S. (2005), \enquote{{Spike and slab variable selection: Frequentist and {B}ayesian strategies},} \textit{The Annals of Statistics}, 33, 730 -- 773.

\bibitem[{Javanmard and Montanari(2014)}]{javanmard2014confidence}
Javanmard, A. and Montanari, A. (2014), \enquote{Confidence intervals and hypothesis testing for high-dimensional regression,} \textit{The Journal of Machine Learning Research}, 15, 2869--2909.

\bibitem[{Johnson and Rossell(2012)}]{johnson2012Bayesian}
Johnson, V.~E. and Rossell, D. (2012), \enquote{{B}ayesian model selection in high-dimensional settings,} \textit{Journal of the American Statistical Association}, 107, 649--660.

\bibitem[{Kabaila(1995)}]{kabaila1995effect}
Kabaila, P. (1995), \enquote{The effect of model selection on confidence regions and prediction regions,} \textit{Econometric Theory}, 11, 537--549.

\bibitem[{Kleinbaum et~al.(2002)Kleinbaum, Klein, and Pryor}]{kleinbaum2002logistic}
Kleinbaum, D.~G., Klein, M., and Pryor, E.~R. (2002), \textit{Logistic regression: a self-learning text}, vol.~94, Springer.

\bibitem[{Kuchibhotla et~al.(2020)Kuchibhotla, Brown, Buja, Cai, George, and Zhao}]{uposi}
Kuchibhotla, A.~K., Brown, L.~D., Buja, A., Cai, J., George, E.~I., and Zhao, L.~H. (2020), \enquote{{Valid post-selection inference in model-free linear regression},} \textit{The Annals of Statistics}, 48, 2953 -- 2981.

\bibitem[{Kwemou(2016)}]{kwemou2016non}
Kwemou, M. (2016), \enquote{Non-asymptotic oracle inequalities for the Lasso and group Lasso in high dimensional logistic model,} \textit{ESAIM: Probability and Statistics}, 20, 309--331.

\bibitem[{Leeb and P{\"o}tscher(2005)}]{leeb2005model}
Leeb, H. and P{\"o}tscher, B.~M. (2005), \enquote{Model selection and inference: Facts and fiction,} \textit{Econometric Theory}, 21, 21--59.

\bibitem[{Leeb and P{\"o}tscher(2008)}]{leeb2008sparse}
--- (2008), \enquote{Sparse estimators and the oracle property, or the return of {H}odges’ estimator,} \textit{Journal of Econometrics}, 142, 201--211.

\bibitem[{Liang et~al.(2008)Liang, Paulo, Molina, Clyde, and Berger}]{liang2008mixtures}
Liang, F., Paulo, R., Molina, G., Clyde, M.~A., and Berger, J.~O. (2008), \enquote{Mixtures of g-priors for {B}ayesian variable selection,} \textit{Journal of the American Statistical Association}, 103, 410--423.

\bibitem[{Ma et~al.(2021)Ma, Tony~Cai, and Li}]{ht}
Ma, R., Tony~Cai, T., and Li, H. (2021), \enquote{Global and simultaneous hypothesis testing for high-dimensional logistic regression models,} \textit{Journal of the American Statistical Association}, 116, 984--998.

\bibitem[{Martin et~al.(2017)Martin, Hamilton, Osterman, and et~al.}]{natality}
Martin, J.~A., Hamilton, B.~E., Osterman, M. J.~K., and et~al. (2017), \enquote{Births: Final Data for 2015,} \textit{National Vital Statistics Report}, 66.

\bibitem[{Masho and Archer(2011)}]{sabaphillip}
Masho, S.~W. and Archer, P.~W. (2011), \enquote{Does maternal birth outcome differentially influence the occurrence of infant death among {A}frican {A}mericans and {E}uropean {A}mericans?} \textit{Maternal and child health journal}, 15, 1249--1256.

\bibitem[{Narisetty and He(2014)}]{basad}
Narisetty, N.~N. and He, X. (2014), \enquote{{{B}ayesian variable selection with shrinking and diffusing priors},} \textit{The Annals of Statistics}, 42, 789 -- 817.

\bibitem[{Narisetty et~al.(2019)Narisetty, Shen, and He}]{skinny}
Narisetty, N.~N., Shen, J., and He, X. (2019), \enquote{Skinny {G}ibbs: A Consistent and Scalable {G}ibbs Sampler for Model Selection,} \textit{Journal of the American Statistical Association}, 114, 1205--1217.

\bibitem[{Neyman(1959)}]{neyman1959optimal}
Neyman, J. (1959), \enquote{Optimal asymptotic tests of composite hypotheses,} \textit{Probability and {S}tatistics}, 213--234.

\bibitem[{Neyman(1979)}]{neyman1979c}
--- (1979), \enquote{C ($\alpha$) tests and their use,} \textit{Sankhy{\=a}: The Indian Journal of Statistics, Series A}, 1--21.

\bibitem[{Nordheim and Jenssen(2021)}]{dm1}
Nordheim, E. and Jenssen, T.~G. (2021), \enquote{Chronic kidney disease in patients with diabetes mellitus,} \textit{Endocrine Connections}, 10, R151--R159.

\bibitem[{O'brien and Dunson(2004)}]{o2004bayesian}
O'brien, S.~M. and Dunson, D.~B. (2004), \enquote{Bayesian multivariate logistic regression,} \textit{Biometrics}, 60, 739--746.

\bibitem[{Ojha and Narisetty(2023)}]{ojha2023conditional}
Ojha, A. and Narisetty, N.~N. (2023), \enquote{A Conditional Bayesian Approach with Valid Inference for High Dimensional Logistic Regression,} \textit{Bayesian Analysis}, 1, 1--27.

\bibitem[{Panigrahi and Taylor(2018)}]{panigrahi2018scalable}
Panigrahi, S. and Taylor, J. (2018), \enquote{Scalable methods for {B}ayesian selective inference,} \textit{Electronic Journal of Statistics}, 12, 2355--2400.

\bibitem[{Panigrahi et~al.(2021)Panigrahi, Taylor, and Weinstein}]{panigrahi2021integrative}
Panigrahi, S., Taylor, J., and Weinstein, A. (2021), \enquote{Integrative methods for post-selection inference under convex constraints,} \textit{The Annals of Statistics}, 49, 2803--2824.

\bibitem[{Park and Casella(2008)}]{park2008Bayesian}
Park, T. and Casella, G. (2008), \enquote{The {B}ayesian lasso,} \textit{Journal of the American Statistical Association}, 103, 681--686.

\bibitem[{Persson and Waernbaum(2013)}]{conditional_odds_ratio}
Persson, E. and Waernbaum, I. (2013), \enquote{Estimating a marginal causal odds ratio in a case-control design: analyzing the effect of low birth weight on the risk of type 1 diabetes mellitus,} \textit{Statistics in Medicine}, 32, 2500--2512.

\bibitem[{Polson et~al.(2012)Polson, Scott, and Windle}]{polya-gamma}
Polson, N., Scott, J., and Windle, J. (2012), \enquote{{B}ayesian Inference for Logistic Models Using {P}olya-{G}amma Latent Variables,} \textit{Journal of the American Statistical Association}, 108.

\bibitem[{P{\"o}tscher(1991)}]{potscher1991effects}
P{\"o}tscher, B.~M. (1991), \enquote{Effects of model selection on inference,} \textit{Econometric Theory}, 7, 163--185.

\bibitem[{P{\"o}tscher(2009)}]{potscher2009confidence}
--- (2009), \enquote{Confidence sets based on sparse estimators are necessarily large,} \textit{Sankhy{\=a}: The Indian Journal of Statistics, Series A (2008-)}, 1--18.

\bibitem[{P{\"o}tscher and Leeb(2009)}]{potscher2009distribution}
P{\"o}tscher, B.~M. and Leeb, H. (2009), \enquote{On the distribution of penalized maximum likelihood estimators: The LASSO, SCAD, and thresholding,} \textit{Journal of Multivariate Analysis}, 100, 2065--2082.

\bibitem[{Ro{\v{c}}kov{\'a} and George(2014)}]{rovckova2014emvs}
Ro{\v{c}}kov{\'a}, V. and George, E.~I. (2014), \enquote{EMVS: The {EM} approach to {B}ayesian variable selection,} \textit{Journal of the American Statistical Association}, 109, 828--846.

\bibitem[{Rosenbaum and Rosenbaum(2002)}]{rosenbaum2002overt}
Rosenbaum, P.~R. and Rosenbaum, P.~R. (2002), \textit{Overt bias in observational studies}, Springer.

\bibitem[{Rubin(1974)}]{rubin1974estimating}
Rubin, D.~B. (1974), \enquote{Estimating causal effects of treatments in randomized and nonrandomized studies.} \textit{Journal of educational Psychology}, 66, 688.

\bibitem[{Rubini et~al.(2015)Rubini, Soundarapandian, , and Eswaran}]{ckd_data}
Rubini, L., Soundarapandian, P., , and Eswaran, P. (2015), \enquote{{Chronic Kidney Disease},} {DOI}: https://doi.org/10.24432/C5G020.

\bibitem[{Schervish(2012)}]{schervish2012theory}
Schervish, M.~J. (2012), \textit{Theory of {S}tatistics}, Springer Science \& Business Media.

\bibitem[{Severini(2007)}]{severini2007integrated}
Severini, T.~A. (2007), \enquote{Integrated likelihood functions for non-Bayesian inference,} \textit{Biometrika}, 94, 529--542.

\bibitem[{Shi et~al.(2021)Shi, Song, Lu, and Li}]{shi_recursive_scores}
Shi, C., Song, R., Lu, W., and Li, R. (2021), \enquote{Statistical inference for high-dimensional models via recursive online-score estimation,} \textit{Journal of the American Statistical Association}, 116, 1307--1318.

\bibitem[{Song and Liang(2022)}]{song2022nearly}
Song, Q. and Liang, F. (2022), \enquote{Nearly optimal {B}ayesian shrinkage for high-dimensional regression,} \textit{Science China Mathematics}, 1--34.

\bibitem[{Taylor and Tibshirani(2015)}]{taylor2015statistical}
Taylor, J. and Tibshirani, R.~J. (2015), \enquote{Statistical learning and selective inference,} \textit{Proceedings of the National Academy of Sciences}, 112, 7629--7634.

\bibitem[{Thomas et~al.(2016)Thomas, Cooper, and Zimmet}]{dm3}
Thomas, M.~C., Cooper, M.~E., and Zimmet, P. (2016), \enquote{Changing epidemiology of type 2 diabetes mellitus and associated chronic kidney disease,} \textit{Nature Reviews Nephrology}, 12, 73--81.

\bibitem[{Thomas et~al.(2008)Thomas, Kanso, and Sedor}]{ckd}
Thomas, R., Kanso, A., and Sedor, J.~R. (2008), \enquote{Chronic kidney disease and its complications,} \textit{Primary care: Clinics in office practice}, 35, 329--344.

\bibitem[{Torrens-i Dinar{\`e}s et~al.(2021)Torrens-i Dinar{\`e}s, Papaspiliopoulos, and Rossell}]{torrens2021confounder}
Torrens-i Dinar{\`e}s, M., Papaspiliopoulos, O., and Rossell, D. (2021), \enquote{Confounder importance learning for treatment effect inference,} \textit{arXiv preprint arXiv:2110.00314}.

\bibitem[{T{\"u}chler(2008)}]{bma_tuchler}
T{\"u}chler, R. (2008), \enquote{Bayesian variable selection for logistic models using auxiliary mixture sampling,} \textit{Journal of Computational and Graphical Statistics}, 17, 76--94.

\bibitem[{Van~de Geer et~al.(2014)Van~de Geer, B{\"u}hlmann, Ritov, and Dezeure}]{van2014asymptotically}
Van~de Geer, S., B{\"u}hlmann, P., Ritov, Y., and Dezeure, R. (2014), \enquote{On asymptotically optimal confidence regions and tests for high-dimensional models,} \textit{The Annals of Statistics}, 42, 1166--1202.

\bibitem[{Van~de Geer(2008)}]{geerGLM}
Van~de Geer, S.~A. (2008), \enquote{High-dimensional generalized linear models and the lasso,} \textit{The Annals of Statistics}, 36, 614--645.

\bibitem[{Vansteelandt et~al.(2011)Vansteelandt, Bowden, Babanezhad, and Goetghebeur}]{causal_iv_clor}
Vansteelandt, S., Bowden, J., Babanezhad, M., and Goetghebeur, E. (2011), \enquote{On instrumental variables estimation of causal odds ratios,} .

\bibitem[{Walker(1969)}]{walker}
Walker, A.~M. (1969), \enquote{On the Asymptotic Behaviour of Posterior Distributions,} \textit{Journal of the Royal Statistical Society. Series B (Methodological)}, 31, 80--88.

\bibitem[{Wang et~al.(2020)Wang, He, and Xu}]{wang2020debiased}
Wang, J., He, X., and Xu, G. (2020), \enquote{Debiased inference on treatment effect in a high-dimensional model,} \textit{Journal of the American Statistical Association}, 115, 442--454.

\bibitem[{Wu et~al.(2023)Wu, N.~Narisetty, and Yang}]{qbayesian}
Wu, T., N.~Narisetty, N., and Yang, Y. (2023), \enquote{Statistical inference via conditional Bayesian posteriors in high-dimensional linear regression,} \textit{Electronic Journal of Statistics}, 17, 769--797.

\bibitem[{Zhang and Zhang(2014)}]{zhang2014confidence}
Zhang, C.-H. and Zhang, S.~S. (2014), \enquote{Confidence intervals for low dimensional parameters in high dimensional linear models,} \textit{Journal of the Royal Statistical Society: Series B (Statistical Methodology)}, 76, 217--242.

\bibitem[{Zhu et~al.(2019)Zhu, Zeng, Zhang, and Li}]{zhu2019estimating}
Zhu, A., Zeng, D., Zhang, P., and Li, L. (2019), \enquote{Estimating causal log-odds ratio using the case-control sample and its application in the pharmaco-epidemiology study,} \textit{Statistical methods in medical research}, 28, 2165--2178.

\end{thebibliography}

\newpage
\appendix

\section{Gibbs Sampler for Computation}\label{sampler}
The proposed Bayesian framework provides a conditional posterior distribution for the model parameters that can be used for inferential purposes. Markov Chain Monte Carlo techniques can be utilized for sampling from the Bayesian posterior. 

The posterior obtained by the logistic likelihood itself does not present a viable option for direct posterior sampling. We utilize the latent variable formulation given by P\'{o}lya-Gamma \citep{polya-gamma} to convert the posteriors into mixtures of Gaussian distributions which makes it easy to sample from the posterior distribution. 
We present the sampling steps for the nuisance parameters $(\tilde{\theta},\beta), \gamma$, and the parameter $\theta$ from their conditional posterior distributions. 
The proposed Gibbs Sampler takes the following steps:
\begin{enumerate}[label=S\arabic*]
\item \textbf{Sampler for the conditional of $(\tilde{\theta},\beta)$}:\\
Sampling of $(\tilde{\theta},\beta)$ from \eqref{beta_posterior} is based on a logistic posterior distribution. We use the P\'{o}lya-Gamma latent variables (denote by $\omega_{11}, \dots \omega_{1n}$) to accomplish this task. In the prior specification of $\beta$ \eqref{eta-prior}, the binary latent variables for spike and slab is denoted by $I_1$.
For $i = 1, 2, \dots, n$, the conditional distribution of P\'{o}lya-Gamma random variables is given by:
$$\omega_{1i} \mid (\tilde{\theta},\beta), X_i, Z_i,I_1) \sim \text{PG}(1, X_i\tilde{\theta} + Z_{i}^T\beta).$$
Define $\Omega_1 := \text{Diag}(\omega_{11}, \dots \omega_{1n}) $. For simplicity of presentation, consider another $n \times (d+1)$ matrix $D = [X,Z]$.
Then the conditional distribution for $(\tilde{\theta},\beta)$ is given by:   
    \begin{equation*}
        (\tilde{\theta},\beta) \mid (I_1,X,Z,Y,\Omega_1) \sim N(\mu, \Sigma^{-1})
    \end{equation*}
where $$\Sigma = (D^T\Omega_1D + \Lambda^{-1}),$$ $$\text{with }\Lambda_{1,1} = \lambda,\ \  \Lambda_{2:(d+1),2:(d+1)} = \text{Diag}(I_1 \tau_{1n}^{2} + (1-I_1)\tau_{0n}^{2}), \text{and}$$ $$\mu = \Sigma^{-1}(D^T(Y-1/2)).$$
Let $\phi_N(r,\mu,\sigma^2)$ denotes the standard Gaussian density function of variable $r$ with mean $\mu$ and variance $\sigma^2$. The conditional distribution for indicators $I_{1j}$ for $j=1,2,\dots,d$ is independent across $j$ and is given by:
    \begin{align*}
        P[I_{1j} = 1 \mid \beta,\Omega_1,Y] &= \dfrac{q_n \phi_N(\beta_j,0,\tau_{1n}^2)}{(1-q_n) \phi_N(\beta_j,0,\tau_{0n}^2) + q_n \phi_N(\beta_j,0,\tau_{1n}^2)}.
    \end{align*}
    
\item \textbf{Sampler for the conditional of $\gamma$}:\\

Sampling of $\gamma$ \eqref{gamma-posterior} from the Model \eqref{gamma_working} is based on a logistic posterior distribution. We use the P\'{o}lya-Gamma latent variables (denote by $\omega_{21}, \dots \omega_{2n}$) to accomplish this task. In the prior specification of $\gamma$ \eqref{gamma-prior}, the binary latent variables for spike and slab are denoted by $I_2$.
For $i = 1, 2, \dots, n$, the conditional distribution of P\'{o}lya-Gamma random variables is given by:
$$\omega_{2i} \mid (\tilde{\eta}, Z_i,I_2) \sim \text{PG}(1, Z_{i}^T\gamma).$$
Define $\Omega_2 := \text{Diag}(\omega_{21}, \dots \omega_{2n}) $.
The conditional distribution for $\gamma$ is given by:   
    \begin{equation*}
        \gamma \mid (I_2,X,Z,\Omega_2) \sim N\bigg((Z^T\Omega_2Z + D_{I_2})^{-1}(Z^T(X-1/2)),(Z^T\Omega_2Z + D_{I_2})^{-1}\bigg),
    \end{equation*}
where $D_{I_2} = Diag(I_2 \tau_{1n}^{-2} + (1-I_2)\tau_{0n}^{-2})$.

Let $\phi_N(r,\mu,\sigma^2)$ denotes the standard Gaussian density function of variable $r$ with mean $\mu$ and variance $\sigma^2$. The conditional distribution for indicators $I_{2j}$ for $j=1,2,\dots,d$ is independent across $j$ and is given by:
    \begin{align*}
        P[I_{2j} = 1 \mid h(Z),\Omega_2,X,Z] &= \dfrac{q_n \phi_N(\gamma_j,0,\tau_{1n}^2)}{(1-q_n) \phi_N(\gamma_j,0,\tau_{0n}^2) + q_n \phi_N(\gamma_j,0,\tau_{1n}^2)}.
    \end{align*}

\item  \textbf{Sampler for the conditional of $\theta$:} \\
    The sampling of $\theta$ from the model \eqref{theta_posterior} is based on a logistic posterior distribution too. We introduce another set of P\'{o}lya-Gamma latent variables $\omega_3 := (\omega_{31}, \dots, \omega_{3n})$. We use $\phi_i := \tilde{\theta}h(Z_i) + Z_i^T\tilde{\beta}$ when defining the posterior for $\theta$. For $i = 1, 2, \dots, n$, the conditional distribution of P\'{o}lya-Gamma latent variables is given by:
    \begin{equation*}
        \omega_{3i}\mid (\theta,Y,X,Z,h(Z),\phi) \sim \text{PG}(1, \theta (X_i - h(Z_i)) + \phi_i ).
    \end{equation*}
Let $\Omega_3 = \text{Diag}(\omega_{31}, \dots, \omega_{3n})$. The conditional posterior for $\theta$ is given by:
\begin{equation}\label{q_posterior}
    \theta \mid (\Omega_3, Y,X,Z,h(Z), \phi) )\sim N \bigg( \frac{\Tilde{z}^T\Omega_3\Tilde{X}}{\Tilde{X}^T\Omega_3\Tilde{X} + \lambda^{-1}}, \frac{1}{\Tilde{X}^T\Omega_2\Tilde{X} + \lambda^{-1}} \bigg),
\end{equation}
where $ \Tilde{z}_i = (y_i - 1/2)/\omega_{3i} - \phi_i$ and $\tilde{X}_i = X_i - h(Z_i)$ for $i = 1, \dots, n$.
\end{enumerate}

\section{Technical Lemmas and Proofs}\label{proofs}
\subsection{Additional Notations and Definitions}

We present some more notations that we need in the proofs. We define 
\begin{equation}
    \mu_i(\theta, h(Z), \phi) = \exp\{\theta (X_i - h(Z_i)) + \phi_i\}/[1 + \exp\{\theta (X_i - h(Z_i)) + \phi_i\}],
\end{equation}
and $\Sigma(\theta, h(Z), \phi)$ be a diagonal matrix with diagonal entries being
\begin{equation}
    \{\Sigma(\theta, h(Z), \phi)\}_{i,i} = \mu_i(\theta, h(Z), \phi)(1-\mu_i(\theta, h(Z), \phi)).
\end{equation}
For two sequences $a_n$ and $b_n$, we use the notation $a_n = (1 \pm \epsilon)b_n$ as a shorthand for the inequality: $(1-\epsilon)b_n \leq a_n \leq (1+\epsilon)b_n$.

\subsection{Auxiliary Lemmas}\label{lemmasec}
\begin{lemma}
\label{lemma_walker}
For every $\delta >0 $, there exists $k_1(\delta)>0$ such that
\begin{align*}
            &\lim_{n\to \infty} P \bigg(   \sup_{\theta \in \Theta - N_0(\delta)} n^{-1} (\text{L}_n(\theta \mid h_0(Z), \phi_0) - \text{L}_n(\theta_0 \mid h_0(Z), \phi_0)) < - k_1(\delta) \bigg) = 1,\\
            &n^{-1} \text{L}_n^{''}(\mle \mid h_0(Z), \phi_0) \prob -I(\theta_0) \text{, and}\\
            &\text{L}_n(\theta_0 \mid h_0(Z), \phi_0) - \text{L}_n(\mle \mid h_0(Z), \phi_0) = O_P(1).
\end{align*}
\end{lemma}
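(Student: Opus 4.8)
The plan is to exploit that, writing $\tilde{X}_i := X_i - h_0(Z_i)$ and $b(u) = \log(1+e^u)$, the function $\theta \mapsto L_n(\theta \mid h_0(Z),\phi_0)$ in \eqref{loglik} equals $\sum_i \{Y_i(\tilde{X}_i\theta + \phi_{0i}) - b(\tilde{X}_i\theta + \phi_{0i})\}$, i.e. exactly the log-likelihood of a one-parameter logistic regression with covariate $\tilde{X}_i$ and offset $\phi_{0i}$; it is concave in $\theta$, with per-observation summands that are uniformly bounded because $\tilde{X}_i \in [-1,1]$ (since $X_i \in \{0,1\}$ and $h_0(Z_i) \in (0,1)$ by \eqref{h_fn}) and $\phi_{0i}$ is uniformly bounded under the regularity conditions, so that $\mu_{0i} = b'(\tilde{X}_i\theta_0 + \phi_{0i})$ and hence $W_{i,i}$ are bounded away from $0$ and $1$. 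Moreover, the identity displayed just after \eqref{theta_working} shows the working model at oracle values coincides with \eqref{logistic}, so $E[Y_i \mid X_i, Z_i] = \mu_{0i}$ and the score $L_n'(\theta_0 \mid h_0(Z),\phi_0) = \sum_i \tilde{X}_i(Y_i - \mu_{0i})$ has conditional mean zero given $(X,Z)$. The three assertions then follow from standard M-estimation arguments, handled in turn.

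For the first assertion I would use concavity to reduce the supremum over $\Theta - N_0(\delta)$ to the two endpoints $\theta_0 \pm \delta$: if $n^{-1}(L_n(\theta_0 \pm \delta \mid h_0(Z),\phi_0) - L_n(\theta_0 \mid h_0(Z),\phi_0)) < -\eta$, then concavity of $\theta \mapsto L_n(\theta \mid h_0(Z),\phi_0)$ forces the same strict inequality for every $\theta$ outside $N_0(\delta)$. For each endpoint, taking the conditional expectation given $(X,Z)$ and using $E[Y_i \mid X_i, Z_i] = \mu_{0i}$ produces $-\,n^{-1}\sum_i D_b(\tilde{X}_i\theta + \phi_{0i},\, \tilde{X}_i\theta_0 + \phi_{0i})$, where $D_b$ is the Bregman divergence of $b$; a mean-value expansion of $D_b$, together with strong convexity of $b$ on the relevant compact range and non-degeneracy of the design ($n^{-1}\sum_i \tilde{X}_i^2$ bounded away from $0$, which holds since $h_0(Z_i)$ is bounded away from $0$ and $1$), bounds this conditional mean by $-c\,\delta^2$ for some $c > 0$. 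A bounded-differences (Hoeffding) concentration inequality for $n^{-1}(L_n(\theta_0 \pm \delta \mid h_0(Z),\phi_0) - L_n(\theta_0 \mid h_0(Z),\phi_0))$ around its conditional mean then gives the claim with $k_1(\delta) = c\delta^2/2$ for $n$ large.

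For the second and third assertions I would first observe that the first assertion already yields consistency, $\hat{\theta}_0 \prob \theta_0$, since the maximizer must eventually lie in every $N_0(\delta)$. Next, $n^{-1}L_n''(\theta_0 \mid h_0(Z),\phi_0) = -\,n^{-1}\sum_i \tilde{X}_i^2\,\mu_{0i}(1-\mu_{0i})$ concentrates around its mean $-\,n^{-1}\sum_i E[\tilde{X}_i^2\mu_{0i}(1-\mu_{0i})] \to -I(\theta_0)$ by Chebyshev's inequality, the summands being bounded; and since $\theta \mapsto n^{-1}L_n''(\theta \mid h_0(Z),\phi_0)$ has a uniformly bounded derivative ($|b'''|$ times $|\tilde{X}_i|^3 \le 1$) it is Lipschitz in $\theta$ uniformly in $n$, so $n^{-1}L_n''(\hat{\theta}_0 \mid h_0(Z),\phi_0) - n^{-1}L_n''(\theta_0 \mid h_0(Z),\phi_0) = O_P(|\hat{\theta}_0 - \theta_0|) = o_P(1)$, giving the second assertion. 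For the third, a first-order expansion of the score at $\theta_0$ gives $\hat{\theta}_0 - \theta_0 = -L_n'(\theta_0 \mid h_0(Z),\phi_0)/L_n''(\bar{\theta} \mid h_0(Z),\phi_0)$ for some $\bar{\theta}$ between $\hat{\theta}_0$ and $\theta_0$; here $L_n'(\theta_0 \mid h_0(Z),\phi_0) = O_P(\sqrt{n})$ (a mean-zero sum of bounded terms) and, on the event $\hat{\theta}_0 \in N_0(\delta)$, $-n^{-1}L_n''(\bar{\theta} \mid h_0(Z),\phi_0) \prob I(\theta_0) > 0$, so $\hat{\theta}_0 - \theta_0 = O_P(n^{-1/2})$. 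A second-order Taylor expansion of $L_n$ itself around $\theta_0$ then gives $L_n(\hat{\theta}_0 \mid h_0(Z),\phi_0) - L_n(\theta_0 \mid h_0(Z),\phi_0) = O_P(\sqrt{n}) \cdot O_P(n^{-1/2}) + O_P(n) \cdot O_P(n^{-1}) = O_P(1)$, and this difference is nonnegative because $\hat{\theta}_0$ maximizes the likelihood, which is the third assertion.

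The main obstacle is the first assertion, the uniform separation away from $\theta_0$; the concavity reduction to the two endpoints makes the uniformity essentially free, so the real work is the per-point bound, which rests on two design-level facts that must be extracted from the regularity conditions: uniform boundedness of $\phi_{0i}$ (equivalently $W_{i,i}$ bounded away from $0$, so $b$ is strongly convex along the relevant arguments) and non-degeneracy of $n^{-1}\sum_i \tilde{X}_i^2$ (equivalently $I(\theta_0)$ bounded away from zero, which also underwrites the positivity used in the last two assertions). Everything else is routine: the conditional-mean computation via $E[Y_i \mid X_i, Z_i] = \mu_{0i}$, Hoeffding's inequality for bounded summands, and the weak law of large numbers.
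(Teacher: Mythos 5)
Your proposal is correct in substance, but it is worth noting that the paper does not actually prove this lemma at all: it simply cites \citet{walker} and \citet{schervish2012theory} and treats the three displays as known preliminary facts. What you have written is therefore a genuine, self-contained verification of those facts for the specific model at hand, and the route you take is well matched to its structure. The key move --- using concavity of $\theta\mapsto L_n(\theta\mid h_0(Z),\phi_0)$ (a one-parameter exponential-family log-likelihood in the canonical parameter, with offset) to collapse the supremum over $\Theta-N_0(\delta)$ to the two endpoints $\theta_0\pm\delta$ --- is cleaner than invoking a general Wald-type uniform-consistency argument, which would need compactness of $\Theta$ or an envelope condition; here the uniformity is free and the per-point bound is an elementary Bregman-divergence plus Hoeffding computation. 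The consistency-then-Lipschitz argument for the second display and the $O_P(\sqrt n)\cdot O_P(n^{-1/2})$ Taylor bookkeeping for the third are the standard M-estimation steps and are carried out correctly.

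The one caveat --- which you correctly flag rather than hide --- is that your argument consumes two facts that are not literal consequences of Assumptions A1--A6 as stated: uniform boundedness of $\phi_{0i}=\theta_0 h_0(Z_i)+Z_i^T\beta_0$ (which requires something like $\|\beta_0\|_1$ bounded, not just the sparsity rate of A3), and $n^{-1}\sum_i(X_i-h_0(Z_i))^2$ bounded away from zero (equivalently $h_0(Z_i)$, and hence the propensities and conditional variances, bounded away from $0$ and $1$). Both are needed to make $I(\theta_0)$ finite and strictly positive, both are used implicitly elsewhere in the paper (e.g.\ in the definition of $\sigma_n$ and in the proofs of Lemmas~\ref{lemma_hessians} and \ref{lemma_score_diff_expectation}), and both would also have to be verified before citing the general results of \citet{walker}. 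So this is a gap in the paper's stated hypotheses rather than in your argument; if you present this proof, state these two conditions explicitly as additional regularity assumptions at the outset.
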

Lemma ~\ref{lemma_walker} consists of preliminary results shown in \citep{walker,schervish2012theory}.

\begin{lemma}\label{lemma_5_ldml} Consider the notations $\mathbb{E}_n(r_i) = \frac{1}{n}\sum_{i=1}^{n}r_i$ and $\bar{E}(r_i) =\frac{1}{n}\sum_{i=1}^{n}E[r_i] $.
Let $|h_i(t)| \leq |t^TW_i|, \ K/4 > \bar{\sigma}^2 = \sup_{t \in \mathcal{T}} \bar{E}[h_i(t)^2\xi_i^2],\ \tilde{p} = dim(W_i),$ and $\|\mathcal{T}\|_1 = \sup_{t \in \mathcal{T}}\|t\|_1.$ We have

\begin{align*}
    P\bigg( \sup_{t \in \mathcal{T}} |(\mathbb{E}_n - \bar{E})[h_i(t)\xi_i]| > \dfrac{K\|\mathcal{T}\|_1\sqrt{M}}{\sqrt{n}}  \bigg) &\leq 32 \tilde{p}\exp\bigg(\dfrac{-K^2}{16} \bigg) +\\
    & P\bigg(\max_{j \leq \tilde{p}}\mathbb{E}_n[W_{ij}^2\xi_i^2] >M \bigg).
\end{align*} 
\end{lemma}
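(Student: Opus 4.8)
\emph{Proof proposal.} The plan is to split on the good event $\mathcal{A}:=\{\max_{j\le\tilde p}\mathbb{E}_n[W_{ij}^2\xi_i^2]\le M\}$, whose complement has probability exactly the second term on the right-hand side, and then control the process on $\mathcal{A}$. On $\mathcal{A}$ I would first establish a uniform bound on the empirical second moment of the summands: from $|h_i(t)|\le|t^T W_i|$ and Jensen's inequality applied with the weights $|t_j|/\|t\|_1$ one gets $h_i(t)^2\le\|t\|_1\sum_j|t_j|W_{ij}^2$; multiplying by $\xi_i^2$, averaging over $i$, and bounding each $\mathbb{E}_n[W_{ij}^2\xi_i^2]$ by $M$ yields $\mathbb{E}_n[h_i(t)^2\xi_i^2]\le\|t\|_1^2 M\le\|\mathcal{T}\|_1^2 M$ uniformly over $t\in\mathcal{T}$. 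This uniform variance control is what will drive the sub-Gaussian tail.

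Next I would reduce the supremum over $\mathcal{T}$ to a maximum over the $\tilde p$ coordinates. In the linear case $h_i(t)=t^T W_i$ this is immediate, since $(\mathbb{E}_n-\bar E)[h_i(t)\xi_i]=\sum_j t_j\,(\mathbb{E}_n-\bar E)[W_{ij}\xi_i]$ and Hölder's inequality gives $\sup_{t\in\mathcal{T}}|(\mathbb{E}_n-\bar E)[h_i(t)\xi_i]|\le\|\mathcal{T}\|_1\max_{j\le\tilde p}|(\mathbb{E}_n-\bar E)[W_{ij}\xi_i]|$; in general one passes to the symmetrized process and uses that the class $\{(h_i(t)\xi_i)_i:t\in\mathcal{T}\}$ lies in $\|\mathcal{T}\|_1$ times the symmetric convex hull of the $\tilde p$ generators $(\pm W_{ij}\xi_i)_i$, over which the supremum is attained at an extreme point. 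A union bound then leaves a single coordinate $j$. Because $W_{ij}\xi_i$ need not be bounded or a priori sub-Gaussian, I would handle the coordinate by symmetrization (losing a universal constant and a constant factor in the deviation level — this is where the hypothesis $\bar\sigma^2=\sup_t\bar E[h_i(t)^2\xi_i^2]<K/4$ enters, to verify the side condition of the symmetrization inequality, i.e.\ that the population mean is negligible at the scale considered) followed by Hoeffding's inequality applied conditionally on the data to the Rademacher sum $\mathbb{E}_n[\epsilon_i W_{ij}\xi_i]$, whose conditional variance proxy is $n^{-1}\mathbb{E}_n[W_{ij}^2\xi_i^2]\le n^{-1}\|\mathcal{T}\|_1^2 M$ on $\mathcal{A}$. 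This produces a two-sided tail of the form $2\exp(-cK^2)$ per coordinate; tracking the symmetrization constant and multiplying by the union-bound factor $\tilde p$ gives $32\,\tilde p\exp(-K^2/16)$, and adding back $P(\mathcal{A}^c)$ completes the bound.

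The main obstacle is the reduction in the previous paragraph: passing from the supremum over the possibly infinite, high-dimensional index set $\mathcal{T}$ to a maximum over only $\tilde p$ quantities without incurring covering-number or logarithmic overhead. This works precisely because the envelope $|h_i(t)|\le|t^T W_i|$ encodes the $\ell_1$/convex-hull structure that makes $\ell_1$–$\ell_\infty$ duality applicable; pinning down the exact constants $32$ and $1/16$ — rather than some other universal constants — then requires committing to a specific version of the symmetrization inequality and checking its side condition through $\bar\sigma^2<K/4$, which is the bookkeeping one has to be careful about.
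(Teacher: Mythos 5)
First, a point of reference: the paper itself does not prove this lemma --- it states only that it ``has been proved in \citep{logistic_dml}'' --- so the benchmark is the proof in that reference, which follows exactly the architecture you describe: split on the event $\{\max_{j\le\tilde p}\mathbb{E}_n[W_{ij}^2\xi_i^2]\le M\}$, symmetrize (using the bound involving $\bar\sigma^2$ to verify the side condition of the symmetrization inequality), reduce the supremum to the $\tilde p$ coordinates, and finish with a conditional Hoeffding bound plus a union bound. Your skeleton, your identification of where each hypothesis enters, and the variance computation $\mathbb{E}_n[h_i(t)^2\xi_i^2]\le\|t\|_1^2M$ on the good event are all correct.

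The gap is in the one step you yourself flag as the main obstacle. The justification you offer --- that $\{(h_i(t)\xi_i)_{i\le n}:t\in\mathcal{T}\}$ sits inside $\|\mathcal{T}\|_1$ times the symmetric convex hull of the generators $(\pm W_{ij}\xi_i)_{i\le n}$, so the supremum of the Rademacher sum is attained at a generator --- does not follow from the pointwise envelope $|h_i(t)|\le|t^TW_i|$. Coordinatewise domination in absolute value allows arbitrary sign patterns across $i$: already with $n=2$, $\tilde p=1$, $W_{i1}=\xi_i=1$ and $h_1(t)=t$, $h_2(t)=-t$, the vector $(1,-1)$ lies outside the convex hull of $\pm(1,1)$, and at $\epsilon=(1,-1)$ the symmetrized sum over the class equals $2$ while it vanishes on the generators; so the deterministic reduction $\sup_{t}|\sum_i\epsilon_ih_i(t)\xi_i|\le\|\mathcal{T}\|_1\max_j|\sum_i\epsilon_iW_{ij}\xi_i|$ is simply false under the envelope alone. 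The device that actually closes this step in the cited reference is the Ledoux--Talagrand contraction principle applied to the symmetrized process, which costs a factor of $2$ and uses the stronger structural fact that $h_i(t)\xi_i=\phi_i(t^TW_i\xi_i)$ for contractions $\phi_i$ with $\phi_i(0)=0$ (true in the application, and what the envelope hypothesis is shorthand for); only after contraction does your H\"older reduction for the linear process $\sup_{t\in\mathcal{T}}|\mathbb{E}_n[\epsilon_i\,t^TW_i\xi_i]|\le\|\mathcal{T}\|_1\max_{j\le\tilde p}|\mathbb{E}_n[\epsilon_iW_{ij}\xi_i]|$ apply. With that substitution the rest of your argument (conditional Hoeffding on the good event, union bound over $j$, constant bookkeeping) goes through and yields the stated bound.
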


Lemma~\ref{lemma_5_ldml} has been proved in  \citep{logistic_dml}. 

\begin{lemma}\label{h_phi_conc}
    Let the regularity assumptions~\ref{ass1}-\ref{ass5} hold. Then, the following concentration result holds for the estimated quantities $h(Z)$ and $\phi$ around their true values $h_0(Z)$ and $\phi_0$, respectively:

\begin{align}
        P\bigg( \max_{i \leq n}|h(Z_i) - h_0(Z_i)|  &\geq Ms \sqrt{\frac{\log(d \vee n)}{n}} \mid Y,X,Z \bigg) 
        \leq C (d \vee n)^{-c_3},
\end{align}
    on a set $\mathcal{E}_3$, with $P(\mathcal{E}_3) \geq 1 - (d \vee n)^{-c_3}$, and 
\begin{equation}
        P\bigg( \max_{i \leq n}|\phi_i - \phi_{0i}|  \geq Ms \sqrt{\frac{\log(d \vee n)}{n}} \mid Y,X,Z \bigg)  \leq C (d \vee n)^{-c_4},
\end{equation}
    on a set $\mathcal{E}_4$, with $P(\mathcal{E}_4) \geq 1 - (d \vee n)^{-c_4}$, where $M,C,c_3$ and $c_4$ are some positive constants.
    
\end{lemma}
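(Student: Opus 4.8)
The plan is to propagate the concentration rates assumed in \ref{ass4} and \ref{ass_propensity} through the smooth algebraic definitions of $h(Z_i)$ in \eqref{h_estimate} and $\phi_i$ in \eqref{phi_def}, comparing term by term with the oracle quantities $h_0(Z_i)$ in \eqref{h_fn} and $\phi_{0i}$. The crucial preliminary observation is that, by the sparsity condition \ref{ass3}, the target rate satisfies $s\sqrt{\log(d\vee n)/n} = o(n^{-1/4}) = o(1)$, so every perturbation below is asymptotically negligible and the maps involved can be linearised on a fixed compact neighbourhood of the oracle values. Throughout I would work on the intersection $\mathcal E_1 \cap \mathcal E_2$ of the conditioning sets appearing in \ref{ass4} and \ref{ass_propensity} (which has probability at least $1-2(d\vee n)^{-\min\{c_1,c_2\}}$), and additionally on the two high-probability events inside those assumptions.

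On these events I would establish three uniform-in-$i$ bounds. First, since $\max_{i,j}|Z_{ij}| \le C$ by \ref{ass2}, the $\ell_1$-part of \ref{ass4} gives $\max_{i\le n}|Z_i^T(\beta - \beta_0)| \le C\|\beta-\beta_0\|_1 \le CMs\sqrt{\log(d\vee n)/n}$ and $|\tilde\theta - \theta_0| \le Ms\sqrt{\log(d\vee n)/n}$; in particular $|\tilde\theta| \le M_0 + 1$ eventually by \ref{ass5}. Second, the estimated linear predictors $k\tilde\theta + Z_i^T\beta$ ($k\in\{0,1\}$) then lie within $o(1)$ of the oracle predictors uniformly in $i$; since the oracle predictors are bounded (an overlap/non-degeneracy condition keeping the oracle success probabilities and the entries $W_{ii}$ bounded away from $0$), the logistic map $x\mapsto e^x/(1+e^x)$ and the variance map $x\mapsto e^x/(1+e^x)^2$ are Lipschitz there, so
\[
\max_{i\le n}\bigl|P(Y_i=1\mid X_i=k,Z_i,\tilde\theta,\beta) - P(Y_i=1\mid X_i=k,Z_i,\theta_0,\beta_0)\bigr| \le C's\sqrt{\tfrac{\log(d\vee n)}{n}}.
\]
Third, from \ref{ass_propensity} together with $|e^x-1|\le 2|x|$ for small $x$, $\max_{i\le n}|P(X_i=k\mid Z_i,\gamma) - P(X_i=k\mid Z_i)| \le C''\sqrt{\log(d\vee n)/n}$ (again using that the propensities are bounded away from $0$ and $1$).

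Combining these in the ratio $R_i$ of \eqref{h_estimate}: each of the three factors in its numerator and denominator differs from the corresponding factor of $R_{0i}$ by $O(s\sqrt{\log(d\vee n)/n})$, and $R_{0i}$ is bounded above and below by positive constants, so a first-order quotient expansion gives $\max_{i\le n}|R_i - R_{0i}| \le C''' s\sqrt{\log(d\vee n)/n}$. Since $h(Z_i) - h_0(Z_i) = (R_{0i}-R_i)/\{(1+R_i)(1+R_{0i})\}$ with both denominators at least $1$, the first claim follows with $\mathcal E_3 = \mathcal E_1\cap\mathcal E_2$; a union bound over the two bad events of \ref{ass4} and \ref{ass_propensity}, carried through the conditioning, produces the conditional-probability bound with a suitable $c_3 \le \min\{c_1,c_2\}$ after absorbing constants. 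For $\phi$ I would use the decomposition $\phi_i - \phi_{0i} = \tilde\theta\,(h(Z_i)-h_0(Z_i)) + (\tilde\theta-\theta_0)\,h_0(Z_i) + Z_i^T(\beta-\beta_0)$: the first term is at most $(M_0+1)$ times the $h$-rate just obtained, the second at most the $\tilde\theta$-rate (since $0\le h_0(Z_i)\le 1$), and the third is bounded by the $\ell_1$ bound of \ref{ass4} as above; all three are $O(s\sqrt{\log(d\vee n)/n})$ uniformly in $i$, giving the second claim on $\mathcal E_4 = \mathcal E_1\cap\mathcal E_2$.

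The main obstacle I anticipate is not any single estimate but the bookkeeping around degeneracy: the Lipschitz linearisations above and the lower bounds on $1+R_i$ and $R_{0i}$ all rely on the oracle success probabilities and propensity scores being bounded away from $0$ and $1$ and the oracle linear predictors being bounded — conditions that must be stated as, or derived from, the regularity assumptions, after which the \emph{estimated} quantities inherit the same boundedness on the good event because the perturbations are $o(1)$. A secondary point requiring care is that \ref{ass4} conditions on $(Y,X,Z)$ while \ref{ass_propensity} conditions on $(X,Z)$; assembling the final conditional-probability statement is done by working on the intersection of the two conditioning sets and invoking a union bound there.
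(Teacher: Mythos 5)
Your overall architecture matches the paper's: work on the intersection of the high-probability events from Assumptions~\ref{ass4} and \ref{ass_propensity}, push the parameter concentration through the formula for $R_i$, then handle $\phi_i-\phi_{0i}$ by the three-term decomposition $\tilde\theta(h(Z_i)-h_0(Z_i))+(\tilde\theta-\theta_0)h_0(Z_i)+Z_i^T(\beta-\beta_0)$ (the paper's splitting is the algebraically equivalent $(\tilde\theta-\theta_0)h(Z_i)+\theta_0(h(Z_i)-h_0(Z_i))+Z_i^T(\beta-\beta_0)$, using $|h(Z_i)|\le 1$ and \ref{ass5}). The $\phi$ half of your argument is fine.

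The gap is in the middle step, and it is the one you yourself flagged: your route goes through \emph{additive} perturbations of the individual probabilities (Lipschitz continuity of the logistic and variance maps) followed by a first-order quotient expansion of $R_i$ versus $R_{0i}$, and this genuinely requires the oracle success probabilities, the oracle variances $W_{ii}$, and the propensity scores to be bounded away from $0$ and $1$ (equivalently, the oracle linear predictors bounded and an overlap condition). None of this is implied by Assumptions~\ref{ass1}--\ref{ass5}: $\|\beta_0\|_1$ is not assumed bounded, so $Z_i^T\beta_0$ and hence $R_{0i}$ can degenerate, and then $R=a/b$, $R_0=a_0/b_0$ with $|a-a_0|,|b-b_0|=O(s\sqrt{\log(d\vee n)/n})$ does \emph{not} give $|R-R_0|$ of the same order when $b,b_0$ are small. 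The paper avoids this entirely by staying multiplicative: it uses the elementary inequality $|1/(1+x)-1/(1+y)|\le |x/y-1|$ (valid for all positive $x,y$, with no lower bound needed on either), reducing the problem to showing $|\log(R_{0i}/R_i)|=O_P(s\sqrt{\log(d\vee n)/n})$, and then splits this log-ratio into three terms, each a log-ratio of logistic probabilities bounded via $(1+e^a)/(1+e^b)\le e^{|a-b|}$ by $3$ times the corresponding difference of linear predictors. That yields $|T_1|\le 3|Z_i^T(\beta-\beta_0)|$, $|T_2|\le 3|(\tilde\theta-\theta_0)X_i+Z_i^T(\beta-\beta_0)|$, and $|T_3|$ controlled directly by the log-odds form of Assumption~\ref{ass_propensity} — no non-degeneracy conditions anywhere. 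To close your proof under the stated hypotheses you would either need to add an explicit overlap/boundedness assumption (changing the lemma) or switch to this log-ratio argument.
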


\begin{proof}[Proof of Lemma~\ref{h_phi_conc}]

Recall the definitions of $h_0(Z_i)$ and $h(Z_i)$ from \eqref{h_fn} and \eqref{h_estimate}, respectively:
$$h(Z_i) = \frac{1}{1+R_i}, \ \ h_0(Z_i) = \frac{1}{1+R_{0i}},$$
where $$R_{0i} = \dfrac{P(Y_i=1\mid X_i=0, Z_i)P(Y_i=0\mid X_i=0, Z_i)P(X_i=0\mid Z_i)}{P(Y_i=1\mid X_i=1, Z_i)P(Y_i=0\mid X_i=1, Z_i)P(X_i=1\mid Z_i)},$$ and 
$$R_i = \dfrac{P(Y_i=1\mid X_i=0, Z_i, \tilde{\theta},\beta)P(Y_i=0\mid X_i=0, Z_i, \tilde{\theta},\beta)P(X_i=0\mid Z_i, \gamma)}{P(Y_i=1\mid X_i=1, Z_i,\tilde{\theta},\beta)P(Y_i=0\mid X_i=1, Z_i,\tilde{\theta},\beta)P(X_i=1\mid Z_i,\gamma)}.$$
For positive real numbers $x$ and $y$, we can easily show that $$\Bigg|\frac{1}{1+x} - \frac{1}{1+y}\Bigg| \leq \Bigg| \frac{x}{y} - 1 \Bigg|.$$
Therefore,
$$|h(Z_i) - h_0(Z_i)| \leq \Bigg| \frac{R_{0i}}{R_i} - 1 \Bigg|.$$
To obtain the desired result, it is enough to show that for all $i = 1,2,\dots,n$:
$$ \Bigg|\log\Bigg( \frac{R_{0i}}{R_i} \Bigg)\Bigg| = O_P\Bigg( s\sqrt{\frac{\log (d \vee n}{n}}\Bigg).$$
We can split $\log(R_{0i}/R_i)$ into three parts as follows:
\begin{align*}
    \log\Bigg( \frac{R_{0i}}{R_i} \Bigg) &= \underbrace{\log \Bigg( \frac{P(Y_i=1\mid X_i=0, Z_i)P(Y_i=0\mid X_i=0, Z_i)}{P(Y_i=1\mid X_i=0, Z_i, \tilde{\theta},\beta)P(Y_i=0\mid X_i=0, Z_i, \tilde{\theta},\beta)} \Bigg)}_{T_1} \\
    &+ \underbrace{\log \Bigg( \frac{P(Y_i=1\mid X_i=1, Z_i,\tilde{\theta},\beta)P(Y_i=0\mid X_i=1, Z_i,\tilde{\theta},\beta)}{P(Y_i=1\mid X_i=1, Z_i)P(Y_i=0\mid X_i=1, Z_i)} \Bigg)}_{T_2} \\
    &+ \underbrace{\log \Bigg( \frac{P(X_i=1\mid Z_i,\gamma)/P(X_i=0\mid Z_i,\gamma)}{P(X_i=1\mid Z_i)/P(X_i=0\mid Z_i)}\Bigg)}_{T_3}
\end{align*}
Based on the logistic expressions of the probabilities and using the inequality, $(1+e^a)/(1+e^b) \leq e^{|a-b|}$, we can show that:
\begin{align*}
    |T_1| &\leq 3 |Z_i^T(\beta - \beta_0)| \leq K \|\beta - \beta_0\|_1 (K>0),\\
    &= O_P\Bigg( s\sqrt{\frac{\log (d \vee n}{n}}\Bigg), 
\end{align*}
where the last equality is due to Assumption~\ref{ass4}.
Similarly, we can show that:
\begin{align*}
    |T_2| &\leq 3 | (\tilde{\theta} - \theta_0)X_i + Z_i^T(\beta - \beta_0)| \leq K \|(\tilde{\theta},\beta) - (\theta_0,\beta_0)\|_1 (K>0),\\
    &= O_P\Bigg( s\sqrt{\frac{\log (d \vee n}{n}}\Bigg),
\end{align*}
where the last equality is due to Assumption~\ref{ass4}. Finally, Assumption~\ref{ass_propensity} to obtain the same concentration rate for $T_3$.  

For the concentration rate of $\phi$, note that:
\begin{align*}
    |\phi_i - \phi_{0i}| &= |\tilde{\theta}h(Z_i) + Z_i^T\beta - \theta_0h_0(Z_i) - Z_i^T\beta_0| \\
    &\leq |\tilde{\theta} - \theta_0||h(Z_i)| + |\theta_0||h(Z_i) - h_0(Z_i)| + |Z_i^T(\beta - \beta_0)| \\
    &=  O_P\Bigg( s\sqrt{\frac{\log (d \vee n}{n}}\Bigg),
\end{align*}
where for the last equality, we have used the fact that $|h(Z_i)| < 1$, Assumption~\ref{ass4}, Assumption~\ref{ass5}, and the concentration rate of $h(Z_i)$ from the previous part.
\end{proof}
\begin{lemma}\label{lemma_hessians}
Suppose the nuisance parameters $\gamma$ and $\phi$ satisfy the Assumptions~\ref{ass1}-\ref{ass4} and $h_0(Z),\phi_0$ are the corresponding oracle parameters.
Let $$M_n(\theta,h(Z), \phi) = Z^T\Sigma(\theta,h(Z), \phi) Z,\ \  M_n(\theta,h_0(Z),\phi_0) = Z^T\Sigma(\theta,h_0(Z),\phi_0)Z.$$
For every $\theta \in \Theta$, there exists $\epsilon_n(\theta) \to 0$ such that
\begin{equation}
    (1-\epsilon_n(\theta))M_n(\theta,h_0(Z),\phi_0) \leq M_n(\theta,h(Z), \phi) \leq (1+\epsilon_n(\theta))M_n(\theta,h_0(Z),\phi_0).
\end{equation}
Furthermore, for every $\epsilon>0$, there exists $\delta >0$ such that when $\theta$ satisfies $|\theta - \theta_0| < \delta$,
\begin{equation}
    (1-\epsilon)M_n(\theta,h_0(Z),\phi_0) \leq M_n(\theta,h(Z), \phi) \leq (1+\epsilon)M_n(\theta,h_0(Z),\phi_0).
\end{equation}
\end{lemma}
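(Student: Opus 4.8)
The plan is to control the multiplicative discrepancy between the diagonal matrices $\Sigma(\theta,h(Z),\phi)$ and $\Sigma(\theta,h_0(Z),\phi_0)$ entrywise, and then transfer that control to the quadratic forms $M_n(\theta,h(Z),\phi)=Z^T\Sigma(\theta,h(Z),\phi)Z$ and $M_n(\theta,h_0(Z),\phi_0)=Z^T\Sigma(\theta,h_0(Z),\phi_0)Z$. First I would write out the $i$-th diagonal entry $\{\Sigma(\theta,h(Z),\phi)\}_{i,i}=\mu_i(\theta,h(Z),\phi)(1-\mu_i(\theta,h(Z),\phi))$, where the logit argument is $\theta(X_i-h(Z_i))+\phi_i$, and compare it to the corresponding entry with $h_0,\phi_0$, whose logit argument is $\theta(X_i-h_0(Z_i))+\phi_{0i}$. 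The difference in the two logit arguments is $\theta(h_0(Z_i)-h(Z_i))+(\phi_i-\phi_{0i})$, which by Lemma~\ref{h_phi_conc}, the boundedness $|\theta|\le M_0+\delta$ (or just $\theta$ in a fixed compact $\Theta$), and Assumption~\ref{ass5}, is uniformly $O_P(s\sqrt{\log(d\vee n)/n})$ over $i\le n$ on the intersection of the high-probability sets $\mathcal{E}_3\cap\mathcal{E}_4$.

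Next I would use the elementary fact that the map $t\mapsto \sigma(t)(1-\sigma(t))$, with $\sigma$ the logistic function, is Lipschitz and, more usefully here, that $\log[\sigma(t)(1-\sigma(t))]$ has bounded derivative, so that a perturbation of size $\eta$ in the logit argument changes $\sigma(t)(1-\sigma(t))$ by a multiplicative factor $e^{\pm c\eta}=1\pm O(\eta)$ for some universal constant $c$. Applying this with $\eta_i=\big|\theta(h_0(Z_i)-h(Z_i))+(\phi_i-\phi_{0i})\big|$ gives
\begin{equation*}
(1-\epsilon_n(\theta))\{\Sigma(\theta,h_0(Z),\phi_0)\}_{i,i}\le \{\Sigma(\theta,h(Z),\phi)\}_{i,i}\le (1+\epsilon_n(\theta))\{\Sigma(\theta,h_0(Z),\phi_0)\}_{i,i}
\end{equation*}
for every $i$, with a common $\epsilon_n(\theta)=O_P\big(s\sqrt{\log(d\vee n)/n}\big)\to 0$ by Assumption~\ref{ass3}. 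Since both $\Sigma$ matrices are diagonal with nonnegative entries, these entrywise two-sided bounds are inherited by the quadratic form: for any vector $v$, $v^T Z^T\Sigma(\theta,h(Z),\phi) Z v=\sum_i \{\Sigma(\theta,h(Z),\phi)\}_{i,i}(Zv)_i^2$ is squeezed between $(1\mp\epsilon_n(\theta))\sum_i\{\Sigma(\theta,h_0(Z),\phi_0)\}_{i,i}(Zv)_i^2$, which is exactly the matrix inequality $(1-\epsilon_n(\theta))M_n(\theta,h_0(Z),\phi_0)\le M_n(\theta,h(Z),\phi)\le (1+\epsilon_n(\theta))M_n(\theta,h_0(Z),\phi_0)$ in the Loewner order. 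This establishes the first display. For the second display I would simply note that $\epsilon_n(\theta)$ can be taken uniform over $\theta$ in a neighborhood of $\theta_0$ (indeed over any fixed compact set, since the $\theta$-dependence enters only through the bounded multiplier $|\theta|$ times $\max_i|h(Z_i)-h_0(Z_i)|$), so that given $\epsilon>0$ one chooses $n$ large enough that $\epsilon_n(\theta)<\epsilon$; restricting to $|\theta-\theta_0|<\delta$ for any fixed $\delta$ then suffices, and in fact the $\delta$ is only needed to invoke Assumption~\ref{ass5}-type bounds on $\theta$.

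The main obstacle is making the perturbation bound genuinely uniform in $i$ and simultaneously valid on a high-probability event: one must keep track of which conditioning sets ($\mathcal{E}_1,\mathcal{E}_2,\mathcal{E}_3,\mathcal{E}_4$) and which constants are in play, and verify that the constant $c$ in the multiplicative variance perturbation does not depend on $Z_i$ or $\theta$ (it does not, because $\log[\sigma(t)(1-\sigma(t))]=\log\sigma(t)+\log(1-\sigma(t))$ has derivative $1-2\sigma(t)\in[-1,1]$ everywhere). A secondary technical point is the shorthand ``$1\pm\epsilon$'': the statement asserts a Loewner-order sandwich, so I must emphasize that diagonality of $\Sigma$ is what lets the scalar bound pass to the matrix bound without any eigenvalue argument on $Z$. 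Everything else — the Lipschitz estimate, the $O_P$ rate from Lemma~\ref{h_phi_conc}, and the fact that $s^2\log(d\vee n)/n\to 0$ forces $\epsilon_n\to 0$ — is routine.
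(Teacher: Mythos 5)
Your proposal is correct and follows essentially the same route as the paper's proof: both arguments bound the ratio of the diagonal entries of $\Sigma(\theta,h(Z),\phi)$ and $\Sigma(\theta,h_0(Z),\phi_0)$ by $\exp\{c\,|\theta(h(Z_i)-h_0(Z_i))+(\phi_i-\phi_{0i})|\}$, invoke the concentration of $h(Z)$ and $\phi$ around their oracle values to send this to one, and pass from the entrywise bound to the Loewner-order sandwich on $M_n$ via the diagonality of $\Sigma$. The only differences are cosmetic — you obtain the Lipschitz constant from the derivative of $\log[\sigma(t)(1-\sigma(t))]$ rather than from the inequality $(1+e^a)/(1+e^b)\le e^{|a-b|}$, and you carry the sparsity factor $s$ explicitly in the rate — neither of which changes the argument.
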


\begin{proof}[Proof of Lemma~\ref{lemma_hessians}]
To prove the first statement, it is sufficient to show that 
\[ (1-\epsilon_n(\theta))\{\Sigma(\theta,h_0(Z),\phi_0)\}_{i,i} \leq \{\Sigma(\theta,h(Z), \phi)\}_{i,i} \leq (1+\epsilon_n(\theta))\{\Sigma(\theta,h_0(Z),\phi_0)\}_{i,i},\]
for each $i = 1, \dots, n.$ Using the fact that $(1+e^a)/(1+e^b) \leq e^{|a-b|}$, we have

\begin{align*}
    \dfrac{\{\Sigma(\theta,h(Z), \phi)\}_{i,i}}{\{ \Sigma(\theta,h_0(Z),\phi_0)\}_{ii}} &\leq \exp\{ 3 (|\theta| |h_0(Z_i) - h(Z_i)| + |Z_i^T(\phi - \phi_0|)\} \\ \leq &\exp \{C(|\theta|\|h_0(Z_i) - h(Z_i)\|_1 + \|\phi - \phi_0\|_1 \} \leq \exp\{ C (1+|\theta|)\sqrt{\log(d \vee n)/n}\},
\end{align*} 
which converges to 1 for each $\theta$ since $\gamma$ and $\phi$ follow their regularity conditions and  $|Z_{ij}| \leq C$. Hence for some $C>0$ which is independent of $\theta$, $\epsilon_n(\theta) =  C (1+|\theta|)(\log(d \vee n)/n)^{1/2} $ works. By interchanging the terms in the ratio we would obtain the reverse inequality.
\newline

For the second statement, we follow the same steps as above to get:
\begin{align*}
    \dfrac{\{\Sigma(\theta,h(Z), \phi)\}_{i,i}}{\{ \Sigma(\theta,h_0(Z),\phi_0)\}_{ii}} &\leq \exp\{ C (1+|\theta|)\sqrt{\log(d \vee n)/n}\}\\
    &\leq \exp\{ C (1+|\theta_0| + |\theta - \theta_0|)\sqrt{\log(d \vee n)/n}\} \\
    &\leq\exp\{ C (1+|\theta_0| + \delta)\sqrt{\log(d \vee n)/n}\}.
\end{align*}

Therefore, $\delta$ can be chosen small enough to make $C (1+|\theta_0| + \delta)(\log(d \vee n)/n)^{1/2}$ less than $\epsilon$. By interchanging the terms in the ratio we would obtain the reverse inequality.
\end{proof}

\begin{lemma}\label{lemma_scores}
As in Lemma~\ref{lemma_hessians}, consider $M_n(\theta,h_0(Z),\phi_0) = Z^T\Sigma(\theta,h_0(Z),\phi_0)Z.$
Then for every $\epsilon>0$, there exists $\delta >0$ such that when $|\theta - \theta_0| < \delta$, we have
\[ (1- \epsilon)M_n(\theta_0,h_0(Z),\phi_0) \leq M_n(\theta,h_0(Z),\phi_0) \leq (1+\epsilon)M_n(\theta_0,h_0(Z),\phi_0).\] 

\end{lemma}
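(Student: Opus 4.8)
The plan is to reduce the claimed matrix inequality to an entrywise comparison of the diagonal weights, just as in the proof of Lemma~\ref{lemma_hessians}, but now only the coefficient $\theta$ is being perturbed. Write $M_n(\theta,h_0(Z),\phi_0) = \sum_{i=1}^n \{\Sigma(\theta,h_0(Z),\phi_0)\}_{i,i}\, Z_i Z_i^T$, and note that each $Z_i Z_i^T$ is positive semidefinite. Hence it suffices to produce, for every $\epsilon>0$, a single $\delta>0$ (independent of $i$, of $n$, and of the realization of $(X,Z)$) such that $|\theta-\theta_0|<\delta$ forces
\[
(1-\epsilon)\{\Sigma(\theta_0,h_0(Z),\phi_0)\}_{i,i} \le \{\Sigma(\theta,h_0(Z),\phi_0)\}_{i,i} \le (1+\epsilon)\{\Sigma(\theta_0,h_0(Z),\phi_0)\}_{i,i}
\]
for all $i=1,\dots,n$; summing these scalar inequalities against the $Z_iZ_i^T$ then yields the matrix inequality, since a nonnegative combination of positive semidefinite matrices is positive semidefinite.

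For the entrywise bound I would use the same elementary estimate as in Lemma~\ref{lemma_hessians}. The linear predictors inside $\mu_i(\theta,h_0(Z),\phi_0)$ and $\mu_i(\theta_0,h_0(Z),\phi_0)$ differ only by $(\theta-\theta_0)(X_i-h_0(Z_i))$. Since each diagonal entry equals $e^{\eta}/(1+e^{\eta})^2$ at its linear predictor $\eta$, the ratio of the two entries is at most $\exp\{3|\eta_i-\eta_i'|\}$ by $(1+e^a)/(1+e^b)\le e^{|a-b|}$ together with $e^{a-b}\le e^{|a-b|}$, giving
\[
\frac{\{\Sigma(\theta,h_0(Z),\phi_0)\}_{i,i}}{\{\Sigma(\theta_0,h_0(Z),\phi_0)\}_{i,i}} \le \exp\{3|\theta-\theta_0|\,|X_i-h_0(Z_i)|\}.
\]
The decisive point is that $|X_i-h_0(Z_i)|\le 2$ uniformly in $i$: $X_i\in\{0,1\}$ by assumption and $h_0(Z_i)=1/(1+R_{0i})\in(0,1)$ by its definition in \eqref{h_fn}. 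Thus the ratio is at most $\exp\{6|\theta-\theta_0|\}\le\exp\{6\delta\}$, and interchanging numerator and denominator gives a lower bound of $\exp\{-6\delta\}$.

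It then remains only to choose $\delta$ small enough that $\exp\{6\delta\}\le 1+\epsilon$ and $\exp\{-6\delta\}\ge 1-\epsilon$; for $\epsilon\in(0,1)$ the value $\delta=\tfrac16\log(1+\epsilon)$ works for both inequalities (using $1+\epsilon\le 1/(1-\epsilon)$), and for larger $\epsilon$ the lower bound is vacuous because $M_n$ is positive semidefinite. This $\delta$ depends only on $\epsilon$, so the conclusion holds uniformly in $n$ and over the data. I do not expect a genuine obstacle here; the one thing to be careful about is exactly the uniform boundedness of $|X_i-h_0(Z_i)|$, which is what keeps $\delta$ free of $n$ — in contrast to Lemma~\ref{lemma_hessians}, where the $\sqrt{\log(d\vee n)/n}$ factor supplied this convergence automatically.
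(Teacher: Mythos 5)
Your proof is correct and follows essentially the same route as the paper's: reduce the matrix inequality to an entrywise ratio of the diagonal weights, bound that ratio by $\exp\{3|\theta-\theta_0|\,|X_i-h_0(Z_i)|\}$ via $(1+e^a)/(1+e^b)\le e^{|a-b|}$, and pick $\delta$ of order $\log(1+\epsilon)/K$. You are somewhat more explicit than the paper in justifying the reduction (nonnegative combinations of the positive semidefinite $Z_iZ_i^T$) and in identifying the uniform bound $|X_i-h_0(Z_i)|\le 2$ that makes the constant $K$ free of $i$ and $n$, but the argument is the same.
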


\begin{proof}[of Lemma~\ref{lemma_scores}]

It is sufficient to show that a $\delta>0$ can be chosen such that for $\{\theta :|\theta -\theta_0| < \delta\}$, $$\dfrac{\{\Sigma(\theta,h_0(Z),\phi_0)\}_{i,i}}{\{ \Sigma(\theta_0,h_0(Z),\phi_0)\}_{ii}} \to 1.$$
We again use the fact that $(1+e^a)/(1+e^b) \leq e^{|a-b|}$.
From the definition of $\{\Sigma(\theta,h_0(Z),\phi_0)\}_{i,i}$, for $K>0$
\[ \dfrac{\{\Sigma(\theta,h_0(Z),\phi_0)\}_{i,i}}{\{ \Sigma(\theta_0,h_0(Z),\phi_0)\}_{ii}} \leq \exp\{3|\theta_0 - \theta||X_i - h_0(Z_i)|\} \leq \exp\{K\delta\}.\]
The quantity $\delta$ can be chosen appropriately so that the difference of this ratio and 1 is less than $\epsilon$. To be more specific, if $0 < \epsilon <1$, any $\delta$ such that $\log(1-\epsilon)/K < \delta < \log(1+\epsilon)/K$ would suffice. By interchanging the terms in the ratio we would obtain the reverse inequality. 
\end{proof}



\begin{lemma}[Difference of log likelihoods]\label{lemma_ldiff_complement}
Let $\gamma$ and $\phi$ satisfy the regularity assumptions \ref{ass1}-\ref{ass5}. Then, for all $\delta >0,$ there exists $k(\delta) >0 $ such that
\begin{equation}
    \lim_{n\to \infty} P \bigg(   \sup_{\theta \in \Theta - N_0(\delta)} n^{-1} (\text{L}_n(\theta \mid h(Z), \phi) - \text{L}_n(\theta_0 \mid h_0(Z), \phi_0)) < - k(\delta) \bigg) = 1.
\end{equation}
\end{lemma}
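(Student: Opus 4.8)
The plan is to reduce the claim to its already‑known counterpart for the oracle nuisance quantities, Lemma~\ref{lemma_walker}, by controlling the discrepancy created when $(h_0(Z),\phi_0)$ is replaced by the estimates $(h(Z),\phi)$. Write
\begin{align*}
L_n(\theta\mid h(Z),\phi) - L_n(\theta_0\mid h_0(Z),\phi_0)
&= \big[L_n(\theta\mid h(Z),\phi) - L_n(\theta\mid h_0(Z),\phi_0)\big]\\
&\quad + \big[L_n(\theta\mid h_0(Z),\phi_0) - L_n(\theta_0\mid h_0(Z),\phi_0)\big],
\end{align*}
and call the first bracket $E_n(\theta)$ (the nuisance‑estimation error) and the second $D_n(\theta)$. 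By the first assertion of Lemma~\ref{lemma_walker} there is $k_1(\delta)>0$ with $P\big(\sup_{\theta\in\Theta-N_0(\delta)} n^{-1}D_n(\theta) < -k_1(\delta)\big)\to 1$. Hence it is enough to show $\sup_{\theta\in\Theta} n^{-1}|E_n(\theta)| = o_P(1)$; intersecting the two events, for large $n$ we get $\sup_{\theta\in\Theta-N_0(\delta)} n^{-1}\big(L_n(\theta\mid h(Z),\phi) - L_n(\theta_0\mid h_0(Z),\phi_0)\big) < -k_1(\delta)/2$, so the lemma holds with $k(\delta)=k_1(\delta)/2$.

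To bound $E_n(\theta)$, put $a_i=\theta(X_i-h(Z_i))+\phi_i$ and $b_i=\theta(X_i-h_0(Z_i))+\phi_{0i}$, so that, using \eqref{log-l-notation} and \eqref{log-l-notation-true},
\[
E_n(\theta)=\sum_{i=1}^n Y_i(a_i-b_i)-\sum_{i=1}^n\big[\log(1+e^{a_i})-\log(1+e^{b_i})\big].
\]
Since $Y_i\in\{0,1\}$ and $t\mapsto\log(1+e^t)$ is $1$-Lipschitz, $|E_n(\theta)|\le 2\sum_{i=1}^n|a_i-b_i|\le 2n\big(|\theta|\max_{i\le n}|h(Z_i)-h_0(Z_i)|+\max_{i\le n}|\phi_i-\phi_{0i}|\big)$. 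Lemma~\ref{h_phi_conc}, after converting its conditional bounds into unconditional ones (using $P(\mathcal{E}_3^c),P(\mathcal{E}_4^c)\to 0$ together with the posterior tail bounds), shows that both $\max_{i\le n}|h(Z_i)-h_0(Z_i)|$ and $\max_{i\le n}|\phi_i-\phi_{0i}|$ are $O_P\!\big(s\sqrt{\log(d\vee n)/n}\big)$, which tends to $0$ by Assumption~\ref{ass3} (as $s^2\log(d\vee n)=o(\sqrt n)$ forces $s\sqrt{\log(d\vee n)/n}\to 0$). Since $\Theta$ is compact in this framework, $\sup_{\theta\in\Theta}|\theta|=:M_\Theta<\infty$, and therefore $\sup_{\theta\in\Theta} n^{-1}|E_n(\theta)|\le 2(M_\Theta+1)\,O_P\!\big(s\sqrt{\log(d\vee n)/n}\big)=o_P(1)$, which is exactly what was needed.

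The main obstacle is making the control of $E_n(\theta)$ genuinely \emph{uniform} over $\Theta-N_0(\delta)$: the per‑observation error scales like $|\theta|$, so the argument leans on boundedness of the parameter space (if $\Theta$ were unbounded one would treat large $|\theta|$ separately, exploiting concavity of the logistic log-likelihood, which already drives both $D_n(\theta)$ and $E_n(\theta)$ strongly negative there). A secondary bookkeeping point is assembling a single probability‑tending‑to‑one event on which the oracle statement of Lemma~\ref{lemma_walker}, the data events $\mathcal{E}_1,\dots,\mathcal{E}_4$, and the conditional posterior concentration events of Lemma~\ref{h_phi_conc} all hold simultaneously.
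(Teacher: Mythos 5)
Your proof is correct in substance but follows a genuinely different route from the paper. The paper's proof works directly with the per-observation log-likelihood ratio $R_i = \log\big(p_n(Y_i \mid \theta, h(Z), \phi, X_i, Z_i)/p_n(Y_i \mid \theta_0, h_0(Z), \phi_0, X_i, Z_i)\big)$, applies Jensen's inequality to get $E[R_i] < 0$ (the negative Kullback--Leibler divergence), and then invokes the weak law of large numbers to conclude. Your proof instead splits the difference into the oracle part $D_n(\theta) = L_n(\theta \mid h_0(Z), \phi_0) - L_n(\theta_0 \mid h_0(Z), \phi_0)$, already controlled uniformly by the first assertion of Lemma~\ref{lemma_walker}, and a nuisance-perturbation term $E_n(\theta)$, which you kill via the $1$-Lipschitz property of $t \mapsto \log(1+e^t)$ together with the sup-norm concentration of $(h(Z),\phi)$ from Lemma~\ref{h_phi_conc} and Assumption~\ref{ass3}. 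Your decomposition buys two things the paper's argument glosses over: first, the uniformity in $\theta$ over $\Theta - N_0(\delta)$ is inherited from Lemma~\ref{lemma_walker} rather than asserted after a pointwise WLLN; second, the paper's $R_i$ are not independent (the estimates $h(Z)$ and $\phi$ depend on the entire sample), so a naive WLLN application is delicate, whereas your approach confines all randomness of the nuisance estimates to a deterministic sup-norm bound on a high-probability event. The one caveat, which you flag yourself, is the reliance on $\sup_{\theta \in \Theta}|\theta| < \infty$: the paper's Assumption~\ref{ass5} bounds only $|\theta_0|$, not $\Theta$, so either compactness of $\Theta$ must be taken as implicit (as it effectively is throughout the paper, e.g.\ in the definition of $\hat{\theta}_0$ in \eqref{true_mle}) or the large-$|\theta|$ regime must be handled by the concavity argument you sketch. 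With that understood, your argument is complete and, if anything, more careful than the published one.
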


\begin{proof}[Proof of Lemma~\ref{lemma_ldiff_complement}]
For any $\theta \in \Theta-N_0(\delta),$ define 
$$R_i = \log\bigg( \dfrac{p_n(Y_i \mid \theta,h(Z), \phi,X_i,Z_i)}{p_n(Y_i \mid \theta_0, h_0(Z), \phi_0,X_i,Z_i)} \bigg).$$ 
By strict concavity of the function $\log$, 
\begin{align*}
    E[R_i] &< \log[E\{\exp(R_i)\}] \\
    &= \log \bigg( \sum_{Y_i \in \{0,1\}} \dfrac{p_n(Y_i \mid \theta,h(Z), \phi,X_i,Z_i)}{p_n(Y_i \mid \theta_0, h_0(Z), \phi_0,X_i,Z_i)} p_n(Y_i \mid \theta_0, h_0(Z), \phi_0,X_i,Z_i) \bigg)\\
    &= 0.
\end{align*}
Applying weak law of large numbers to $n^{-1}\sum R_i,$ we can find $k(\delta) >0$ such that 
\begin{equation}\label{l_diff_finite}
    \lim_{n\to \infty} P \bigg(  n^{-1} (\text{L}_n(\theta \mid h(Z), \phi) - \text{L}_n(\theta_0 \mid h_0(Z), \phi_0)) < - k(\delta)< 0\bigg) = 1.
\end{equation}
$E[R_i]$ is always finite since $E[R_i]$ is Kullback–Leibler divergence between $p_n(Y_i \mid \theta_0, h_0(Z), \phi_0)$ and  $p_n(Y_i \mid \theta,h(Z), \phi)$ and based on the expression of logistic probabilities $0< p_n(Y_i \mid \theta_0, h_0(Z), \phi_0) <1$ and $0 < p_n(Y_i \mid \theta,h(Z), \phi) <1$.

\end{proof}

\begin{lemma}[Behavior of expectation of score differences]\label{lemma_score_diff_expectation}
Let $h(Z)$ and $\phi$ follow the Assumptions~\ref{ass1}-\ref{ass5}. For given $\{Y_1,\dots, Y_n\}$,$\{Z_1, \dots, Z_n\}$ and $\{X_1, \dots, X_n\}$ consider
$$b_n(h(Z), \phi) = \frac{1}{n}\sum_{i=1}^{n}(X_i-h(Z_i))(Y_i-\mu_i(\theta_0,h(Z_i), \phi_i)),$$ 
$$\text{ where \ \ \ } \mu_i(\theta_0,h(Z_i), \phi_i) = \frac{\exp(\theta_0 (X_i - h(Z_i)) + \phi_i)}{1+\exp(\theta_0 (X_i - h(Z_i)) + \phi_i)}.$$ Then, 
\begin{equation}\label{mean_behavior}
    \sqrt{n}|E[b_n(h(Z), \phi) - b_n(h_0(Z),\phi_0)]| \to 0 \text{ as } n \to \infty, \text{ and }
\end{equation}
\begin{equation}\label{faster_clt}
    \sqrt{n}|[b_n(h(Z), \phi) - b_n(h_0(Z),\phi_0)] - E[b_n(h(Z), \phi) - b_n(h_0(Z),\phi_0)]| \to 0 \text{ as } n \to \infty
.\end{equation}
\end{lemma}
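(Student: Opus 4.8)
The plan is to exploit the Neyman orthogonality built into the working model \eqref{theta_working}: the first-order error incurred by plugging estimated nuisances into $b_n$ has conditional mean zero, so it is genuinely second order, and the residual fluctuations are controlled by the maximal inequality of Lemma~\ref{lemma_5_ldml}. Throughout write $\Delta h_i := h(Z_i)-h_0(Z_i)$, $\Delta\phi_i := \phi_i-\phi_{0i}$, $\mu_{0i} := \mu_i(\theta_0,h_0(Z),\phi_0)$ and $r_n := s\sqrt{\log(d\vee n)/n}$; here and below $E$ denotes expectation over $Y$ given $(X,Z)$ and the nuisance samples, and limits are in probability. By the reparametrization \eqref{phi_def}, $\mu_{0i}$ equals the true logistic mean of \eqref{logistic}, so $E[Y_i\mid X_i,Z_i]=\mu_{0i}$ and the $Y$-conditional mean of $b_n(h_0(Z),\phi_0)$ is $0$; and by Lemma~\ref{h_phi_conc}, on events of probability $1-o(1)$ we have $\max_i|\Delta h_i|=O(r_n)$ and $\max_i|\Delta\phi_i|=O(r_n)$.

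\emph{Proof of \eqref{mean_behavior}.} Taking the expectation over $Y$,
\[
E\!\left[b_n(h(Z),\phi)-b_n(h_0(Z),\phi_0)\right]=\frac1n\sum_{i=1}^n (X_i-h(Z_i))\bigl(\mu_{0i}-\mu_i(\theta_0,h(Z_i),\phi_i)\bigr).
\]
I would Taylor-expand $\mu_i(\theta_0,\cdot,\cdot)$ about $(h_0(Z_i),\phi_{0i})$: since $\partial_\phi\mu_i|_0=W_{i,i}$ and $\partial_h\mu_i|_0=-\theta_0 W_{i,i}$ (with $W$ as in \eqref{w_true}), the first-order part is $W_{i,i}(\theta_0\Delta h_i-\Delta\phi_i)$ and the remainder is $O(\Delta h_i^2+\Delta\phi_i^2)=O(r_n^2)$ uniformly in $i$. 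Writing $X_i-h(Z_i)=(X_i-h_0(Z_i))-\Delta h_i$ and absorbing all products of two ``small'' factors into the remainder, the display equals, up to an $O(r_n^2)$ term (hence $o(n^{-1/2})$ after multiplying by $\sqrt n$, by \eqref{sparsity}),
\[
\frac1n\sum_{i=1}^n (X_i-h_0(Z_i))\,W_{i,i}\,(\theta_0\Delta h_i-\Delta\phi_i).
\]
Orthogonality now enters: the definition \eqref{h_fn} of $h_0$ together with \eqref{logistic} gives $E\bigl[(X_i-h_0(Z_i))W_{i,i}\mid Z_i\bigr]=0$. Using the $T_1,T_2,T_3$ decomposition from the proof of Lemma~\ref{h_phi_conc}, $\theta_0\Delta h_i-\Delta\phi_i$ is, modulo an $O(r_n^2)$ term, a linear functional $\rho_i^T t$ of the nuisance deviation $t$ with bounded, $Z_i$-measurable coefficients $\rho_i$, where $t$ lies in a set $\mathcal{T}$ with $\|\mathcal{T}\|_1=O(r_n)$ by Assumptions~\ref{ass4} and~\ref{ass_propensity}. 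Applying Lemma~\ref{lemma_5_ldml} with $\xi_i:=(X_i-h_0(Z_i))W_{i,i}$ — whose conditional-mean-zero property makes the $\bar E$-term vanish — and $K\asymp\sqrt{\log(d\vee n)}$ bounds the sum uniformly over $\mathcal{T}$ by $O\!\bigl(\sqrt{\log(d\vee n)}\,\|\mathcal{T}\|_1/\sqrt n\bigr)=O\!\bigl(s\log(d\vee n)/n\bigr)$. Multiplying by $\sqrt n$ gives $O\!\bigl(s\log(d\vee n)/\sqrt n\bigr)=o(1)$ by \eqref{sparsity}, proving \eqref{mean_behavior}.

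\emph{Proof of \eqref{faster_clt}.} Here the centering produces a large simplification: subtracting the $Y$-conditional means cancels both $\mu_i(\theta_0,h(Z_i),\phi_i)$ and $\mu_{0i}$, so that
\[
\bigl[b_n(h(Z),\phi)-b_n(h_0(Z),\phi_0)\bigr]-E\bigl[b_n(h(Z),\phi)-b_n(h_0(Z),\phi_0)\bigr]=-\frac1n\sum_{i=1}^n \Delta h_i\,(Y_i-\mu_{0i}),
\]
and it suffices to show $n^{-1/2}\sum_i\Delta h_i(Y_i-\mu_{0i})=o_P(1)$. Were $\Delta h_i$ nonrandom this would be immediate by Chebyshev, since the conditional variance is $n^{-1}\sum_i\Delta h_i^2\mu_{0i}(1-\mu_{0i})\le\frac14\max_i\Delta h_i^2=O(r_n^2)=o(1)$. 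To handle the data-dependence of $\Delta h_i$, I would again linearize $\Delta h_i$ (as in the proof of Lemma~\ref{h_phi_conc}) into a bounded linear functional of $t$ plus an $O(r_n^2)$ remainder, and apply Lemma~\ref{lemma_5_ldml} with noise $\xi_i:=Y_i-\mu_{0i}$ (conditionally mean zero given $X_i,Z_i$, so the $\bar E$-term again vanishes), giving a uniform bound $O(s\log(d\vee n)/n)$ over $\mathcal{T}$; the $O(r_n^2)$ remainder contributes $O(r_n^2)$. Multiplying by $\sqrt n$ and using \eqref{sparsity} yields $o(1)$, which proves \eqref{faster_clt}.

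\emph{Main obstacle.} The delicate point is that $h(Z)$ and $\phi$ are functions of the same sample $(Y,X,Z)$ that enters $b_n$, so they cannot be treated as independent of the noise; this is exactly the role of the uniform-in-$t$ bound of Lemma~\ref{lemma_5_ldml}, which controls the relevant averages simultaneously over every plausible realization of the nuisance deviation. The second, conceptually central, point is that a crude bound on the first-order term in the proof of \eqref{mean_behavior} gives only $O(r_n)$, which is \emph{not} $o(n^{-1/2})$; it is the orthogonality identity $E[(X_i-h_0(Z_i))W_{i,i}\mid Z_i]=0$ encoded in the variance-weighted projection \eqref{h_fn} that forces the conditional mean to vanish and upgrades the rate to $O(s\log(d\vee n)/n)=o(n^{-1/2})$ under the sparsity condition \eqref{sparsity}.
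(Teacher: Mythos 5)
Your proposal is correct, and while it rests on the same three pillars as the paper's proof --- a second-order expansion of $b_n$ in the nuisance quantities, the orthogonality identity $E[(X_i-h_0(Z_i))W_{i,i}\mid Z_i]=0$ built into the variance-weighted projection \eqref{h_fn}, and Lemma~\ref{lemma_5_ldml} together with Assumption~\ref{ass3} to dispose of what remains --- the decomposition is genuinely different. First, you center at the $Y$-conditional mean rather than at the full mean $\bar E$; your algebra is right that this collapses the centered difference to the single term $-n^{-1}\sum_i\Delta h_i(Y_i-\mu_{0i})$, whereas the paper's centering leaves three terms $(\mathbb{E}_n-\bar E)[(\mu_{0i}-\mu_i)(h_0(Z_i)-h(Z_i))]$, $(\mathbb{E}_n-\bar E)[(Y_i-\mu_{0i})(h_0(Z_i)-h(Z_i))]$ and $(\mathbb{E}_n-\bar E)[(\mu_{0i}-\mu_i)(X_i-h_0(Z_i))]$. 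The price is that the concentration over $X$ of the first-order term $n^{-1}\sum_i(X_i-h_0(Z_i))W_{i,i}(\theta_0\Delta h_i-\Delta\phi_i)$ migrates into your proof of \eqref{mean_behavior}, where the paper instead takes a full expectation and asserts $E[\partial_t b_n(\eta_0)[\eta-\eta_0]]=0$ outright; under your convention both displays become statements in probability, but their sum controls $\sqrt n\,|b_n(h(Z),\phi)-b_n(h_0(Z),\phi_0)|$, which is all the main theorem consumes, so the relabeling is harmless. Second, you replace the paper's directional-derivative calculus with an explicit linearization of $\mu_i$ and of $\Delta h_i$ (through the $T_1,T_2,T_3$ decomposition of Lemma~\ref{h_phi_conc}), which makes visible both the orthogonality identity and the $Z_i$-measurability of the linearized coefficients. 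This buys something real: the paper's claim that the first-order term has exactly zero expectation quietly ignores that $(\Delta h_i,\Delta\phi_i)$ are functions of the same data, whereas your uniform-over-$t$ application of Lemma~\ref{lemma_5_ldml} --- zero mean for each fixed $t$, supremum over the $\ell_1$-ball of radius $O(s\sqrt{\log(d\vee n)/n})$ guaranteed by Assumptions~\ref{ass4}--\ref{ass_propensity} --- handles that dependence honestly, and you correctly single it out as the main obstacle.
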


\begin{proof}[Proof of Lemma \ref{lemma_score_diff_expectation}]
A version of the results in this lemma have been proved in \citep{logistic_dml}. We will use second order Taylor's series expansion with directional derivatives. The directional derivative of $f$ at $\tilde{x}$ in the direction of $[x - x_0]$ is defined as:
$$\partial_tf(\tilde{x})[x - x_0] = \lim_{t \to 0}\frac{f(\tilde{x} + t[x - x_0]) -f(\tilde{x})}{t}.$$
Similarly, the second directional derivative can be defined as:
$$ \partial^2_tf(\tilde{x})[x - x_0,x - x_0] = \lim_{t \to 0}\frac{\partial_tf(\tilde{x} + t[x - x_0])- \partial_tf(\tilde{x})}{t}.$$
These two derivatives are functions of $\tilde{x}$ and $[x- x_0]$ is used in the notation to specify the direction. For simplicity of notations, denote $\eta = (\eta_1,\dots,\eta_n)^T$ and $\eta_0 = (\eta_{01},\dots,\eta_{0n})^T$, where $\eta_i = (h(Z_i),\phi_i)$ and $\eta_{0i} = (h_0(Z_i),\phi_{0i})$ for $i=1,\dots,n$. Now, using Taylor's series (around $\eta_{0i}$ for $i=1,\dots, n$) we can write 
\[ b_n(\eta) - b_n(\eta_0) = \partial_tb_n(\eta_0)[\eta - \eta_0] + \partial^2_tb_n(\tilde{\eta})[\eta-\eta_0,\eta-\eta_0],\]
for some $\tilde{\eta} \in [\eta_0,\eta]$, where $\tilde{\eta_i} = (\tilde{h}(Z_i),\tilde{\phi_i})$. For $i=1,\dots,n$, we can calculate the directional derivatives of $f(\tilde{\eta}_i) = (X_i-\tilde{h}(Z_i))(Y_i-\mu_i(\theta_0,\tilde{h}(Z_i), \tilde{\phi_i}))$ at $\tilde{\eta}_i$ in the direction of $[\eta_i - \eta_{0i}]$ and obtain:
\begin{align*}
    \partial_tb_n(\tilde{\eta})[\eta - \eta_0] = -\dfrac{1}{n}\sum_{i=1}^{n}&\bigg[\Big((\phi_i - \phi_{0i})- \theta_0(h(Z_i) - h_0(Z_i))\Big)(X_i - \tilde{h}(Z_i))\mu_i^{\prime}(\theta_0,\tilde{\eta})\\
    &+ (h(Z_i) - h_0(Z_i))(Y_i - \mu_i(\theta_0,\tilde{\eta}) \bigg] \text{ and}
\end{align*}
\begin{align*}
    \partial^2_tb_n(\tilde{\eta})[\eta-\eta_0,\eta-\eta_0] &= \dfrac{1}{n}\sum_{i=1}^{n}\bigg[(X_i -\tilde{h}(Z_i))\Big((\phi_i - \phi_{0i}) - \theta_0 (h(Z_i) - h_0(Z_i))\Big)^2 \mu_i^{\prime\prime}(\theta_0,\tilde{\eta}) \\
    &+ 2\Big((\phi_i - \phi_{0i}) - \theta_0 (h(Z_i) - h_0(Z_i))\Big)((h(Z_i) - h_0(Z_i)))\mu_i^{\prime}(\theta_0,\tilde{\eta})\bigg],
\end{align*} 
where $\mu^{\prime} = \mu(1-\mu)$ and $\mu^{\prime\prime} = \mu(1-\mu)(1 - 2\mu)$ which are both bounded above by 1. By Neyman orthogonality, we have $E[\partial_tb_n(\eta_0)[\eta - \eta_0]] = 0.$ Moreover, since $\tilde{\eta} \in [\eta_0,\eta], |X_i - \tilde{h}(Z_i)| \leq |X_i - h_0(Z_i)| + |(h(Z_i) - h_0(Z_i))|,$ Jensen's inequality and boundedness of $\mu^{\prime}$ and $\mu^{\prime\prime}$ imply
\begin{align*}
    &|E[b_n(\eta) - b_n(\eta_0)]| \leq \sup_{\tilde{\eta} \in [\eta_0,\eta]}E\Big[\big|\partial^2_tb_n(\tilde{\eta})[\eta-\eta_0,\eta-\eta_0]\big|\Big] \\
    &\leq \dfrac{1}{n}\sum_{i=1}^{n}E\bigg[|X_i - h_0(Z_i)| \Big\{ (\phi_i - \phi_{0i}) - \theta_0 (h(Z_i) - h_0(Z_i))\Big\}^2  \bigg] \\
    &+ \dfrac{1}{n}\sum_{i=1}^{n}E\bigg[ |(h(Z_i) - h_0(Z_i))|\Big\{ (\phi_i - \phi_{0i}) - \theta_0 (h(Z_i) - h_0(Z_i))\Big\}^2\bigg] \\ 
    &+  E\bigg[\dfrac{2}{n}\sum_{i=1}^{n} |(h(Z_i) - h_0(Z_i))|\Big| (\phi_i - \phi_{0i}) - \theta_0 (h(Z_i) - h_0(Z_i))\Big|\bigg] \\
    &\leq \max_{i\leq n}E[|X_i - h_0(Z_i)|]E\bigg[\dfrac{1}{n}\sum_{i=1}^{n}\Big\{ (\phi_i - \phi_{0i}) - \theta_0 (h(Z_i) - h_0(Z_i))\Big\}^2  \bigg] \\
    &+ \max_{i\leq n}E[|(h(Z_i) - h_0(Z_i))|]E\bigg[\dfrac{1}{n}\sum_{i=1}^{n} \Big\{ (\phi_i - \phi_{0i}) - \theta_0 (h(Z_i) - h_0(Z_i))\Big\}^2\bigg]\\
    &+ 2E\bigg[\bigg(\dfrac{1}{n}\sum_{i=1}^{n} \{(h(Z_i) - h_0(Z_i))\}^2\bigg)^{1/2}\bigg(\dfrac{1}{n}\sum_{i=1}^{n}\Big\{ (\phi_i - \phi_{0i}) - \theta_0 (h(Z_i) - h_0(Z_i))\Big\}^2\bigg)^{1/2}\bigg] \\
    &\leq_p K(1+|\theta_0|)\frac{s\log(d \vee n)}{n}, \text{ for some } K>0.
\end{align*}

Here, we have used the following inequalities based on the regularity Assumption~\ref{ass4} and Lemma~\ref{h_phi_conc}:
\begin{align*} 
&\dfrac{1}{n}\sum_{i=1}^{n}\|\eta_i - \eta_{0i}\|_2^2 = O_p\Big(\frac{s^2\log(d \vee n)}{n}\Big),\\  &|h(Z_i)- h_0(Z_i)| = O_p(s\sqrt{\log(d \vee n)/n}) \text{, and } E[|X_i - h_0(Z_i)|] \leq C.
\end{align*}
Therefore, due to Assumption~\ref{ass3}
\[\sqrt{n}|E[b_n(h(Z), \phi) - b_n(h_0(Z),\phi_0)]| \leq K\dfrac{s^2\log(d \vee n)}{\sqrt{n}} \to 0,\]
which proves \eqref{mean_behavior}.
 To prove \eqref{faster_clt}, which is a result faster than CLT, we will use Lemma~\ref{lemma_5_ldml}. First, we will rewrite the difference between $b_n(h(Z), \phi)$ and $b_n(h_0(Z),\phi_0)$ in another way. For simplicity, recall the notations $\mathbb{E}_n(r_i) = \frac{1}{n}\sum_{i=1}^{n}r_i$ and $\bar{E}(r_i) =\frac{1}{n}\sum_{i=1}^{n}E[r_i] $. Therefore,
\begin{align}
    &|[b_n(h(Z), \phi) - b_n(h_0(Z),\phi_0)] - E[b_n(h(Z), \phi) - b_n(h_0(Z),\phi_0)]| \nonumber \\
    &= \big|(\mathbb{E}_n - \bar{E})\Big((X_i - h(Z_i))(Y_i - \mu_i) - (X_i - h_0(Z_i))(Y_i - \mu_{0i})\Big)\big|\nonumber \\
    &= |(\mathbb{E}_n - \bar{E})\Big( (\mu_{0i}- \mu_i)(h_0(Z_i) - h(Z_i))\Big)|\\
    &+ |(\mathbb{E}_n - \bar{E})\Big((Y_i - \mu_{0i})(h_0(Z_i) - h(Z_i))) \Big)|\\
    &+ |(\mathbb{E}_n - \bar{E})\Big((\mu_{0i} - \mu_i)(X_i - h_0(Z_i))) \Big)|.
\end{align}
Due to Lipschitz property of $\mu$, the first term becoms similar to the one in the first part of this lemma. The second and third terms will be controlled based on Lemma~\ref{lemma_5_ldml}. 
\end{proof}

\subsection{Proof of Theorem}\label{proofsec}

\begin{proof}[Proof of Theorem 1]
Consider $\hat{\theta}_0$ be the MLE based on the conditional  likelihood when the true nuisance parameters $\gamma_0$ and $\phi_0$ are known. 
Take $\mathcal{E} = \mathcal{E}_1 \cap \mathcal{E}_2$ where $\mathcal{E}_1$ and $\mathcal{E}_2$ are sets from Assumption~\ref{ass4}. We can show that $P(\mathcal{E}) \geq 1 - (d \vee n)^{-c_3}$, where $c_3 = \max \{c_1,c_2\}$. We proceed with proving the theorem on the set $\mathcal{E}$ to get the desired result.
On set $\mathcal{E}$, consider the following set which corresponds to the cases when nuisance parameters are around their oracle values:
$$B := \bigg\{ (\tilde{\theta},\beta,\gamma) : \|h(Z) - h_0(Z)\|_1 \leq Ms_1\sqrt{\frac{\log (d \vee n)}{n}}, \|\phi - \phi_0\|_1 \leq Ms_2 \sqrt{\frac{\log (d \vee n)}{n}}\bigg\}.$$
Based on Assumption~\ref{ass4} and property of $\mathcal{E}$, $\Pi(B^c\mid X,Y,Z) \leq (d \vee n)^{-c_3}.$ Moreover, for any measurable set $A$ in $\mathbb{R}$, 
\begin{align}\label{prob_diff}
  & |\Pi(\theta \in A\mid X,Y,Z) - \Pi(\theta \in A\mid X,Y,Z,B)| \nonumber\\
  &= |\Pi(\theta \in A\mid X,Y,Z,B)\Pi(B\mid X,Y,Z) \nonumber\\ &+\Pi(\theta \in A\mid X,Y,Z,B^c)\Pi(B^c\mid X,Y,Z)  - \Pi(\theta \in A\mid X,Y,Z,B)| \nonumber\\
  &\leq 2\Pi(B^c\mid X,Y,Z) \leq  (d \vee n)^{-c_3}.
\end{align}
This shows that the measure of any set given marginal posterior of $\theta$ is close to the measure of that set under conditional posterior of $\theta$ when $\gamma$ and $\phi$ are from high probability set $B$. In the rest of the proof, we establish normality of the conditional posterior for the nuisance parameters in set $B$.

Following the notation of log likelihood from \eqref{log-l-notation}, we can rewrite the likelihood in the conditional posterior as:
\begin{align}
    &\underbrace{p_n(Y\mid\theta, h(Z), \phi)}_{\exp(\lest)} =  \underbrace{p_n(Y\mid\hat{\theta}_0,h_0(Z),\phi_0)}_{\exp(\lmle)} \exp(\lest - \lmle) \label{mle_split}\\
    &= p_n(Y\mid\hat{\theta}_0,h_0(Z),\phi_0) \exp(\lest - \ltrue) \times \nonumber
    \exp(\ltrue - \lmle) \nonumber\\
    &= p_n(Y\mid\hat{\theta}_0,h_0(Z),\phi_0)\exp(\lest - \ltrue)\exp\bigg[\dfrac{-(\theta - \mle)^2(1+R_n)}{2\sigma_n^2} \bigg]\label{taylor_split},
\end{align}
where 
\[R_n = R_n(\theta,h_0(Z),\phi_0) = \sigma_n^2(\text{L}_n^{''}(\theta^*_n \mid h_0(Z),\phi_0) - \text{L}_n^{''}(\mle \mid h_0(Z),\phi_0)),\] $\text{ with  } \theta^*_n \in (\theta,\mle) \text{ and } 
(\sigma_n^2)^{-1} = \text{L}_n^{''}(\mle \mid h_0(Z), \phi_0)$.
Here the derivatives are with respect to the parameter of interest $\theta$. We can split the marginal into two parts as:
\begin{align}\label{I_split}
    \int_{\Theta}\pi(\theta)\exp(\text{L}_n(Y\mid X,Z,\theta, h(Z), \phi) d\theta &= \nonumber\\
    \underbrace{\int_{\Theta - N_0(\delta)}\pi(\theta)\exp(\lest) d\theta}_{I_1}
    &+\underbrace{\int_{N_0(\delta)}\pi(\theta)\exp(\lest) d\theta}_{I_2}.
\end{align}
The choice of $\delta$ will be specified towards the end. Whenever needed, we will keep updating the value of $\delta$. In the end, it will be chosen as minimum of a few $\delta$ values.

\textbf{Behavior of $I_2$:} For $I_2$, we will use the expression in \eqref{taylor_split} and rewrite the integral as:
\begin{align*}
    I_2 =  p_n(Y\mid\hat{\theta}_0,h_0(Z),\phi_0)\int_{N_0(\delta)}&\pi(\theta)\exp(\lest - \ltrue) \times\\
    &\exp\bigg[\dfrac{-(\theta - \mle)^2(1+R_n)}{2\sigma_n^2} \bigg] d\theta.
\end{align*}
By the continuity of the prior $\pi(\theta)$ at $\theta_0$ in Assumption~\ref{ass5}, we have for every $\epsilon >0$, there exists $\delta >0$ such that

\begin{equation}
    |\pi(\theta) - \pi(\theta_0)| < \epsilon \pi(\theta_0) \text{, if } \theta \in N_0(\delta)
.\end{equation}
Then, we have
\begin{equation}\label{I_2_to_I_3}
    (1-\epsilon)I_3 < \{\pi(\theta_0) p_n(Y\mid\hat{\theta}_0,h_0(Z),\phi_0) \}^{-1}I_2 < (1+\epsilon)I_3, \text{ where }
\end{equation}

\begin{equation}\label{I_3}
    I_3 = \int_{N_0(\delta)}\exp(\lest - \ltrue)\exp\bigg[\dfrac{-(\theta - \mle)^2(1+R_n)}{2\sigma_n^2} \bigg] d\theta
.\end{equation}

It can be shown that \citep{walker} or every $\epsilon >0$, there exists $\delta >0$ such that
\begin{equation}\label{R_n}
    \lim_{n \to \infty} P\bigg\{ \sup_{\theta \in N_0(\delta)} |R_n(\theta)| <\epsilon \bigg\} = 1
.\end{equation}

Let's try to understand the behavior of the following integral (for $0<\epsilon <1$):
\begin{equation}\label{int_epsilon}
    I_{\epsilon} = \int_{N_0(\delta)}\exp(\lest - \ltrue)\exp\bigg[\dfrac{-(\theta - \mle)^2(1\pm \epsilon)}{2\sigma_n^2} \bigg] d\theta
.\end{equation}
From \citet{walker}, we already know how to control the $\mle$ related term. Because of high dimensional setting, we have new term to deal with. Note that we can express the difference of log likelihood as:
\begin{align*}
    \lest - \ltrue &= \underbrace{\{ (\lest - \text{L}_n(\theta_0 \mid h(Z), \phi) - ( \ltrue - \lalltrue)\}}_{\text{depends on $\theta$}} \\
    &+ \underbrace{\{ \text{L}_n(\theta_0 \mid h(Z), \phi) - \lalltrue\}}_{\text{independent of } \theta}.
\end{align*}
For the terms that depend on $\theta$, we can use Taylor's series approximation to further simplify them.

For every small $\epsilon >0,$ we can find $\delta>0$ using Lemma~\ref{lemma_hessians} and Lemma~\ref{lemma_scores} so that for $\theta$ in $N_0(\delta)$:
\begin{align}\label{expansion_any_nuisance}
    \lest - \text{L}_n(\theta_0 \mid h(Z), \phi) &= (\theta - \theta_0)(X - h(Z))^T(Y - \mu(\theta_0, h(Z), \phi)) \nonumber\\
    &- \frac{1}{2}(\theta - \theta_0)^2(X - h(Z))^T\Sigma(\theta_1, h(Z), \phi)(X - h(Z)) \nonumber\\
    &= (\theta - \theta_0)(X - h(Z))^T(Y - \mu(\theta_0, h(Z), \phi))\\
    &- \dfrac{(1\pm \epsilon)}{2}(\theta - \theta_0)^2(X - h(Z))^T\Sigma(\theta_0, h(Z), \phi)(X - h(Z)) \nonumber.
\end{align}

\begin{align}\label{expansion_true_nuisance}
    \ltrue - \lalltrue &= (\theta - \theta_0)(X - h_0(Z))^T(Y - \mu(\theta_0, h_0(Z), \phi_0))\nonumber\\
    &- \frac{1}{2}(\theta - \theta_0)^2(X - h_0(Z))^T\Sigma(\theta_2, h_0(Z), \phi_0)(X - h_0(Z))\nonumber\\
    &= (\theta - \theta_0)(X - h_0(Z))^T(Y - \mu(\theta_0, h_0(Z), \phi_0))\\
    &- \frac{(1\pm \epsilon)}{2}(\theta - \theta_0)^2(X - h_0(Z))^T\Sigma(\theta_0, h_0(Z), \phi_0)(X - h_0(Z)) \nonumber.
\end{align}

In the above expressions, $\theta_1$ and $\theta_2$ belong to $(\theta_0, \theta)$. Combining these two we get:
\[    (\text{L}_n(\theta \mid h(Z), \phi)- \text{L}_n(\theta_0 \mid h(Z), \phi)) - (\text{L}_n(\theta \mid h_0(Z), \phi_0) - \text{L}_n(\theta_0 \mid h_0(Z), \phi_0)) 
= (\theta - \theta_0)B_n - (\theta - \theta_0)^2d_n,\]
where 
\[B_n =  (X - h(Z))^T(Y - \mu(\theta_0, h(Z), \phi)) - (X - h_0(Z))^T(Y - \mu(\theta_0, h_0(Z), \phi_0)) \text{ and }\]
\[d_n = \frac{(1 \pm \epsilon)(\pm \epsilon_n)}{2} (X - h_0(Z))^T\Sigma(\theta_0, h_0(Z), \phi_0)(X - h_0(Z)) =: (1\pm \epsilon)d_{0n}.\]
For simplicity of notations, write $k_n = (1\pm \epsilon)/2\sigma_n^2$. Consequently, the integrand of $I_{\epsilon}$ becomes:

\begin{align*}
    &\exp(\lest - \ltrue)\exp\bigg[\dfrac{-(\theta - \mle)^2(1\pm \epsilon)}{2\sigma_n^2} \bigg] \\
    &= \exp(\text{L}_n(\theta_0 \mid h(Z), \phi) - \text{L}_n(\theta_0 \mid h_0(Z), \phi_0)) \times
    \exp((\theta - \theta_0)B_n - (\theta - \theta_0)^2d_n - (\theta - \mle)^2k_n).
\end{align*}
By change of variables to $\theta = \theta - \theta_0$ and completing the squares w.r.t. $\theta$, we can obtain the expression of $I_{\epsilon}$ as:
\begin{align*}
    I_{\epsilon} &= T(\theta_0,\mle, \epsilon, \epsilon_n) \int_{-\delta}^{\delta} \exp\bigg\{ \dfrac{-1}{2p_n^2}(\theta - \hat{\mu}_n)^2 \bigg\} d\theta \\
    &= T(\theta_0,\mle, \epsilon, \epsilon_n) (2\pi)^{\frac{1}{2}}p_n\big[\Phi(p_n^{-1}(\delta - \hat{\mu}_n)) -  \Phi(p_n^{-1}(-\delta - \hat{\mu}_n))\big],
\end{align*}
where
\[ T(\theta_0,\mle, \epsilon, \epsilon_n) = \exp\Bigg(\text{L}_n(\theta_0 \mid h(Z), \phi) - \text{L}_n(\theta_0 \mid h_0(Z), \phi_0) - (\theta_0 - \mle)^2k_n + \dfrac{(B_n - 2(\theta_0 - \mle)k_n)^2}{4(d_n + k_n)}\Bigg),\]
\[ \hat{\mu}_n = \frac{B_n - 2(\theta_0 - \mle)k_n}{2(d_n + k_n)} \text{, and } p_n^{-2} = 2(d_n + k_n).\]

We make the following claims:
\begin{equation}\label{new_mean_var}
    \hat{\mu}_n \prob 0 \text{ and } \frac{p_n^2}{\sigma_n^2} \prob 1.\end{equation}
Based on \eqref{new_mean_var}, we can observe that the new inverse of the variance term $p_n^{-2}$ increases at the same rate as $\sigma_n^{-2}$ which is of order $n$. The new mean ($\hat{\mu}_n$) converges in probability to zero, therefore, as long as $P(|\hat{\mu}_n|<\delta/2) =1 $, we can conclude that  \[ \big[\Phi(p_n^{-1}(\delta - \hat{\mu}_n)) -  \Phi(p_n^{-1}(-\delta - \hat{\mu}_n))\big] \prob 1.\]
To prove \eqref{new_mean_var}, note that:
\begin{align*}
    \dfrac{p_n^{-2}}{\sigma_n^{-2}} &= \dfrac{2(d_n +k_n)}{\sigma_n^{-2}} = \dfrac{(1 \pm \epsilon)(\pm \epsilon_n)(X - h_0(Z))^T\Sigma(\theta_0, h_0(Z), \phi_0)(X - h_0(Z)) + \sigma_n^{-2}(1\pm \epsilon)}{\sigma_n^{-2}(1\pm \epsilon)} \\
&=1+ \dfrac{(\pm \epsilon_n)(X - h_0(Z))^T\Sigma(\theta_0, h_0(Z), \phi_0)(X - h_0(Z))}{(X - h_0(Z))^T\Sigma(\mle, h_0(Z), \phi_0)(X - h_0(Z))} \prob 1,
\end{align*}
where $\epsilon_n = O_p\Big(Ks\sqrt{\frac{\log(d \vee n)}{n}}\Big) \bigg).$
For convergence of $\hat{\mu}_n$, rewrite it in the following way:
\begin{equation}\label{new_mean_split}
    \hat{\mu}_n = \dfrac{\frac{1}{n}(B_n - E[B_n])}{2(d_n + k_n)/n} + (\mle - \theta_0)\bigg(\dfrac{k_n}{d_n + k_n}\bigg) + \dfrac{E[B_n/n]}{2(d_n + k_n)/n}
.\end{equation}
By weak law of large numbers, consistency of the MLE, and $(d_n + k_n)/n \to 1$, the first two terms converge in probability to 0. By Lemma~\ref{lemma_score_diff_expectation}, $E[B_n/n] \prob 0.$


We will simplify the term $T(\theta_0,\mle, \epsilon, \epsilon_n)$ and split it into terms that are independent of $\epsilon$, terms which converge in probability to one and terms that are dependent on $\epsilon$ but stochastically bounded. From Lemma~\ref{lemma_score_diff_expectation}, recall that 
\begin{align*}
\frac{1}{n}B_n &= b_n(h(Z), \phi) - b_n(h_0(Z),\phi_0)\\
&=\frac{1}{n}\sum_{i=1}^{n}\bigg[(X_i-h(Z_i))(Y_i-\mu_i(\theta_0,h(Z), \phi)) - (X_i-h_0(Z_i))(Y_i-\mu_i(\theta_0,h_0(Z),\phi_0)) \bigg].
\end{align*}
We will use the following facts which are consequences of \eqref{new_mean_var}, Central Limit Theorem and Lemma~\ref{lemma_score_diff_expectation}:
\begin{align}\label{lemma7_clt_more}
    &n^{\frac{1}{2}}(\theta_0 - \mle) = O_p(1),\nonumber\\
    &n^{\frac{1}{2}}(E[B_n]/n) \prob 0,
    n^{\frac{1}{2}}(B_n/n - E[B_n]/n) \prob 0,\\
    &(d_n + k_n)/n \to 1, (d_n+k_n)/k_n \to 1 \nonumber.
\end{align}

Based on the fact that $O_p(1)o_p(1) = o_p(1)$ and adding and subtracting the $E[B_n]$ term, we can write
\begin{align*}
    &\log \Big[ T(\theta_0,\mle, \epsilon, \epsilon_n)\Big] = \underbrace{\text{L}_n(\theta_0 \mid h(Z), \phi) - \text{L}_n(\theta_0 \mid h_0(Z), \phi_0)}_{T_n(\theta_0)}\\ &+ \dfrac{((B_n - E[B_n]) + E[B_n] - 2(\theta_0 - \mle)k_n)^2}{4(d_n + k_n)}- (\theta_0 - \mle)^2k_n \\
    &= T_n(\theta_0) + \dfrac{(B_n - E[B_n])E[B_n]}{2(d_n+k_n)} - \dfrac{E[B_n](\theta_0 - \mle)k_n}{d_n+k_n} + \dfrac{(E[B_n])^2}{4(d_n+k_n)}\\
    &+ (k_n/n)(\sqrt{n}(\theta_0 - \mle))^2\bigg( \dfrac{k_n}{d_n+k_n} -1\bigg)+ \dfrac{(B_n - E[B_n])^2}{4(d_n + k_n)} - \dfrac{(B_n - E[B_n])(\theta_0 - \mle)k_n}{d_n + k_n} \\
    &= T_n(\theta_0) - \\ &\underbrace{\dfrac{n^{\frac{1}{2}}(B_n/n - E[B_n]/n)(n^{\frac{1}{2}}E[B_n]/n)}{2(d_n+k_n)/n} - \dfrac{n^{\frac{1}{2}}(E[B_n]/n)n^{\frac{1}{2}}(\theta_0 - \mle)}{(d_n+k_n)/k_n} + \dfrac{(n^{\frac{1}{2}}(E[B_n]/n))^2}{4(d_n+k_n)/n}}_{o_P(1)}\\
    &+ \underbrace{\dfrac{k_n}{n}(\sqrt{n}(\theta_0 - \mle))^2\bigg( \dfrac{k_n}{d_n+k_n} -1\bigg)}_{o_P(1)}\\
    &+ \underbrace{\dfrac{(n^{\frac{1}{2}}(B_n/n - E[B_n/n]))^2}{4(d_n + k_n)/n} -  \dfrac{n^{\frac{1}{2}}(B_n/n - E[B_n]/n)n^{\frac{1}{2}}(\theta_0 - \mle)k_n}{d_n + k_n}}_{o_P(1)}\\
    &=T_n(\theta_0) + o_P(1).
\end{align*}

Therefore, we can write $T(\theta_0, \mle,\epsilon, \epsilon_n)$ as

\begin{equation}
    T(\theta_0, \mle,\epsilon, \epsilon_n) = \exp(T_n(\theta_0))\exp(r_n),
\end{equation}
where $r_n \prob 0$ and so $\exp(r_n) \prob 1$.

Going back to the expression of $I_{\epsilon}$, we obtain that

\begin{align}\label{I_epsilon_final}
\begin{split}
    \exp[-T_n(\theta_0)]I_{\epsilon} &= (2\pi)^{\frac{1}{2}}\sigma_n(1\pm \epsilon)^{\frac{-1}{2}}\times \\
    &\underbrace{\dfrac{p_n}{\sigma_n(1\pm \epsilon)^{\frac{-1}{2}}}\exp(r_n) \times
    \big[\Phi(p_n^{-1}(\delta - \hat{\mu}_n)) -  \Phi(p_n^{-1}(-\delta - \hat{\mu}_n))\big]}_{\prob 1} \\
    &= (2\pi)^{\frac{1}{2}}\sigma_n(1\pm \epsilon)^{\frac{-1}{2}}(1 + o_P(1)).
    \end{split}
\end{align}
Therefore, from \eqref{I_3}, \eqref{R_n} and \eqref{I_epsilon_final}, we can observe that
\begin{equation}\label{I_3_final}
    \lim_{n \to \infty}P\Bigg\{  (2\pi)^{\frac{1}{2}}\sigma_n(1+ \epsilon)^{\frac{-1}{2}} < \exp[-T_n(\theta_0)]I_3 <  (2\pi)^{\frac{1}{2}}\sigma_n(1- \epsilon)^{\frac{-1}{2}}  \Bigg\} = 1.
\end{equation}

We can choose $\delta>0$ such that \eqref{I_2_to_I_3},  \eqref{expansion_any_nuisance}, \eqref{expansion_true_nuisance}, and \eqref{I_3_final} hold together for arbitrarily small $\epsilon>0$ by taking minimum of these four $\delta$ values and obtain
\begin{align}\label{I_3_to_I_2_final}
    \lim_{n \to \infty}P\Bigg\{ (2\pi)^{\frac{1}{2}}\sigma_n(1- \eta)<   &\exp[-T_n(\theta_0)]\{\pi(\theta_0) p_n(Y\mid\hat{\theta}_0,h_0(Z),\phi_0)\}^{-1} I_2\\
    &< (2\pi)^{\frac{1}{2}}\sigma_n(1+ \eta) \Bigg\} = 1 \nonumber,
\end{align}
for some arbitrarily small $\eta>0.$ This is equivalent to
\begin{equation}\label{I_2_behavior}
    \exp[-T_n(\theta_0)]\{p_n(Y\mid\hat{\theta}_0,h_0(Z),\phi_0) \sigma_n\}^{-1} I_2 \prob (2\pi)^{\frac{1}{2}}\pi(\theta_0)
.\end{equation}
We revisit the scale term $\exp[-T_n(\theta_0, \mle)]$.
\begin{align}\label{T_n_rate}
  n^{-1}(-T_n(\theta_0)) &= n^{-1}\underbrace{(\text{L}_n(\theta_0 \mid h_0(Z), \phi_0) - \text{L}_n(\theta_0 \mid h(Z), \phi))}_{>0 \text{ with high prob}} 
  = O_p\bigg( \sqrt{\dfrac{\log(d \vee n)}{n}} \bigg).
\end{align}
\textbf{Behavior of $I_1$:}

For $I_1$, we will use the expression in \eqref{mle_split} and rewrite the integral as:

\begin{align}
    I_1 &= p_n(Y\mid\mle, h_0(Z), \phi_0)\exp(\lalltrue - \lmle) \times \\
    & \int_{\Theta - N_0(\delta)} \pi(\theta) \exp(\lest - \lalltrue) d\theta.\nonumber
\end{align}
    
From Lemma~\ref{lemma_ldiff_complement}, we know that $\forall \delta >0$, the integrand above is less than $\pi(\theta)\exp(-nk(\delta))$ for some $k(\delta)>0$.
Therefore, 
\[ I_1 \leq p_n(Y\mid\mle, h_0(Z), \phi_0)\exp(\underbrace{\lalltrue - \lmle}_{\leq 0 \text{ by the definition of} \mle})\exp(-nk(\delta)).\] 
Using Lemma~\ref{lemma_walker} and the fact that $k(\delta)>0$ , we obtain
\begin{align*}
    \exp[-T_n(\theta_0)]\sigma_n^{-1}\exp-nk(\delta)) &= \{-n^{-1}\text{L}_n^{''}(\mle)\}^{1/2} \times \dfrac{n^{1/2}}{\exp\bigg[n\Big(k(\delta) - n^{-1}(-T_n(\theta_0))\Big)\bigg]}\\
    &= o_P(1).
\end{align*} 
Consequently, we get
\begin{equation}\label{I_1_behavior}
    \exp[-T_n(\theta_0)]\{p_n(Y\mid\mle, h_0(Z), \phi_0) \sigma_n\}^{-1}I_1 \prob 0
.\end{equation}
Combining \eqref{I_2_behavior} and \eqref{I_1_behavior}, we obtain
\begin{align}\label{marginal_behavior}
        &\exp[-T_n(\theta_0)]\{p_n(Y\mid\hat{\theta}_0,h_0(Z),\phi_0) \sigma_n\}^{-1} I \prob (2\pi)^{\frac{1}{2}}\pi(\theta_0),\\
\end{align}
where marginal $I$ is
\begin{align*}
\int_{\Theta}\pi(\theta)\exp(\text{L}_n(Y\mid X,Z,\theta, h(Z), \phi) d\theta.
\end{align*}

To complete the proof, we have to deal with the following integral
\[ I^{(a,b)} = \int_{\hat{\theta}_0 + a\sigma_n}^{\hat{\theta}_0 + b\sigma_n} \pi(\theta)\exp(\text{L}_n(Y\mid X,Z,\theta, h(Z), \phi) d\theta.\]
This is similar to the integral $I_2$ \eqref{I_split}, except the limits of integral are $(\hat{\theta}_0 + b\sigma_n, \hat{\theta}_0 + a\sigma_n)$ instead of $(\theta_0-\delta, \theta_0 + \delta)$. Consistency of Maximum Likelihood Estimate implies
\[ \lim_{n \to \infty} P\{(\hat{\theta}_0 + a\sigma_n, \hat{\theta}_0 + b\sigma_n) \subset (\theta_0-\delta, \theta_0 + \delta) \} = 1.\]
Instead of \eqref{I_epsilon_final}, we will have following:

\begin{align*}\label{Iab_epsilon_final}
    \exp[-T_n(\theta_0)]I_{\epsilon}^{(a,b)} &= (2\pi)^{\frac{1}{2}}\sigma_n(1\pm \epsilon)^{\frac{-1}{2}} \dfrac{p_n}{\sigma_n(1\pm \epsilon)^{\frac{-1}{2}}}\exp(r_n)\times\\
    &
    \big[\Phi(p_n^{-1}(\mle - \theta_0  - \hat{\mu}_n + a\sigma_n)) -  \Phi(p_n^{-1}(\mle - \theta_0 - \hat{\mu}_n +b\sigma_n))\big].
\end{align*}

From the expression of $\hat{\mu}_n$ \eqref{new_mean_split}, we have:
\begin{align*}
    p_n^{-1}(\mle - \theta_0 - \hat{\mu}_n +b\sigma_n) &= \Big(\dfrac{p_n^{-1}}{\sqrt{n}}\Big)\sqrt{n}(\mle - \theta_0)\bigg( 1- \dfrac{k_n}{d_n+k_n} \bigg) \\
    &- \Big(\dfrac{p_n^{-1}}{\sqrt{n}}\Big) \dfrac{\sqrt{n}(B_n/n - E[B_n]/n)}{2(d_n + k_n)/n}\\
    &- \Big(\dfrac{p_n^{-1}}{\sqrt{n}}\Big)\dfrac{\sqrt{n}E[B_n/n]}{2(d_n + k_n)/n} + b\Big(\dfrac{\sigma_n}{p_n}\Big)\\
    &\prob b(1\pm \epsilon)^{\frac{1}{2}}.
\end{align*}
Therefore, we have
\begin{equation*}
    \big[\Phi(p_n^{-1}(\mle - \theta_0  - \hat{\mu}_n + a\sigma_n)) -  \Phi(p_n^{-1}(\mle - \theta_0 - \hat{\mu}_n +b\sigma_n))\big] \prob [\Phi(a(1\pm \epsilon)^{\frac{1}{2}}) - \Phi(b(1\pm \epsilon)^{\frac{1}{2}})].
\end{equation*}

Using continuity of $\Phi$, we obtain following instead of \eqref{I_3_to_I_2_final}:
\begin{align}
    \lim_{n \to \infty}P\Bigg\{ (2\pi)^{\frac{1}{2}}\sigma_n(1- \eta)  &<\dfrac{\exp[-T_n(\theta_0)]\{\pi(\theta_0) p_n(Y\mid\hat{\theta}_0,h_0(Z),\phi_0)\}^{-1} }{(\Phi(a) - \Phi(b))}I^{(a,b)} \\
    &< (2\pi)^{\frac{1}{2}}\sigma_n(1+ \eta)  \Bigg\} = 1 \nonumber,
\end{align}
for some arbitrarily small $\eta>0.$ This is equivalent to 
\begin{equation}\label{I_ab_behavior}
     \exp[-T_n(\theta_0)]\{p_n(Y\mid\hat{\theta}_0,h_0(Z),\phi_0) \sigma_n\}^{-1} I^{(a,b)} \prob (2\pi)^{\frac{1}{2}}\pi(\theta_0)(\Phi(a) - \Phi(b)).
\end{equation}

Combining \eqref{marginal_behavior} and \eqref{I_ab_behavior} on set $B$, we obtain 

\begin{equation}\label{result_gamma_phi}
    P \Bigg\{\bigg|\int_{\hat{\theta}_0 + a\sigma_n}^{\hat{\theta}_0 + b\sigma_n} f(\theta \mid Y,X,Z,h(Z), \phi) - \frac{1}{\sqrt{2\pi}} \int_{a}^{b} \exp\bigg(-\frac{1}{2}z^2 dz \bigg) \bigg| \geq \epsilon \Bigg| Y,X,Z\Bigg\} \xrightarrow{n \to \infty} 0,
\end{equation}

where 
\begin{equation*}
    f(\theta \mid Y,X,Z, h(Z), \phi) = \frac{\pi(\theta)\exp\{L_n(\theta\mid h(Z), \phi)\}}{\int_{\Theta}\pi(\theta)\exp(L_n(\theta \mid h(Z), \phi)) d\theta}.
\end{equation*}

Based on \eqref{result_gamma_phi} and the arguments in \eqref{prob_diff}, we can obtain the final result
\begin{equation*}
     P \Bigg\{\bigg|\int_{\hat{\theta}_0 + a\sigma_n}^{\hat{\theta}_0 + b\sigma_n} f(\theta \mid Y,X,Z) - \frac{1}{\sqrt{2\pi}} \int_{a}^{b} \exp\bigg(-\frac{1}{2}z^2 dz \bigg) \bigg| \geq \epsilon \Bigg| Y,X,Z\Bigg\} \xrightarrow{n \to \infty} 0.
\end{equation*}
\end{proof}

\begin{proof}[Proof of Corollary 1]
Consider the set $A = (-\infty, \hat{q}_{\alpha/2}^{CB}]$. According to Theorem~\ref{theorem1}, with $a = -\infty, b= (\hat{q}_{\alpha/2}^{CB} - \hat{\theta}_0)/\sigma_n$, we obtain that  :
\begin{equation*}
    \Bigg| \alpha/2 - \Phi\bigg(\frac{\hat{q}_{\alpha/2}^{CB} - \hat{\theta}_0}{\sigma_n}\bigg) \Bigg| = o_P(1).
\end{equation*}
The above expression implies that 
\begin{equation*}
    \hat{q}_{\alpha/2}^{CB} = \hat{\theta}_0 + \sigma_n z_{\alpha/2} + o_P(1).
\end{equation*}
Let $\hat{q}_{\alpha/2}$ be the frequentist lower bound for the confidence interval, then
\begin{equation*}
    \hat{q}_{\alpha/2} = \hat{\theta}_0 + \sigma_n z_{\alpha/2}.
\end{equation*}
Therefore, we have
\begin{equation}\label{alpha/2}
    |\hat{q}_{\alpha/2}^{CB} - \hat{q}_{\alpha/2}| = o_P(1).
\end{equation}

Similarly, by considering the set $B = [\hat{q}_{1 - \alpha/2}^{CB},\infty)$, we can obtain

\begin{equation}\label{1-alpha/2}
    |\hat{q}_{1 - \alpha/2}^{CB} - \hat{q}_{1 - \alpha/2}| = o_P(1),
\end{equation}
where $\hat{q}_{1 - \alpha/2} = \hat{\theta}_0 + \sigma_n z_{1 - \alpha/2}$.
Using \eqref{alpha/2}, \eqref{1-alpha/2}, and the asymptotic normality of the maximum likelihood estimator $\hat{\theta}_0$, we obtain that for every $\epsilon, \delta$ positive, there exists, $N_{\epsilon,\delta}$ such that for all $n > N_{\epsilon,\delta}$,
\begin{align*}
    P(|P(\theta_0 \in (\hat{q}_{\alpha/2}^{CB},\hat{q}_{1 - \alpha/2}^{CB})\mid Y,X,Z) - (1-\alpha)| < \delta ) > 1 - \epsilon,
\end{align*}

or equivalently,
\begin{equation*}
    \Bigg| P \Big[\theta_0 \in \big(\hat{q}_{\alpha/2}^{CB},\hat{q}_{1 -\alpha/2}^{CB}\big) \mid Y, X, Z\Big] - (1-\alpha) \Bigg| \prob 0.
\end{equation*}
\end{proof}


\end{document}